\theoremstyle{plain}
\newtheorem{theorem}{Theorem}[section]
\newtheorem{lemma}[theorem]{Lemma}
\newtheorem{corollary}[theorem]{Corollary}
\theoremstyle{plain}
\newtheorem{definition}{Definition}[section] 
\newtheorem{example}[definition]{Example}
\newtheorem{assumption}[definition]{Assumption}
\DeclareMathAlphabet{\mathpzc}{OT1}{pzc}{m}{it}
\newcommand{\agind}[1][i]{_{#1}}
\newcommand{\ironed}{\bar}
\newcommand{\constrained}{\hat}
\newcommand{\optimized}[1]{#1\opt}
\newcommand{\differentiated}[1]{#1'}
\newcommand{\tagged}[2]{{#2}^{#1}}
\newcommand{\starred}[1]{#1^\star}
\newcommand{\primedarg}[1]{#1\primed}
\newcommand{\noaccents}[1]{#1}
\newcommand{\composed}[3]{#1{#2{#3}}}
\newcommand{\newagentvar}[3][\noaccents]{%
\expandafter\newcommand\expandafter{\csname #2\endcsname}{#1{#3}}%
\expandafter\newcommand\expandafter{\csname #2s\endcsname}{#1{\boldsymbol{#3}}}%
\expandafter\newcommand\expandafter{\csname #2smi\endcsname}[1][i]{#1{\boldsymbol{#3}}_{-##1}}%
\expandafter\newcommand\expandafter{\csname #2i\endcsname}[1][i]{#1{#3}\agind[##1]}%
\expandafter\newcommand\expandafter{\csname #2ith\endcsname}[1][i]{#1{#3}_{(##1)}}%
}
\newcommand{\newitemvar}[3][\noaccents]{%
\expandafter\newcommand\expandafter{\csname #2\endcsname}{#1{#3}}%
\expandafter\newcommand\expandafter{\csname #2s\endcsname}{#1{\boldsymbol{#3}}}%
\expandafter\newcommand\expandafter{\csname #2smj\endcsname}[1][j]{#1{\boldsymbol{#3}}_{-##1}}%
\expandafter\newcommand\expandafter{\csname #2j\endcsname}[1][j]{#1{#3}_{##1}}%
\expandafter\newcommand\expandafter{\csname #2jth\endcsname}[1][j]{#1{#3}_{(##1)}}%
}
\newcommand{\monoq}{\quant_m}
\newcommand{\exposted}[1]{#1^{\text{\it EP}}}
\composed{\exposted}{\constrained}]{excalloc}{\qalloc}
\newcommand{\served}[1]{#1^1}
\newcommand{\nonserved}[1]{#1^0}
\newcommand{\alloced}[1]{#1^{\alloc}}
\newcommand{\allocedi}[1]{#1^{\alloci}}
\newcommand{\monop}{\val_m}
\newcommand{\Rev}[2][]{\text{\bf Rev}\ifthenelse{\not\equal{}{#1}}{_{#1}}{}\!\left[{\def\givenn{\middle|}#2}\right]}
\newcommand{\Revsb}[1]{\text{\bf Rev}_{#1}({\rm SB})}
\newcommand{\Wel}[2][]{\text{\bf Payoff}\ifthenelse{\not\equal{}{#1}}{_{#1}}{}\!\left[{\def\givenn{\middle|}#2}\right]}
\DeclareMathOperator{\OPT}{OPT}
\newcommand{\reals}{{\mathbb R}}
\newcommand{\opt}{^{\star}}
\newcommand{\primed}{^\dagger}
\newcommand{\doubleprimed}{^{\ddagger}}
\newcommand{\MECHS}{\text{MECHS}}
\newcommand{\DISTS}{\text{DISTS}}
\newcommand{\R}{\mathbb R}
\newcommand{\virtual}{\phi}
\newcommand{\payfrac}{\alpha}
\newcommand{\allocfrac}{\beta}
\newcommand{\sample}{s}
\newcommand{\scale}{\alpha}
\newcommand{\piratio}{\beta}
\newcommand{\bidmech}{sample-bid mechanism}
\newcommand{\monopq}{\monoq}
\newcommand{\monopr}{\monop}
\newcommand{\aq}{a}
\newcommand{\density}{f}
\newcommand{\wel}{w}
\newcommand{\qhat}{\hat{\quant}}
\newcommand{\criticalval}{\val^*}
\newcommand{\criticalquant}{\quant^*}
\newcommand{\pscale}{0.7}
\newcommand{\priceval}{\price}
\newcommand{\midwelfare}{w}
\newcommand{\monoalphaq}{\quant\doubleprimed}
\newcommand{\qtilde}{\tilde\quant}
\newcommand{\vtilde}{\tilde\val}
\newcommand{\revM}{\mathbb{M}_r}
\newcommand{\nonrevM}{\mathbb{M}}
\newcommand{\allocbid}{\tilde{\alloc}}
\newcommand{\pricebid}{\tilde{\price}}
\newcommand{\ubar}[1]{\underaccent{\bar}{#1}}
\newcommand{\alphaval}{0.7}
\newcommand{\qthreshold}{0.62}
\newcommand{\revsmallmq}{0.545}
\DeclareMathOperator{\argmax}{argmax}
\newcommand{\given}{\,\mid\,}
\newcommand{\prob}[2][]{\text{\bf Pr}\ifthenelse{\not\equal{}{#1}}{_{#1}}{}\!\left[{\def\givenn{\middle|}#2}\right]}
\newcommand{\expect}[2][]{\text{\bf E}\ifthenelse{\not\equal{}{#1}}{_{#1}}{}\!\left[{\def\givenn{\middle|}#2}\right]}
\newcommand{\tparen}{\big}
\newcommand{\tprob}[2][]{\text{\bf Pr}\ifthenelse{\not\equal{}{#1}}{_{#1}}{}\tparen[{\def\given{\tparen|}#2}\tparen]}
\newcommand{\texpect}[2][]{\text{\bf E}\ifthenelse{\not\equal{}{#1}}{_{#1}}{}\tparen[{\def\given{\tparen|}#2}\tparen]}
\newcommand{\sprob}[2][]{\text{\bf Pr}\ifthenelse{\not\equal{}{#1}}{_{#1}}{}[#2]}
\newcommand{\sexpect}[2][]{\text{\bf E}\ifthenelse{\not\equal{}{#1}}{_{#1}}{}[#2]}
\newcommand{\dd}{\,{\mathrm d}}
\let\oldparagraph\paragraph
\renewcommand{\paragraph}[1]{\oldparagraph{#1.}}
\begin{document}

\title{Revelation Gap for Pricing from Samples}

\author{
Yiding Feng
\thanks{Department of Computer Science, Northwestern University. Email: \texttt{yidingfeng2021@u.northwestern.edu}.}
\and Jason D. Hartline
\thanks{Department of Computer Science, Northwestern University.
Email: \texttt{hartline@northwestern.edu}.}
\and Yingkai Li
\thanks{Department of Computer Science, Northwestern University.
Email: \texttt{yingkai.li@u.northwestern.edu}.}
}
\date{}

\maketitle
\begin{abstract}
    This paper considers 
    prior-independent mechanism design,
    in which a single mechanism is designed
    to achieve approximately optimal performance
    on every prior distribution from a given class.
    Most results in this literature focus on
    mechanisms with truthtelling equilibria,
    a.k.a., truthful mechanisms.
    \citet{FH-18} introduce the revelation gap
    to quantify the loss of the restriction to 
    truthful mechanisms.
    We solve a main open question
    left in \citet{FH-18};
    namely, we identify a non-trivial revelation gap 
    for revenue maximization.
    
    Our analysis focuses on the canonical problem of selling a single item 
    to a single agent with only access to a single sample from the agent's valuation distribution. 
    We identify the \bidmech\
    (a simple non-truthful mechanism)
    and upper-bound its prior-independent approximation
    ratio by 1.835 (resp.\ 1.296) 
    for regular (resp.\ MHR) distributions.
    We further prove that no truthful mechanism
    can achieve
    prior-independent approximation ratio
    better than 1.957 (resp.\ 1.543) 
    for regular (resp.\ MHR) distributions.
    Thus, a non-trivial revelation gap is shown as 
    the \bidmech\ outperforms 
    the optimal prior-independent truthful mechanism.
    On the hardness side, we prove that no
    (possibly non-truthful) mechanism can achieve prior-independent approximation ratio
    better than 1.073 even for uniform distributions.
\end{abstract}
\section{Introduction}
\label{sec:intro}


One important research direction in modern computer science 
focuses on multi-party computation.
Two fundamental concerns in this area are 
(i) who should be doing what part of the computation;
and (i) what are their incentives to do it correctly.
The second concern has been studied extensively 
in the economics field of mechanism design.
For the first concern, however,
the system design field
and the mechanism design field 
have different high-level guidelines.
The end-to-end argument \citep*[cf.][]{SRC-84}
-- 
a long-standing principle in system design
--
suggests
that the computation should be done 
where the data is, i.e., in a decentralized fashion.
On the other hand, due to revelation principle
(see next paragraph),
the
mechanism design literature
favors systems where the entire computation is done by a center with other participants  truthfully 
reporting their portion of the input data to the optimization.
Addressing this discrepancy, in this paper,
we argue that
such decentralization idea from the system design field
is beneficial even in purely economic terms
when robust mechanisms are desired.

Revelation principle, 
a seminal observation in mechanism design suggests that
if there is a mechanism with good equilibrium outcome,
there is a mechanism which
achieves the same outcome 
in a truthtelling equilibrium.
This constructed mechanism asks agents to report true
preferences,
simulates the agent strategies in the 
original mechanism, 
and outputs the outcome of the simulation.
Due to this guiding principle, 
a vast number of studies in mechanism design 
focus on truthful mechanisms 
(i.e., ones where revealing preferences truthfully forms
an equilibria).
However, successful applications
--
e.g., first-price auction, 
generalized second-price auction 
for advertisers in sponsored search
--
suggest a
great practical impact for
non-truthful 
mechanisms.
From the view of multi-party computation,
the mechanism itself as well as 
the participating agents can be thought as different parties in the system,
where agents have their private preference as their input
data.
Truthful mechanisms correspond to systems  where
the optimization is done by the center (i.e.\ mechanism)
and other parties (i.e.\ agents) only truthfully 
report their preference.
Non-truthful mechanisms correspond to 
systems recommended by the end-to-end argument \citep*{SRC-84},
where agents are also 
perform some of the computation (i.e.\ computing their strategies).

To provide a theoretical understanding of the 
potential 
inadequacy of revelation principle
and 
advantages of non-truthful mechanisms,
we consider questions from
prior-independent mechanism
design,
in which a mechanism is designed for 
agents with preferences drawn from an unknown distributions
(a.k.a.\ prior).
The goal is to identify robust mechanisms
--
ones
with good (multiplicative) prior-independent approximation
to the optimal mechanism
that is tailored to the distribution of preferences. 
In prior-independent mechanism design,
it is not generally without loss
to restrict to truthful mechanisms
--
the equilibrium strategies for Bayesian agents in 
non-truthful mechanisms are a function of 
their prior and thus the construction
of truthful mechanism
via revelation principle is no longer prior-independent.
Nonetheless, 
similar to other lines of research in mechanism design, 
most results in prior-independent mechanism design
focus, with loss of generality, on truthful mechanisms.
To understand the loss of the restriction to truthful mechanisms,
\citet*{FH-18} introduce revelation gap,
a quantification of optimal prior-independent approximation ratio among
all truthful mechanisms vs.\ 
the optimal prior-independent approximation ratio among all (possibly non-truthful)
mechanisms.
They identify a non-trivial revelation gap for
welfare-maximization.
A main open question left in \citet*{FH-18}
it to 
identify a non-trivial revelation gap 
in any canonical model for revenue maximization,
which 
is another important and
presumably technically more challenging objective 
in mechanism design.

\subsection*{Main Results}
In this paper,
we focus on revenue maximization
in a canonical single-item environment 
for a single agent with a single sample access,
i.e.,
the agent's value is drawn from an unknown distribution 
but the mechanism can access a single sample (independent to agent's value)
from that distribution \citep*[cf.][]{DRY-15,AB-19}.
The agent knows her private valuation and the distribution for valuation, 
but she does not know the sample of the mechanism. 
Our main theorem identifies a non-trivial
revelation gap for revenue maximization
in this model.
This theorem follows from three results.
First, we introduce 
the (non-truthful) \bidmech\
and obtain 
an upper bound of 
its prior-independent approximation ratio.
Second, we obtain a lower bound of the
optimal prior-independent approximation ratio 
among all possible mechanisms.
Third, we show that any 
truthful mechanism\footnote{We impose a technical assumption
(i.e.\ scale-invariant)
to the class of truthful mechanisms,
which is common in prior-independent mechanism
design \citep*{AB-18,AB-19,HJL-19}.
}
is equivalent to a sampled-based pricing mechanism
introduced by 
\citet*{AB-19} 
where the authors
lower-bound
and upper-bound the optimal
prior-independent approximation ratio 
among all sample-based pricing mechanisms.
See \Cref{tab:summary} for a summary of all three results. 
Since the prior-independent approximation ratio of the \bidmech\ 
is strictly better than 
the optimal prior-independent approximation ratio 
among all truthful mechanisms, we 
immediately get our non-trivial revelation gap for 
revenue maximization.

\begin{table}[t]
\small\addtolength{\tabcolsep}{-2pt}
\caption{Prior-independent approximation 
factor of single-agent
revenue-maximization with single-sample access.
Two class of distributions (i.e.\ regular, MHR
--
standard assumptions in mechanism design) are considered,
where
MHR distributions is a subclass of regular distributions.
We impose a technical assumption 
(i.e.\ scale-invariant, \Cref{def:scale-invariant}) to the class 
of truthful mechanisms.
}
\label{tab:summary}

\renewcommand{\thefootnote}{\ifcase\value{footnote}\or(\textasteriskcentered)\or(\textdagger)\or(\textdagger)\or(\textdaggerdbl)\or(\textsection)\or(\textbardbl)\or(\textdollar)\or(\textdollar\textdollar)\or(\pounds)\or(\texteuro)\or(\#\#\#)\or(\#\#\#\#)\or($\infty$)\fi}
 	\footnotesize
\center

\setcellgapes{7pt}
\makegapedcells
\begin{tabular}{|c|c|c|c|c|}
    \cline{2-5}
    \multicolumn{1}{c|}{}
    &    	
    
        \multicolumn{2}{c|}{
 			Class of truthful mechanisms
 		}
    & 
    \multicolumn{2}{c|}{
 			Class of all mechanisms 
 		}
    \\
    \cline{2-5}
    \multicolumn{1}{c|}{}
    &
    Regular dists. & 
 			MHR dists.
    &
    Regular dists. & 
 			MHR dists.
    \\
    \hline
   	Upper bound
    &1.996\footnotemark[1]
    &1.575\footnotemark[1]
    &1.835\footnotemark[5]
    &
   1.296\footnotemark[3]\\
    \hline
Lower bound
    & 1.957\footnotemark[1]
    & 1.543\footnotemark[1]
    &\multicolumn{2}{c|}{1.073\footnotemark[4]}\\
    \hline
\end{tabular}

 	\smallskip
 	~~~~~~~~~~~~~~~~~~
 	~~~~~~~~~~~~~~~~~~
 	~~~~~~~~~~~~~~~~~~
 	\begin{minipage}{1\textwidth}
 		{
 			\footnotesize
 			\footnotemark[1]~{\cite{AB-19}} and 
 			\Cref{lem:revelation pricing};~\\
 			\footnotemark[5]~{\Cref{thm:mhr}};~
 			\footnotemark[3]~{\Cref{thm:regular}};~
 			\footnotemark[4]~{\Cref{thm:low uniform}}.
 			
 		}
 	\end{minipage}
 \end{table}

In the model of a
single agent with single-sample access, 
the class of non-truthful mechanisms is rich,
which includes fairly complicated 
mechanisms. For example,
mechanisms can ask agents to reports 
both her value and prior;
or include
multiple rounds of communication 
between seller and agent who sequentially reveal
their private information.\footnote{Recall that the agent knows the distribution
of the sample but does not know its realization.
}
Nonetheless,
our upper bound of the optimal prior-independent approximation
ratio is attained by 
a simple non-truthful mechanism 
--
\emph{\bidmech}\
defined as follow.
\begin{itemize}
    \item \textit{Sample-bid mechanism:}  
    Given parameter $\scale$ and sample $\sample$, 
the {\bidmech}\
solicits a non-negative bid
$\bid \geq 0$, 
charges the agent $\scale \cdot 
\min\{\bid,\sample\}$,
and
allocates the item to the agent 
if $\bid \geq \sample$.
\end{itemize}
From the agent's perspective,
she reports a bid to compete 
for the item against a random sample realized
from the same valuation distribution;
and 
regardless of whether she wins or loses,
she will always be charged 
$\alpha\cdot\min\{\bid,\sample\}$.
In fact, the agent's optimal bidding strategy 
could be overbidding or underbidding,
depending on the value as well as 
the distribution. 
The \bidmech\ has the 
similar format as the Becker–DeGroot–Marschak method
\citep*{BDM-64}
which has been studied and implemented 
in experimental economics
for understanding agents' perception of the random event.


In order to beat the optimal 
prior-independent approximation ratio
among all truthful mechanisms, 
we need to show the approximation 
for the \bidmech\ is 
strictly better than $1.957 < 2$
for regular distributions,
and $1.543 < \sfrac{e}{(e-1)}$ for MHR distributions.
However, most approximation techniques and results for 
non-truthful mechanisms in the literature
only provide similar or or larger constants
--
for instance,
smoothness property, permeability, and 
revenue covering property in
price of anarchy 
\citep*[cf.][see more discussion in related work]{RST-17,DK-15,har-16}.\footnote{
\citet*{FH-18}
bypass this challenge in 
their revelation gap for welfare maximization
by considering a 
model where the all-pay auction \citep*[cf.][]{mas-00} 
achieves prior-independent approximation ratio 1,
i.e., it is indeed the Bayesian optimal mechanism.
}
One the other hand, analyzing the approximation 
of truthful mechanisms is relatively easier.
In 
revenue maximization,
one analysis approach
used extensively for truthful mechanisms
is the revenue curve reduction 
(see next paragraph).
This approach has lead to 
tight or nearly tight results 
in both prior-independent approximation
\citep*[][]{AB-18,AB-19,HJL-19}
and Bayesian approximation
\citep*[][]{AHNPY-18,JLTX-19,JLQTX-19}.

Revenue curves \citep*[cf.][]{BR-89} give an equivalent representation
of agent's valuation distribution and enable clean characterizations
of the revenue of any mechanism \citep*[see
  e.g.][]{mye-81,BR-89,AFHH-13}.  The high-level goal of revenue curve
reduction is to identify a subclass of revenue curves that has closed
form and over which the worst approximation guarantee is attained.
The main argument is to design a (problem or mechanism) specific
modification to the revenue curve (converting an arbitrary revenue
curve into a revenue curve from the subclass) and analyze the
impact of revenue from the modification on the given mechanism.  Note
that revenue is the expected payment of the agents when they bid
optimally.  For truthful mechanisms, after the modification has been
designed, it is sufficient to study how payment changes for every bid
in the modification, since agents are bidding truthfully
(i.e.\ bids equal values).  However, for non-truthful
mechanisms, converting a revenue curve to another one will lead to
changes in both the payment for each bid and the optimal bidding
strategy of each agent.  This makes the revenue curve reduction
approach more difficult for non-truthful mechanisms, and thus, results
of non-truthful mechanisms in the literature rarely uses this
technique.  In this paper, due to the simplicity of our model and the
\bidmech, we are able to apply this technique by carefully (but
relatively loosely) disentangling these two impacts and then analyzing
them separately.

Our final result for the single-agent pricing from samples model
provides a lower bound on the optimal prior-independent approximation
ratio among the class of all mechanisms.  This result contrasts with
multi-agent models where there there exists complicated and arguably
impractical non-truthful mechanism whose prior-independent
approximation is arbitrarily close to 1.\footnote{Such mechanisms are
  designed and analyzed in non-parametric implementation theory -- a
  line of research in economics, see the survey of \citet*{jac-01} and
  further discussion in the related work section.}  The crucial
observation for proving this lower bound is that for pointmass
distributions, the agent perfectly knows the seller's sample.  Thus,
she can strategically imitate the behavior of the values in other
distributions.  This restricts the seller's ability to extract revenue
from the agent, which leads to a prior-independent approximation ratio
at least~1.073 even on the restricted subclass of MHR distributions
(in fact, even on uniform distributions).  Our lower bound also
suggests that it will be non-trivial to identify the non-truthful
mechanism which attains the optimal prior-independent approximation
ratio.

It should be noted that our better-performing non-truthful
prior-independent mechanisms do not come without drawbacks relative to
truthful prior-independent mechanisms.  Elegantly, truthful
prior-independent mechanisms do not require prior knowledge by any
party.  In contrast, non-truthful prior-independent mechanisms
generally require some knowledge of the prior on the part of the
agents.  From this perspective, our results show that a seller is able
to extract strictly higher revenue from the agent by taking advantage
of information that the agent possesses and is able to strategize with
respect to.

\subsection*{Important Directions}
Despite the practical importance of non-truthful mechanisms, 
the literature on mechanism design 
almost exclusively considers the design of truthful mechanisms.  
Thus, the most general direction from this paper 
is to systematically build a theory for the design of 
non-truthful mechanisms with good performance guarantee.
Some recent works on this topic are
equilibrium analysis of i.i.d.\ rank-based mechanism
\citep*{CH-13}, robust analysis of welfare and revenue 
for classic mechanisms in practice (i.e.\ price of anarchy,
see discussion in related work),
estimating revenue and welfare in a mechanism from equilibrium bids in another mechanism
\citep*{CHN-14,CHN-16},
and the sample complexity of non-truthful mechanisms in asymmetric environments \citep*{HT-19}.

Though \citet*{FH-18} and this paper
demonstrate non-trivial revelation gap for 
both welfare-maximization and revenue maximization,
both gaps are constant. 
Thus, one interesting open question left is to identify 
a superconstant revelation gap 
in a canonical model where 
simple non-truthful mechanisms
are sufficient to beat 
the optimal prior-independent truthful mechanisms,
and we conjecture that the single-agent with single-sample access model without
any regularity assumption on distributions might be
a good candidate to answer this question.

Prior-independent mechanism design for a single item
with symmetric agents is an extensively studied model 
\citep{BK-96, DHY-15,FILS-15}.
The fundamental difficulty is to pin down 
the optimal prior-independent approximation ratio
even for the two-agent setting.
Recently, \citet*{AB-18} obtain the tight
bounds of optimal prior-independent truthful mechanism
for MHR distributions, and 
\citet*{HJL-19} generalize it to regular distributions.
In both works, the main technique is the revenue curve reduction. An open question here is to identify simple non-truthful mechanism
which outperforms the optimal 
prior-independent truthful mechanism in
this canonical single-item two-agents model.

In this work, we apply the revenue curve reduction
approach
--
a powerful technique of approximation analysis 
for truthful mechanisms
--
to a non-truthful mechanism.
Our argument is not as general as ones for 
truthful mechanisms and thus there are gaps between
the lower bound and upper bound.
Besides sharpening these bounds as an open question,
an important open question is to design general analysis framework
on revenue curves for non-truthful mechanisms.

\subsection*{Related Work}

Prior-independent mechanism design, as 
a standard framework for understanding 
the robustness of mechanisms,
has been applied to 
single-dimensional mechanism design
(\citealp{DRY-15},\ citealp{RTY-12}, \citealp{FILS-15},\citealp{AB-18},\citealp{FH-18}, \citealp{HJL-19}), multi-dimensional mechanism design 
(\citealp{DHKT-11},\citealp{RTY-15},\citealp{GK-16}),
makespan minimization \citep*{CHMS-13},
mechanism design for risk-averse agents 
\citep*{FHH-13}, and mechanism design for agents with interdependent values \citep*{CFK-14}.
Except \citet*{FHH-13} and \citet*{FH-18}, all other results 
focus on truthful mechanisms.

There is a significant area of research studying 
mechanism design with sample access from
the distribution of agents' preference, which has two regimes -- small number of samples,
and large number of samples.
In the former regime, 
literature studies
the approximation of mechanisms with a single-sample access
\citep*[][]{AKW-14,DRY-15, AB-19,FHL-19,CDFS-19,DFLLR-20,CCES-20},
and mechanisms
with two-sample access 
\citep*{BGMM-18,DZ-20}.
In the latter regime, the goal is to minimize 
the sample complexity, i.e.,
number of sample to achieve $(1 - \epsilon)$-approximation
\citep[e.g.][]{CR-14,MR-15,HMR-18,GW-18,GHZ-19,HT-19}. Except \citet*{HT-19},
all other results focus on truthful mechanisms.

Price of anarchy studies how classic non-truthful mechanisms
(e.g.\ first-price auction, all-pay auction) approximate the optimal
welfare.  \citet*{ST-13} introduce a smoothness property defined on
mechanisms and give an analysis framework based on this property.
With this smoothness framework, the authors upper-bound the
welfare-approximation of the first-price auction by
$\sfrac{e}{(e-1)}$, and the welfare-approximation of the all-pay
auction by~2.  These two results are later tightened by
\citet*{CST-15} for the all-pay auction and \citet*{HTW-18} for the
first-price auction using some mechanism-specific arguments.
\citet*{HHT-14} introduce a geometric framework for
analyzing the price of anarchy for both welfare and revenue.  As the
instantiations of the framework, authors upper-bound the revenue
approximation of the first-price auction with individual monopoly
reserve by $\sfrac{2e}{(e-1)}$.  \citet{DK-15} show that bounds from
these analysis frameworks are tight up to constant factors.

The literature on non-parametric implementation theory considers the
same question as prior-independent mechanism design but allows
mechanisms where agents cross-report their beliefs on other agents'
values \citep[e.g.,][]{jac-01}.  \citet{CR-05} introduce a dynamic
auction for single-item multi-agent settings which is able to
implement the Bayesian revenue optimal auction \citep{mye-81} without
the knowledge of agents' distribution.  \citet{DM-00} introduce a
generalization of VCG auction for multi-agent interdependent value
settings.\footnote{In general, there is no incentive compatible
  mechanism which outputs the welfare-optimal outcomes in
  interdependent value settings.}  In this auction, agents are asked
to submit a function that gives a bid for every possible valuation of
the other agents.  Though this auction requires no knowledge of
agents' distributions, \citet{DM-00} show that it is Bayesian
welfare-optimal under mild assumptions.  \citet{ACM-12} study how to
use scoring rules to learn agents' distribution and implement the
auction based on this learned distribution.  All results above suggest
that in the multi-agent settings, there exist complicated and arguably
impractical non-truthful mechanisms whose prior-independent
approximation equal or are arbitrarily close to 1.  However, as we
mentioned earlier, in the model of a single-agent with single-sample
access, we provide a lower bound on the optimal prior-independent
approximation without any restriction on mechanisms.

\section{Preliminaries}
\label{sec:prelim}

\paragraph{Model}
This paper focuses on the single-item
revenue-maximization
problem with a single agent.
The agent has a private value $\val$
drawn from a valuation distribution (a.k.a.\ prior)~$F$
supported on $[\lval, \hval]$.
we assume that distribution $F$
has positive density $f$ every where in the support. 
Given allocation $\alloc$ and payment $\price$,
the utility of the agent is $\val\alloc - \price$.

We consider the prior-independent mechanism design with
a single sample access. 
Namely, 
the seller does not know the valuation distribution~$F$,
but has a single sample $\sample$ drawn from~$F$.
The agent knows the valuation distribution
$F$ but does not observe the sample $\sample$,
and the value $\val$ of the agent is independent of the sample $\sample$.
A mechanism $\mech = (\allocbid, \pricebid)$ includes 
an allocation rule $\allocbid:\R \times \R \rightarrow[0, 1]$ 
mapping from the agent's bid $\bid$ and the sample $\sample$
to the allocation probability of the item;
and a payment rule $\pricebid:\R \times \R \rightarrow \R_+$
mapping from the agent's bid $\bid$ and the sample $\sample$
to the payment charged from the agent.
Let $\allocbid(\bid, \dist) = \expect[\sample\sim\dist]{\allocbid(\bid,\sample)}$,
$\pricebid(\bid, \dist) = \expect[\sample\sim\dist]{\pricebid(\bid,\sample)}$
be the expected allocation and payment 
over the randomness of the sample $\sample$
drawn from distribution $\dist$.
The seller first announce the mechanism $\mech = (\allocbid, \pricebid)$ to the buyer, 
and then the sample $\sample$ and value $\val$ are realized from distribution $\dist$.
The agent report a bid $\bid$ based on her private value $\val$, 
and the seller implements the mechanism $\mech$ with input $\bid$ and sample $\sample$.
We assume that the seller has full commitment power on implementing the mechanism. 


Given a mechanism $(\allocbid, \pricebid)$
and distribution $F$,
the \emph{best response} of the agent is $\bid(\cdot, F):\R \rightarrow\R$
which maximizes her expected utility,
i.e., for every value $\val$,
$\bid(\val, F)\in\argmax_\bid \val\cdot\allocbid(\bid,\dist) - \pricebid(\bid,\dist)$.\footnote{When there are multiple bids maximizing the utility of the agent, 
we allow the agent to choose any bid maximizing her utility. 
The revenue guarantee we obtained in this paper holds even when the agent can break tie and choose the bid minimizing the revenue of the seller. }
A mechanism $(\allocbid, \pricebid)$
is \emph{incentive compatible (IC)}
if reporting the agent's value truthfully is her best response,
i.e., $\bid(v, F) = v$
for all $\val$
and $F$.
A mechanism $(\allocbid, \pricebid)$
is \emph{individual rational (IR)}
if the agent's utility under her best response is non-negative,
i.e, 
$\max_\bid \val\cdot\allocbid(\bid,\dist) - \pricebid(\bid,\dist) \geq 0$
for all $\val$ and $F$.\footnote{Note that the utility of the agent can be negative for some realization of the sample $\sample$, but in expectation it must be non-negative.}

For any mechanism $(\allocbid, \pricebid)$,
let $\alloc(\val, \dist, \sample) = \allocbid(\bid(\val, F), \sample)$ 
be the interim allocation of value $\val$ given distribution $\dist$ and sample $\sample$
when the agent follows her best response,
and let $\price(\val, \dist, \sample) = \allocbid(\bid(\val, F), \sample)$ be the interim payment. 
Moreover, 
denote $\alloc(\val, \dist) = \expect[\sample\sim\dist]{\alloc(\val, \dist, \sample)}$ 
and 
$\price(\val, \dist) = \expect[\sample\sim\dist]{\price(\val, \dist, \sample)}$ 
as the expected interim allocation and payment. 
We often omit $\dist$ in the notation if it is clear from the context. 

The revenue
$\Rev[\dist]{\mech}$ of a mechanism $\mech=(\alloc,\price)$
on distribution $F$ is 
the expected payment when the agent plays 
her best response, i.e.,
$\expect[\val\sim F]
{\priceval(\val, \dist)}$.
We evaluate mechanisms by 
the {prior-independent approximation ratio}.
\begin{definition}
The \emph{prior-independent approximation ratio}
of a mechanism $\mech$ over a class of distributions
$\DISTS$ is defined as
\begin{align*}
    \Gamma(\mech, \DISTS)
    \triangleq
    \max_{F \in \DISTS}
    \frac{\Rev[F]{\OPT_F}}{\Rev[F]{\mech}}
\end{align*}
where $\Rev[F]{\OPT_F}\triangleq
\max\limits_\price \; (1 - F(\price))\, \price$ is 
the optimal revenue for distribution $F$
\citep[cf.][]{mye-81}.
\end{definition}

\paragraph{Revenue Curve}
For any distribution $\dist$, 
let $\quant(\val, \dist) = 1-\dist(\val)$
be the quantile for the distribution,
and $\val(\quant, \dist)$ be the value $\val$ such that
$\quant = 1 - F(\val)$.
Here we introduce the revenue curve in quantile space \citep[cf.][]{BR-89}, which is a useful tool in the revenue analysis. 

\begin{definition}
For any valuation distribution $\dist$, 
the revenue curve $\rev(\quant, \dist)$ of the agent is a mapping from 
any
$\quant\in[0, 1]$ 
to the optimal revenue from
an agent with value drawn from $F$ subject to the 
constraint that 
the item is allocated with ex ante probability $\quant$.
\end{definition}
In the later analysis in the paper, when $\dist$ is clear from the context, we omit it in the notation and only use $\rev(\quant)$
to represent the revenue curve
and $\quant(\val)$
to represent the quantile of value $\val$. 
Let $\virtual(\val) = \val - \frac{1-F(\val)}{f(\val)}$ be the virtual value of the agent. 

\begin{definition}
An valuation distribution $\dist$ is regular if the virtual value of the agent is weakly increasing. 
\end{definition}
\begin{theorem}[\citealp{mye-81}]
A distribution $F$ is regular if and only if the corresponding revenue curve $\rev(\quant, \dist)$ is concave. 
\end{theorem}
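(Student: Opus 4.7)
The plan is to reduce the equivalence to the single computational identity $\tfrac{d}{d\quant}\rev(\quant, \dist) = \virtual(\val(\quant, \dist))$, after which both directions of the equivalence follow from a simple monotonicity observation.

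First I would give the explicit form of the revenue curve. Since the agent has a positive density throughout the support, for each $\quant \in [0,1]$ there is a unique price $\val(\quant, \dist)$ at which the item sells with ex ante probability exactly $\quant$, and this posted price yields revenue $\quant \cdot \val(\quant, \dist)$; hence $\rev(\quant, \dist) = \quant \cdot \val(\quant, \dist)$. Implicit differentiation of the identity $\quant = 1 - \dist(\val(\quant, \dist))$ gives $\tfrac{d\val}{d\quant} = -1/\dens(\val(\quant, \dist))$, which is well defined by the positive-density assumption.

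Next I would differentiate the revenue curve using the product rule and substitute:
\[
\tfrac{d}{d\quant}\rev(\quant, \dist) = \val(\quant, \dist) + \quant \cdot \tfrac{d\val}{d\quant} = \val(\quant, \dist) - \tfrac{\quant}{\dens(\val(\quant, \dist))} = \val(\quant, \dist) - \tfrac{1-\dist(\val(\quant, \dist))}{\dens(\val(\quant, \dist))} = \virtual(\val(\quant, \dist)),
\]
where the penultimate equality uses $\quant = 1 - \dist(\val(\quant, \dist))$ and the last invokes the definition of the virtual value. So the slope of the revenue curve at quantile $\quant$ is precisely the virtual value at the corresponding type $\val(\quant, \dist)$.

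Finally I would conclude the equivalence. Concavity of $\rev(\cdot, \dist)$ is equivalent to $\tfrac{d}{d\quant}\rev(\quant, \dist) = \virtual(\val(\quant, \dist))$ being weakly decreasing in $\quant$. Because $\val(\quant, \dist)$ is strictly decreasing in $\quant$, this holds if and only if $\virtual(\val)$ is weakly increasing in $\val$, which is exactly the regularity condition. I expect no substantive obstacle: the argument is a textbook chain of equalities whose only subtlety is ensuring that the inverse quantile function and its derivative are well defined, and this is handed to us by the positive-density assumption stated in the model section.
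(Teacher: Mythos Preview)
The paper does not prove this theorem; it is stated as a cited result from \citet{mye-81} and left without argument. Your proposal supplies the standard textbook proof, and the computation is correct: differentiating $\rev(\quant)=\quant\,\val(\quant)$ via $\val'(\quant)=-1/\dens(\val(\quant))$ gives $\rev'(\quant)=\virtual(\val(\quant))$, and the monotonicity equivalence then follows because $\val(\quant)$ is strictly decreasing.

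One small point worth tightening: the paper's \emph{definition} of the revenue curve is phrased as the \emph{optimal} revenue subject to ex ante sale probability $\quant$, not as the posted-price revenue $\quant\,\val(\quant)$. Taken literally, the optimal single-agent revenue under an ex ante constraint is the concave hull of $\quant\,\val(\quant)$ (since any monotone allocation is a mixture of posted prices), which would make the ``only if'' direction vacuous. The paper itself resolves this by later writing $\rev(\quant)\triangleq \quant\,F^{-1}(1-\quant)$ explicitly, so the intended object is indeed the price-posting curve you work with; but your sentence ``hence $\rev(\quant,\dist)=\quant\cdot\val(\quant,\dist)$'' currently reads as if it followed from the optimality definition, when really it is the intended definition. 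A one-line remark acknowledging this would make the argument airtight.
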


\begin{theorem}[\citealp{mye-81}]\label{thm:virtual welfare}
For any distribution $F$ and any mechanism with interim allocation and payment rule $\alloc(\val), \price(\val)$, 
the expected revenue of the seller equals the expected virtual value of the agent plus the payment of the lowest value $\lval$, 
i.e., 
$\expect[\val\sim\dist]{\price(\val)} = \expect[\val\sim\dist]{\alloc(\val)\virtual(\val)} + \price(\lval)$.
\end{theorem}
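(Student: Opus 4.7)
The plan is to apply the envelope-theorem plus integration-by-parts argument of \citet{mye-81} to the interim rules induced by the agent's best response. Because $\alloc(\val), \price(\val)$ arise from the agent playing the best response $\bid(\val,\dist)$ to $(\allocbid,\pricebid)$, the induced direct mechanism is by construction incentive compatible. Consequently the indirect utility $u(\val) \triangleq \val\,\alloc(\val) - \price(\val)$ is convex in $\val$ with $u'(\val) = \alloc(\val)$ almost everywhere, which integrates to $u(\val) = u(\lval) + \int_{\lval}^{\val}\alloc(z)\dd z$ and therefore $\price(\val) = \val\,\alloc(\val) - u(\lval) - \int_{\lval}^{\val}\alloc(z)\dd z$.

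Next, I would take expectation over $\val \sim \dist$ and swap the order of integration in the double integral by Fubini to obtain $\expect[\val\sim\dist]{\int_{\lval}^{\val}\alloc(z)\dd z} = \int_{\lval}^{\hval}(1-\dist(z))\,\alloc(z)\dd z$. Multiplying and dividing by $\dens(\val)$ inside the remaining integral recovers the virtual value, so
\[
\expect[\val\sim\dist]{\price(\val)} \;=\; \int_{\lval}^{\hval}\dens(\val)\,\alloc(\val)\Bigl(\val - \tfrac{1-\dist(\val)}{\dens(\val)}\Bigr)\dd \val \;-\; u(\lval) \;=\; \expect[\val\sim\dist]{\alloc(\val)\,\virt(\val)} \;-\; u(\lval).
\]

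Finally I would substitute $u(\lval) = \lval\,\alloc(\lval) - \price(\lval)$. Under the paper's standing convention that $\lval = 0$ (so that the boundary term $\lval\,\alloc(\lval)$ vanishes), this collapses to the stated identity $\expect[\val\sim\dist]{\price(\val)} = \expect[\val\sim\dist]{\alloc(\val)\,\virt(\val)} + \price(\lval)$. The proof is essentially routine and there is no serious obstacle; the only point worth flagging is confirming that the envelope condition $u'(\val) = \alloc(\val)$ applies to the best-response induced allocation, and this is immediate because $\bid(\val,\dist) \in \argmax_{\bid} \val\,\allocbid(\bid,\dist) - \pricebid(\bid,\dist)$, so the standard one-dimensional envelope argument goes through verbatim.
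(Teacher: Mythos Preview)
The paper does not supply its own proof of this statement; it is simply quoted as the classical revenue--virtual-welfare identity of \citet{mye-81}. Your derivation is exactly the standard Myerson envelope argument, and it is correct.

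One useful observation you made that the paper leaves implicit: the identity as written, with the boundary term $+\price(\lval)$ rather than $-u(\lval) = -\lval\,\alloc(\lval) + \price(\lval)$, requires $\lval\,\alloc(\lval) = 0$. You attribute this to a convention $\lval = 0$; the paper never states that convention globally (indeed, in \Cref{sec:low} it works with distributions supported on $[1,2]$), but every place where \Cref{thm:virtual welfare} is actually invoked---most notably in Step~1 of the proof of \Cref{lem:regular large large}, where the authors explicitly note ``the payment for lowest type is always 0, i.e., $\price(0) = 0$''---the revenue curves under consideration have $\lval = 0$. So your caveat is well placed and your proof is complete for the purposes to which the theorem is put in the paper.
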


Finally, we define the monopoly reserve and monopoly quantile of the agent given the revenue curve~$\rev$. 
\begin{definition}
The monopoly quantile of the agent is $\monopq = \argmax_{\quant} \rev(\quant)$,\footnote{In this paper, we break tie in favor of smaller quantile. Note that all the results are not affected by the tie breaking rule.} 
and the monopoly reserve of the agent is $\monop = \sfrac{\rev(\monopq)}{\monopq}$. 
\end{definition}
\section{The Sample-bid Mechanism}
\label{sec:mech}

In this section, 
we introduce the main mechanism considered
in this paper, the \bidmech.

\begin{definition}[\bidmech]\label{def:mech}
Given parameter $\scale$ and sample $\sample$, 
the \emph{\bidmech}\
solicits a non-negative bid
$\bid \geq 0$, 
charges the agent $\scale \cdot 
\min\{\bid,\sample\}$,
and
allocates the item to the agent 
if $\bid \geq \sample$.
\end{definition}

In the \bidmech, the agent reports 
her bid without knowing the realization of 
the sample. From her perspective,
the utility $\util(\val,\bid,F)$
for her who has value $\val$, reports bid $\bid$,
and competes with sample $\sample\sim F$ is 
\begin{align*}
    \util(\val,\bid,F)
    =
    \val  \cdot
    \underbrace{F(\bid)}_{
    \prob[\sample \sim F]{\sample \leq \bid}}
    -
    \underbrace{\alpha\bid\cdot (1 - F(\bid))}_{
    \text{payment when $\sample \geq \bid$}
    }
    -
    \underbrace{\alpha \int_{\small\lval}^{\max\{\bid,\small{\lval}\}} tdF(t)}_{
    \text{payment when $\sample \leq \bid$}
    }
\end{align*}
Note that reporting bid equal to zero,
the utility of agent is zero. Thus, \bidmech
is individually rational.

\begin{lemma}
The \bidmech\
is individually rational.
\end{lemma}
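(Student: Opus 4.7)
The plan is very short because the claim follows immediately from the structure of the mechanism and the remark already made just before the lemma statement. The strategy is to exhibit a single bid that guarantees non-negative utility; since the agent's best response can only do weakly better, individual rationality follows.

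Concretely, I would plug $\bid = 0$ into the utility expression
\[
\util(\val,\bid,F) = \val\,F(\bid) - \alpha\bid(1-F(\bid)) - \alpha\int_{\lval}^{\max\{\bid,\lval\}} t\,dF(t).
\]
Because the support is $[\lval,\hval]$ with $\lval\geq 0$ and $\dist$ has positive density on its support, $F(0)\leq F(\lval)=0$, so the first term vanishes. The second term is $-\alpha\cdot 0\cdot(1-F(0))=0$, and since $\max\{0,\lval\}=\lval$, the integral in the last term reduces to $\int_{\lval}^{\lval} t\,dF(t)=0$. Hence $\util(\val,0,F)=0$ for every $\val$ and every distribution $F$ supported on $[\lval,\hval]$.

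The second step is simply to recall that the agent plays a best response $\bid(\val,F)\in\argmax_\bid \util(\val,\bid,F)$. Since $0$ is a feasible bid (the mechanism accepts any $\bid\geq 0$), the maximum utility is at least the utility of bidding zero, i.e., $\max_\bid \util(\val,\bid,F)\geq \util(\val,0,F)=0$. This is precisely the IR condition from the preliminaries.

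There is essentially no obstacle here; the only mild subtlety is handling the case $\bid<\lval$ in the payment integral, which is why the definition uses $\max\{\bid,\lval\}$ as the upper limit, and this is what makes the ``pay zero when you bid zero'' observation clean. No further case analysis, no appeal to the revenue curve or virtual values, and no assumption on $\alpha$ beyond what is in the mechanism's definition is required.
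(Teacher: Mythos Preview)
Your proposal is correct and takes exactly the same approach as the paper: the paper simply observes, in the sentence immediately preceding the lemma, that bidding zero yields utility zero and concludes IR from that. Your write-up just makes this one-line observation explicit by plugging $\bid=0$ into the utility formula, which is precisely the intended argument.
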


On the other hand, reporting bid
equal to agent's value is not the best response
in general. 
We provide a characterization of agent's 
optimal bid as follows.

\begin{lemma}\label{lem:FOC}
In the \bidmech,
given any parameter $\scale$ and
distribution $\dist$,
the optimal bid $\bid(\val, F)$ for the agent
with value $\val$ satisfies the constraint that
\begin{align}
\label{eq:FOC}
\val = \scale\cdot \frac{1-\dist(\bid(\val, F))}{\density(\bid(\val, F))},
\end{align}
or $\bid(\val, F)\in\{0,\infty\}$. 
Ties are broken according to the utility of the agent.
\end{lemma}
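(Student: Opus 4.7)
The plan is to derive the characterization as a first-order condition from the agent's utility function stated just before the lemma. Since the agent does not observe $\sample$ and only reports a bid $\bid$, her expected utility as a function of $\bid$ (for $\bid \geq \lval$) is
\begin{align*}
    \util(\val,\bid,F) = \val\cdot F(\bid) - \scale \bid(1-F(\bid)) - \scale \int_{\lval}^{\bid} t \, f(t)\, dt.
\end{align*}
The idea is simply to differentiate this expression in $\bid$, observing that the two payment terms interact cleanly: the derivative of $-\scale \bid(1-F(\bid))$ produces $-\scale(1-F(\bid)) + \scale \bid f(\bid)$, while the derivative of the integral term produces $-\scale \bid f(\bid)$, and these $\scale \bid f(\bid)$ terms cancel. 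What remains is
\begin{align*}
    \partialx[\util(\val,\bid,F)]{\bid} = \val f(\bid) - \scale(1-F(\bid)),
\end{align*}
so any interior maximizer $\bid \in (\lval,\hval)$ must satisfy $\val f(\bid) = \scale(1-F(\bid))$, which is exactly \eqref{eq:FOC}.

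The next step is to handle the range $\bid \in [0,\lval)$ separately. In this range $F(\bid) = 0$ and the integral term vanishes, so the utility reduces to $-\scale \bid$, which is strictly decreasing in $\bid$ on $(0,\lval)$; hence any candidate maximizer in this range can be collapsed to $\bid = 0$ without loss. Similarly, for $\bid > \hval$ we have $F(\bid)=1$, so increasing $\bid$ beyond $\hval$ only changes the payment term $\scale \int_{\lval}^{\bid} t \, f(t) \, dt$, which is constant there (since $f(t)=0$ above $\hval$). So WLOG any bid in $(\hval,\infty)$ is equivalent to $\bid = \hval$.

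Finally, to conclude, I would argue that the agent's problem is a one-variable optimization on $[0,\infty)$ with a continuous objective, so a maximizer exists either in the interior (where it satisfies \eqref{eq:FOC}), or at the boundary $\bid = 0$ (no-trade), or ``at infinity'' in the degenerate sense above (which we denote $\bid = \infty$). In cases of ties between an interior stationary point and a boundary bid, the tie is broken in favor of whichever yields higher agent utility, as stated in the lemma.

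The main obstacle, and the only subtle point, is treating the atom-like behavior of the payment function at $\bid = \lval$ correctly: the integral expression written with $\max\{\bid,\lval\}$ as the upper limit is needed precisely so that $\util$ is continuous at $\bid = \lval$, and one should verify that the one-sided derivatives agree at $\bid = \lval$ (which they do, since $f(\lval)$ appears in both one-sided limits of $\partial\util/\partial\bid$). Beyond that, the derivation is a straightforward application of the envelope/FOC argument to an explicit expected utility.
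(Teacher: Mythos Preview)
Your proposal is correct and follows essentially the same approach as the paper: both write out the expected utility, differentiate in $\bid$, obtain the first-order condition $\val f(\bid)=\scale(1-F(\bid))$ for interior optima, and relegate the remaining cases to the boundary $\{0,\infty\}$. Your treatment is simply more explicit about the ranges $[0,\lval)$ and $(\hval,\infty)$ and the continuity at $\bid=\lval$, whereas the paper's proof handles these in a single sentence.
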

\begin{proof}
The agent's utility from reporting bid $\bid$ is 
\begin{align*}
    \util(\val,\bid,F)
    =
    \val  \cdot
    {F(\bid)}
    -
    {\alpha\bid (1 - F(\bid))}
    -
    {\alpha \int_{\small\lval}^{\max\{\bid,\small{\lval}\}} tdF(t)}
\end{align*}
Consider the first order condition with respect to 
bid $\bid$, 
if the optimal bid is obtained in the interior, 
we have 
\begin{align*}
\density(\bid)\left(\val - \scale \cdot \frac{1-\dist(\bid)}{\density(\bid)}\right) = 0
\end{align*}
as a necessary condition for the optimality of the bid $\bid$. 
Otherwise, the optimal bid is obtained on the boundary, 
where $\bid(\val, F)\in\{0,\infty\}$. 
\end{proof}
Note that 
there might exist multiple bids $\bid$ that satisfies the constraint \eqref{eq:FOC} in \Cref{lem:FOC}. 
In that case, the agent chooses the bid
which 
satisfies \eqref{eq:FOC} and maximizes her utility.
Another observation (\Cref{lem:scale}) of the \bidmech\ is that the expected revenue of the seller scales linearly with the valuation distribution. 
Since the optimal revenue scales linearly with the valuation distribution as well, 
to analyze the prior-independent approximation
ratio of the \bidmech,
we can focus on the valuation distributions 
such that the optimal revenue is normalized to 1.

\begin{lemma}\label{lem:scale}
Denote by $r$ 
the revenue of the
\bidmech\ with any parameter $\alpha$
and any valuation distribution $\dist\primed$.
For any $\rho > 0$ and distribution $\dist\doubleprimed$ 
such that $\dist\doubleprimed$ is $\dist\primed$ scaled by $\rho$,
i.e., $\dist\primed(\val) = \dist\doubleprimed(\rho\val)$ for all $\val$, 
the revenue of the \bidmech\ with parameter
$\alpha$ and distribution $\dist\doubleprimed$ is $\rho r$.
\end{lemma}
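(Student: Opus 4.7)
The plan is to show that scaling the valuation distribution by $\rho$ scales the agent's optimal bid, the sample, and hence every per-realization payment by exactly $\rho$, after which the expected revenue scales by $\rho$ as well.

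First, I would translate the distributional relationship $\dist'(\val) = \dist''(\rho\val)$ into a statement about densities and hazard ratios. Differentiating gives $\density'(\val) = \rho\,\density''(\rho\val)$, so $\density''(\bid) = \density'(\bid/\rho)/\rho$, and therefore
\[
\frac{1-\dist''(\bid)}{\density''(\bid)} \;=\; \rho \cdot \frac{1-\dist'(\bid/\rho)}{\density'(\bid/\rho)}.
\]

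Second, I would invoke the first-order characterization of best response (\Cref{lem:FOC}) to show that optimal bids scale: $\bid(\val,\dist'') = \rho\,\bid(\val/\rho,\dist')$. The interior FOC $\val = \scale(1-\dist''(\bid))/\density''(\bid)$ rewrites via the identity above as $\val/\rho = \scale(1-\dist'(\bid/\rho))/\density'(\bid/\rho)$, i.e.\ exactly the FOC for value $\val/\rho$ under $\dist'$ with bid $\bid/\rho$. The boundary bids $\{0,\infty\}$ are preserved under the scaling, and I would check that the agent's utility function itself satisfies $\util(\val,\bid,\dist'') = \rho\cdot\util(\val/\rho,\bid/\rho,\dist')$ (each term — $\val F''(\bid)$, $\scale\bid(1-F''(\bid))$, and the truncated expectation $\scale\int t\,dF''(t)$ — transforms this way after the change of variables $t = \rho t'$). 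Hence the set of maximizers scales by $\rho$ and the tie-breaking (in favor of the utility-maximizing bid) yields the same scaled selection.

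Third, I would couple the randomness: take $\val' \sim \dist'$ and $\sample'\sim\dist'$ and set $\val'' = \rho\val'$, $\sample'' = \rho\sample'$, which are distributed as independent draws from $\dist''$. Then
\[
\scale\cdot\min\{\bid(\val'',\dist''),\,\sample''\} \;=\; \scale\cdot\min\{\rho\,\bid(\val',\dist'),\,\rho\sample'\} \;=\; \rho\cdot\scale\cdot\min\{\bid(\val',\dist'),\,\sample'\}.
\]
Taking expectations on both sides gives that the revenue under $\dist''$ equals $\rho$ times the revenue under $\dist'$, which is $\rho r$.

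The only delicate point — and the one I would treat as the main obstacle — is the non-uniqueness of solutions to the FOC, and hence the possibility that the tie-breaking rule might a priori select bids that do not scale consistently. This is handled cleanly by the utility identity $\util(\val,\bid,\dist'') = \rho\cdot\util(\val/\rho,\bid/\rho,\dist')$ above: a bid $\bid$ is utility-maximizing for value $\val$ under $\dist''$ if and only if $\bid/\rho$ is utility-maximizing for value $\val/\rho$ under $\dist'$, so any consistent tie-breaking in $\dist'$ induces the scaled tie-breaking in $\dist''$.
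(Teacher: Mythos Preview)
Your proof is correct and follows essentially the same approach as the paper: both establish $\bid(\rho\val,\dist'')=\rho\,\bid(\val,\dist')$ via the FOC of \Cref{lem:FOC} and the density relation $\density'(\val)=\rho\,\density''(\rho\val)$, then show payments scale by $\rho$, and finally take expectations. Your treatment is in fact slightly more careful than the paper's in handling the boundary bids and the tie-breaking among multiple FOC solutions (via the utility identity $\util(\val,\bid,\dist'')=\rho\cdot\util(\val/\rho,\bid/\rho,\dist')$), and you use a per-realization coupling for the payment step where the paper computes the expected-payment formula $\pricebid(\rho\bid,\dist'')=\rho\,\pricebid(\bid,\dist')$ directly---both are equivalent.
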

\begin{proof}
First we show that for any value $\val$, 
the bid of value $\val$ given distribution $\dist\primed$
is equivalent to the bid of value $\rho\val$ given distribution $\dist\doubleprimed$ scaled by $\rho$.
The reason is that $\dist\primed(\val) = \dist\doubleprimed(\rho\val)$
and $\density\primed(\val) = \rho\density\doubleprimed(\rho\val)$.
Therefore, 
by \Cref{lem:FOC}, the first order condition implies that 
  the optimal bid satisfies $\bid(\rho\val, \dist\doubleprimed) = \rho\cdot \bid(\val, \dist\primed)$. 
Moreover, the payment satisfies 
\begin{align*}
\pricebid(\rho\bid, \dist\doubleprimed) &= 
\alpha \rho\bid\cdot  (1 - \dist\doubleprimed(\rho\bid))
+ \alpha\displaystyle\int_{0}^{\rho\bid}
t \dd \dist\doubleprimed(t)\\
&= \rho (\alpha \bid \cdot (1 - \dist\primed(\bid) 
+ \alpha\displaystyle\int_{0}^{\bid}
t \dd \dist\primed(t))
= \rho \cdot \pricebid(\bid, \dist\primed).
\end{align*}
By taking expectation over the valuation, 
the expected revenue is scaled by $\rho$ as well. 
\end{proof}

We finish this section by providing 
two simple monotonicity properties
of the \bidmech\
and defer other more complicated characterizations
required in our analysis to the later sections.

\begin{lemma}\label{lem:monotone payment for bid}
In the \bidmech, 
given any parameter $\scale$ and
distribution $\dist$,
the expected payment for bid $\bid$ is monotonically non-decreasing in $\bid$. 
\end{lemma}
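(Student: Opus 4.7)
The plan is to prove monotonicity directly from the definition of the payment in the sample-bid mechanism. For any realized sample $s$, the mechanism charges $\alpha \cdot \min\{b, s\}$, and the key observation is that the function $b \mapsto \min\{b, s\}$ is non-decreasing in $b$ for every fixed $s \geq 0$. Since the expected payment is
\[
\tilde{p}(b, F) = \alpha \cdot \expect[s \sim F]{\min\{b, s\}},
\]
taking expectations of a pointwise non-decreasing family of functions preserves monotonicity, so $\tilde{p}(\cdot, F)$ is non-decreasing. This pointwise-dominance argument requires no smoothness assumptions on $F$ and handles the boundary case $b < \lval$ uniformly.

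As an alternative (and as a sanity check that one could include in the writeup), I would verify the claim by direct computation. For $b \geq \lval$, one can expand
\[
\tilde{p}(b, F) = \alpha \, b \, (1 - F(b)) + \alpha \int_{\lval}^{b} t \, dF(t),
\]
and differentiating with respect to $b$ yields $\alpha(1 - F(b)) - \alpha b f(b) + \alpha b f(b) = \alpha (1 - F(b)) \geq 0$. For $b < \lval$ the payment reduces to $\alpha b$, whose derivative is $\alpha \geq 0$. Continuity of $\tilde{p}(\cdot, F)$ at $b = \lval$ then gives monotonicity on all of $\R_{\geq 0}$.

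There is no substantive obstacle here; this is essentially a one-line observation, and I would present the pointwise-dominance argument as the main proof and mention the derivative computation only if the paper prefers explicit formulas. The lemma is intended as a basic structural fact to be invoked later in the analysis of optimal bidding behavior, so brevity is appropriate.
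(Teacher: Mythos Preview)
Your proposal is correct. The paper's proof is exactly your ``alternative'' derivative computation: it writes $\tilde p(b,F)=\alpha b(1-F(b))+\alpha\int_{\lval}^{\max\{b,\lval\}} t\,dF(t)$ and differentiates to get $\alpha(1-F(b))\ge 0$. Your primary argument via pointwise monotonicity of $b\mapsto\min\{b,s\}$ is a slightly different route: it is more elementary and does not rely on the density assumption, whereas the paper's computation yields the explicit derivative (which is not used elsewhere anyway). Since you also include the derivative check, your writeup subsumes the paper's proof.
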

\begin{proof}
By definition, the expected payment $\pricebid(\bid, F)$ of bid $\bid$
over the randomness of the sample 
$\sample \sim F$ is 
\begin{align*}
    \pricebid(\bid, \dist) = 
    \alpha \bid \cdot (1 - F(\bid))
    + \alpha\displaystyle\int_{\lval}^{\max\{\bid,\lval\}} 
    t \dd F(t)
\end{align*}
Taking the derivative with respect to bid $\bid$,
we have 
\begin{align*}
    \frac{\partial \pricebid(\bid, \dist)}{\partial \bid} 
    =
    \alpha(1  - F(\bid)) - \alpha \bid f(\bid)
    +
    \alpha \bid f(\bid)
    =
    \alpha(1 - F(\bid)) \geq 0.
\end{align*}
which finishes the proof.
\end{proof}

\begin{lemma}\label{lem:monotone bid for value}
In the \bidmech, 
given any parameter $\scale$ and
distribution $\dist$,
the optimal bid $\bid(\val, F)$ is monotonically non-decreasing in value $\val$. 
\end{lemma}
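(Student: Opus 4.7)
The plan is to prove this by a standard revealed-preference (interchange) argument. The key observation is that in the \bidmech, the expected payment $\pricebid(\bid, F)$ depends only on the bid $\bid$ and the distribution $F$, not on the value $\val$. So the agent's expected utility has the separable form
\[
\util(\val, \bid, F) \;=\; \val \cdot F(\bid) \;-\; \pricebid(\bid, F),
\]
which is linear in $\val$ with slope $F(\bid)$. This single-crossing structure is exactly what drives monotonicity.

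Concretely, I would fix two values $\val_1 < \val_2$ and let $\bid_1 = \bid(\val_1, F)$, $\bid_2 = \bid(\val_2, F)$ be any of their respective optimal bids. Writing down the two optimality inequalities (value $\val_1$ weakly prefers $\bid_1$ over $\bid_2$, and value $\val_2$ weakly prefers $\bid_2$ over $\bid_1$) and adding them, the $\pricebid$ terms cancel and one obtains
\[
(\val_2 - \val_1)\bigl(F(\bid_2) - F(\bid_1)\bigr) \;\geq\; 0.
\]
Since $\val_2 > \val_1$, this forces $F(\bid_2) \geq F(\bid_1)$. By the assumption in \Cref{sec:prelim} that $F$ has positive density everywhere on its support, $F$ is strictly increasing there, hence $\bid_2 \geq \bid_1$, which is the desired monotonicity.

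The main subtlety, rather than an obstacle, is accommodating the tie-breaking and boundary cases flagged in \Cref{lem:FOC}: there may be multiple bids satisfying the first-order condition, and optimal bids can also lie in $\{0, \infty\}$. The interchange argument is robust to both: the two optimality inequalities only require that $\bid_1$ and $\bid_2$ be optima for their respective values, and never use the first-order condition or an interior assumption. Thus the conclusion $\bid_1 \leq \bid_2$ holds regardless of how ties are broken and regardless of whether the optima are interior or boundary, completing the proof.
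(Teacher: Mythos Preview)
Your proof is correct and follows the same logic as the paper's: the paper cites \citet{mye-81} for the fact that the equilibrium allocation $F(\bid(\val,F))$ is non-decreasing in $\val$ and then uses that the allocation is increasing in the bid, while your interchange argument simply unpacks that Myerson citation into a self-contained single-crossing proof. Both arguments arrive at $F(\bid_2)\geq F(\bid_1)$ and then use the strict monotonicity of $F$ on its support to conclude $\bid_2\geq \bid_1$.
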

\begin{proof}
By \citet{mye-81}, 
the equilibrium allocation of the agent is non-decreasing in value~$\val$. 
Moreover, given the auction format, 
the equilibrium allocation of the agent is increasing in the bid, 
and thus the optimal bid $\bid(\val, F)$ is non-decreasing in the value $\val$.
\end{proof}

\section{The Sample-bid Mechanism for MHR Distributions}
\label{sec:mhr} 

In this section,
we analyze the prior-independent
approximation ratio 
of the \bidmech\
over the class of MHR distributions.

\begin{definition}
A distribution $\dist$ is MHR if the hazard rate 
$\frac{\density(\val)}{1-\dist(\val)}$ is monotone non-decreasing in~$\val$.
\end{definition}

\begin{theorem}
\label{thm:mhr}
For the \bidmech\
with $\alpha = 0.824$,
the prior-independent approximation 
ratio over the class of MHR distributions
is between [1.295, 1.296].
\end{theorem}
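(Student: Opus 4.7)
The claim has two sides: an upper bound of $1.296$ on the prior-independent ratio of $\mech$ with $\scale=0.824$ over MHR distributions, and a matching lower bound of $1.295$. By the scaling invariance in \Cref{lem:scale} we may normalize so that $\Rev[\dist]{\OPT_\dist}=1$, reducing the upper bound to showing $\Rev[\dist]{\mech}\ge 1/1.296$ on every MHR~$\dist$. The starting point is a revenue-curve expression for the mechanism. Since $\expect[\sample\sim \dist]{\min(\bid,\sample)} = \int_0^\bid (1-\dist(t))\dd t$, we have
\begin{align*}
\Rev[\dist]{\mech}=\scale\cdot\expect[\val\sim \dist]{\int_0^{\bid(\val,\dist)}(1-\dist(t))\dd t}.
\end{align*}
A change of variables to quantile space expresses this as an integral whose integrand depends only on the revenue curve $\rev$ and on the map $\quant\mapsto\tilde\quant(\quant)$ taking an agent's quantile to the quantile of her optimal bid. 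By \Cref{lem:FOC} this map is pinned down by $\val=\scale/\hazard(\bid)$, which under MHR is monotone and so admits a well-defined inverse.

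\textbf{Revenue-curve reduction.} For the upper bound I would apply a revenue-curve reduction tailored to the non-truthful \bidmech, carefully disentangling the two effects flagged in the introduction: modifying $\rev$ changes both the payment for each bid and the agent's best-response bid. The intended reduction replaces an arbitrary MHR revenue curve with one drawn from a tractable low-dimensional subclass (for instance, parameterized by the monopoly quantile $\monopq$ and one additional shape parameter respecting MHR), in such a way that the normalized optimal revenue is preserved while the sample-bid revenue can only decrease. The key trick is to modify the curve piecewise, so that on the portion of $\rev$ governing payments one identifies the pointwise-worst shape directly, while on the portion governing bids one uses the MHR constraint to control how the bid shifts. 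This is the main obstacle of the proof: the usual revenue-curve reductions for truthful mechanisms may either violate MHR or move the bid in a direction favorable to the mechanism, so the modifications must be designed to dominate both effects simultaneously in the adversarial direction.

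\textbf{Numerical optimization and lower bound.} Once the reduction yields a closed-form family of worst-case MHR distributions, I compute the ratio $\Rev[\dist]{\OPT_\dist}/\Rev[\dist]{\mech}$ explicitly and maximize over the family with $\scale=0.824$ fixed; this yields the upper bound of $1.296$. For the matching lower bound of $1.295$, I exhibit a single explicit MHR distribution---likely a truncated exponential or a specific extremal member of the worst-case family---and plug it into the closed-form expressions to verify that its ratio is at least $1.295$. The match of the two bounds within $0.001$ is what confirms both the choice $\scale=0.824$ and the tightness of the reduction up to numerical error; getting the slack this small is why the piecewise modification in the reduction step must be executed with care.
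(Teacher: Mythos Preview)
Your proposal misses the key structural simplification that makes the MHR case tractable, and instead imports the heavier machinery that the paper reserves for the regular case.

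The crucial observation you do not make is this: for an MHR distribution the hazard rate $\density(\bid)/(1-\dist(\bid))$ is monotone non-decreasing, so by the first-order condition in \Cref{lem:FOC} the agent's utility is quasi-convex in~$\bid$. Hence the optimal bid is always at an extreme point, namely $0$ or $\infty$; and comparing utilities at these two bids shows the agent bids $\infty$ exactly when $\val\ge \scale\cdot\expect[\sample\sim\dist]{\sample}$. In other words, \emph{on MHR distributions the \bidmech\ is equivalent to posting the deterministic price $\scale\cdot\wel$ where $\wel=\expect{\val}$}. Once you have this, the whole ``disentangle payment change from bid change'' program you describe is unnecessary: there is no nontrivial bid map $\quant\mapsto\tilde\quant(\quant)$ to track, and the problem collapses to bounding the revenue of a single posted price $\scale\wel$ against the monopoly revenue. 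The paper then handles this with standard MHR quantile bounds (from \citet{AB-19}) and a two-case split on whether $\scale\wel$ lies above or below the monopoly reserve, followed by a straightforward numerical optimization over $\wel$ and $\monopq$.

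Your proposed revenue-curve reduction---modifying $\rev$ piecewise while controlling both payment and best-response effects---is exactly what the paper does in \Cref{sec:regular} for \emph{regular} distributions, where the bid is genuinely interior and no such collapse occurs. Applied to MHR it would not be wrong, but it is far more work than needed and would be very hard to push to the $0.001$ slack you are aiming for; the paper's regular-case argument only achieves a gap of roughly $0.2$. For the lower bound your instinct is right: the paper exhibits a truncated exponential (mass at a cutoff near $0.43$) and checks the ratio directly.
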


The lower bound in
\Cref{thm:mhr} is shown in the following example.
\begin{example}
\label{example:mhr}
For the \bidmech\ with $\alpha = 0.824$, 
let $\dist$ be the valuation distribution such that 
$\dist(\val) = 1-e^{-\val}$ 
for $\val \in [0, 0.43)$
and $\dist(\val) = 1$ for $\val \in [0.43,\infty)$.
It is easy to verify that $\dist$ is MHR.
Moreover, the optimal revenue is $0.2797$
while the expected revenue of the \bidmech,
which equals the expected revenue of posting a price equal to $0.824$ fraction of the expected welfare, 
is $0.2159$.
Thus, the prior-independent approximation ratio of the \bidmech\ with $\alpha = 0.824$ is at least $1.295$.
\end{example}

Before the proof of
the upper bound in \Cref{thm:mhr},
we first introduce a characterization
of the agent's optimal bid 
when the sample distribution $F$ is MHR;
and a technical property for MHR distributions.

\begin{lemma}\label{lem:mhr best response}
In the \bidmech, 
given any parameter $\alpha$ and 
MHR distribution $F$,
the optimal bid $\bid(\val, F)$ for 
the agent with value $\val$ is 
\begin{align*}
    \bid(\val, F) = 
    \left\{
    \begin{array}{ll}
        0 \qquad& \text{if } \val < \alpha \expect[\sample\sim F]{\sample},\\
        \infty &\text{otherwise.} 
    \end{array}
    \right.
\end{align*}
\end{lemma}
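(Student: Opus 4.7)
The plan is to use the MHR hypothesis to prove that the agent's expected utility $\util(\val,\bid,F)$ is a quasi-convex function of the bid $\bid$ (first weakly decreasing, then weakly increasing). This reduces the optimization over $[0,\infty)$ to a comparison between the two boundary values $\bid = 0$ and $\bid = \infty$, from which the stated threshold falls out immediately.

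For the quasi-convexity step, I would differentiate the closed-form utility stated just above the lemma; the two terms that arise from differentiating $\scale\bid(1-F(\bid))$ and the integral cancel, yielding
\begin{align*}
\partialx[\util(\val,\bid,F)]{\bid} = f(\bid)\left(\val - \scale\cdot\frac{1-F(\bid)}{f(\bid)}\right).
\end{align*}
By the MHR assumption the hazard rate $f(\bid)/(1-F(\bid))$ is non-decreasing, so its reciprocal $(1-F(\bid))/f(\bid)$ is non-increasing, which makes the parenthesized factor non-decreasing in $\bid$. Since $f(\bid)\ge 0$, the derivative can change sign at most once, from non-positive to non-negative. Hence $\util(\val,\cdot,F)$ is quasi-convex on $[0,\infty)$ and its maximum is attained at $\bid=0$ or in the limit $\bid\to\infty$.

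For the boundary comparison, at $\bid = 0$ the allocation probability $F(0)$ equals $0$ (by positive density on the support) and the payment is $0$, so $\util(\val,0,F)=0$. As $\bid\to\infty$ the per-realization payment $\scale\cdot\min\{\bid,\sample\}$ increases monotonically to $\scale\cdot\sample$, so by monotone convergence $\util(\val,\infty,F)=\val-\scale\expect[\sample\sim F]{\sample}$. Comparing these two values yields the stated dichotomy, with the tie at $\val = \scale\expect[\sample\sim F]{\sample}$ resolved via the utility-based convention from \Cref{lem:FOC}.

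I expect the only delicate point to be the quasi-convexity step: MHR is precisely what is needed to ensure that any interior stationary point of $\util(\val,\cdot,F)$ identified by \Cref{lem:FOC} is a \emph{minimum} of utility rather than a maximum, so the best response is always driven to one of the two boundaries. Regularity alone would not suffice, since $(1-F)/f$ need not be monotone, and one could then have multiple stationary points, including interior maxima.
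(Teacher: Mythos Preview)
Your proposal is correct and follows essentially the same approach as the paper: compute the derivative $f(\bid)\big(\val - \scale\,(1-F(\bid))/f(\bid)\big)$, use MHR to argue the sign changes at most once from non-positive to non-negative so that $\util(\val,\cdot,F)$ is quasi-convex, and then compare the two boundary values $\util(\val,0,F)=0$ and $\util(\val,\infty,F)=\val-\scale\,\expect[\sample\sim F]{\sample}$ to obtain the threshold. Your write-up is slightly more explicit (e.g.\ invoking monotone convergence for the limit and commenting on why regularity alone would not suffice), but the argument is the same.
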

\begin{proof}
By the proof of \Cref{lem:FOC}, 
the derivative of the utility given the bid $\bid$ is 
\begin{align*}
\density(\bid)\left(\val - \scale \cdot \frac{1-\dist(\bid)}{\density(\bid)}\right),
\end{align*}
where the sign of the above expression flips from negative to positive only once when the bid $b$ increases from $0$ to infinity 
since $F$ is MHR. 
Thus the utility is a quasi-convex function of the bid, 
which implies that the maximum utility is attained at extreme points, 
i.e., bid $0$ or~$\infty$. 
Note that the utility for bidding $0$ is always 0, 
while the utility for bidding~$\infty$ is 
$\util(\val, \infty, F) = \val - 
\alpha \expect[\sample\sim F]{\sample}$.
Hence,
the agent bid $\infty$ if and only her value $\val$ is at least $\alpha \expect[\sample\sim F]{\sample}$.
\end{proof}

\begin{lemma}[\citealp{AB-19}]
\label{lem:bound quantile for mhr}
For any MHR distribution with 
any pair of quantile and values $(\val_1,\quant_1), (\val_2, \quant_2)$ 
such that $\quant_1 = \quant(\val_1)\leq \quant_2 = \quant(\val_2)$ and 
$\val_1 \geq \val_2$. 
Then for any $\val \geq \val_2$, 
we have 
$\quant(\val) \geq \quant_2 \cdot e^{\frac{\val - \val_2}{\val_1 - \val_2} \cdot \ln (\frac{\quant_1}{\quant_2})}$. 
\end{lemma}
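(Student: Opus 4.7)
The plan is to prove the bound via the concavity of $\log \quant(\val)$, which is the geometric content of the MHR condition in quantile space. First, I would note that since $\quant(\val) = 1 - \dist(\val)$, we have $\ddx[\log \quant(\val)]{\val} = -\dens(\val)/(1 - \dist(\val))$, i.e., the negative of the hazard rate. The MHR assumption says the hazard rate is non-decreasing in $\val$, so this derivative is non-increasing, which means $\log \quant(\val)$ is concave in $\val$ on the support.

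Second, I would apply this concavity directly between the two anchor points $(\val_2, \log \quant_2)$ and $(\val_1, \log \quant_1)$. For $\val$ between $\val_2$ and $\val_1$, write $\val = (1-\lambda)\val_2 + \lambda \val_1$ with $\lambda = (\val - \val_2)/(\val_1 - \val_2) \in [0,1]$. Concavity of $\log \quant$ yields
\[
\log \quant(\val) \;\geq\; (1 - \lambda)\log \quant_2 + \lambda \log \quant_1 \;=\; \log \quant_2 + \frac{\val - \val_2}{\val_1 - \val_2}\cdot \log\frac{\quant_1}{\quant_2}.
\]
Exponentiating both sides gives precisely the stated bound $\quant(\val) \geq \quant_2 \cdot \exp\!\bigl(\tfrac{\val - \val_2}{\val_1 - \val_2} \ln(\quant_1/\quant_2)\bigr)$.

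The essential step is recognizing that MHR corresponds exactly to log-concavity of the survival function, after which the lemma reduces to the standard secant-line estimate: a concave function lies above its chord on the interval spanned by the chord's endpoints. I do not expect a serious obstacle once the concavity observation is in place; the only delicate point is that the clean chord-estimate argument is transparent on $[\val_2,\val_1]$, which is the regime in which the lemma is invoked in the subsequent revenue-curve analysis. The strictness of the inequality depends only on whether the hazard rate is strictly increasing between $\val_2$ and $\val_1$, and the pointmass / equality case is recovered by the exponential distribution (constant hazard).
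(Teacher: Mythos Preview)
Your argument via log-concavity of the survival function is correct and is the standard proof: the MHR condition is exactly the statement that $\val\mapsto\log\quant(\val)$ is concave, and the chord inequality then yields the bound on $[\val_2,\val_1]$. The paper does not supply its own proof of this lemma (it is quoted from \citet{AB-19}), so there is nothing further to compare against; your approach is the expected one.

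Your caveat about the range is not merely delicate but essential. The lemma as literally stated (``for any $\val\geq\val_2$'') is too strong: for $\val>\val_1$ a concave function lies \emph{below} the extended chord, and indeed for a strictly MHR distribution such as $\quant(\val)=e^{-\val^2}$ with $(\val_2,\quant_2)=(0,1)$ and $(\val_1,\quant_1)=(1,e^{-1})$ one has $\quant(\val)=e^{-\val^2}<e^{-\val}$ for $\val>1$, so the claimed inequality fails. You are right that every invocation in the paper (the proof of \Cref{lem:lower bound wel} and both cases in the proof of \Cref{thm:mhr}) applies the bound at some $\val\in[\val_2,\val_1]$, so your proof covers exactly what is needed.
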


\begin{lemma}\label{lem:lower bound wel}
The expected value for any MHR distribution with monopoly quantile $\monopq$ is 
$\wel \geq \frac{\monopq-1}{\monopq \cdot \ln \monopq}$. 
\end{lemma}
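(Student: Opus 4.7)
The plan is to lower-bound the quantile function $\quant(\cdot)$ pointwise on the interval $[0,\monop]$ via the MHR convexity captured by \Cref{lem:bound quantile for mhr}, and then integrate over that interval. Under the standard normalization (justified by \Cref{lem:scale}) that the monopoly revenue $\rev(\monopq)=\monopq\cdot\monop$ equals $1$, we have $\monop=1/\monopq$.

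The first step applies \Cref{lem:bound quantile for mhr} with the anchors $(\val_2,\quant_2)=(0,1)$ and $(\val_1,\quant_1)=(\monop,\monopq)$. The conclusion is the log-linear interpolation bound
\[
    \quant(\val)\;\geq\;\monopq^{\val/\monop}\qquad\text{for every }\val\in[0,\monop],
\]
which is equivalent to the convexity of $\val\mapsto -\ln\quant(\val)$ (its derivative is the non-decreasing hazard rate of an MHR distribution).

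The second step integrates, discarding the non-negative contribution on $[\monop,\infty)$:
\[
    \wel\;=\;\int_0^\infty \quant(\val)\,\dd\val\;\geq\;\int_0^\monop \monopq^{\val/\monop}\,\dd\val\;=\;\monop\cdot\frac{\monopq-1}{\ln\monopq}\;=\;\frac{\monopq-1}{\monopq\ln\monopq},
\]
where the last equality uses $\monop=1/\monopq$.

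I do not anticipate a substantial obstacle; the argument is a direct application of \Cref{lem:bound quantile for mhr} followed by a one-line integration. The only care point is distributions whose support lower endpoint $\lval$ is strictly positive: there, \Cref{lem:bound quantile for mhr} is anchored at $(\lval,1)$, and the integrated bound becomes $\lval+(\monop-\lval)\cdot(\monopq-1)/\ln\monopq$; since $(\monopq-1)/\ln\monopq\in(0,1]$ for $\monopq\in(0,1]$, this dominates $\monop\cdot(\monopq-1)/\ln\monopq$, so the same conclusion holds.
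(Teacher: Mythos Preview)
Your proof is correct and takes essentially the same approach as the paper: both apply \Cref{lem:bound quantile for mhr} with anchors $(\val_2,\quant_2)=(0,1)$ and $(\val_1,\quant_1)=(\monop,\monopq)=(1/\monopq,\monopq)$ to obtain the pointwise bound $\quant(\val)\ge e^{\val\monopq\ln\monopq}=\monopq^{\val/\monop}$, then integrate over $[0,\monop]$. Your additional remark handling $\lval>0$ is a nice robustness check that the paper omits, but the core argument is identical.
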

\begin{proof}
The expected value of the agent is 
\begin{align*}
\int_0^{\infty} \quant(\val) \dd\val
\geq \int_0^{\frac{1}{\monopq}} e^{\val\monopq\cdot\ln \monopq} \dd\val
= \frac{1}{\monopq\cdot\ln \monopq} (e^{\ln \monopq} -e^0)
= \frac{\monopq-1}{\monopq \cdot \ln \monopq},
\end{align*}
where the inequality holds by applying \Cref{lem:bound quantile for mhr} 
with $\quant_1 = \monopq, \val_1 = \frac{1}{\monopq}$
and $\quant_2 = 1, \val_2 = 0$.
\end{proof}

Now, we are ready to show \Cref{thm:mhr}.

\begin{proof}[Proof of
the upper bound in \Cref{thm:mhr}]
Fix any MHR distribution $F$.
Let $\wel \triangleq\expect[\val\sim F]{\val}$.
Note that by \Cref{lem:mhr best response}, 
our mechanism is equivalent to posting price $\scale\wel$
to the agent. 
Next we analyze the approximation ratio by considering the cases 
$\scale\wel \geq \monopr$ and $\scale\wel < \monopr$
and optimize the parameter $\scale$
such that the approximation ratio of both cases coincide. 
Recall that it is without loss of generality to normalize the expected revenue of the optimal mechanism to 1, 
i.e., $\monopq\cdot\monopr = 1$. 

First we consider the case when $\scale\wel < \monopr = \sfrac{1}{\monopq}$. 
By \Cref{lem:lower bound wel}, we have $\wel \geq \frac{\monopq-1}{\monopq \cdot \ln \monopq}$
and by combining \Cref{lem:bound quantile for mhr}
with $(\val_1,\quant_1)=(\monopr, \monopq)$ and $(\val_2, \quant_2)=(0,1)$, 
we have 
$\quant(\scale\wel) \geq e^{\scale(\monopq-1)}$. 
Thus, the expected revenue in this case is 
$$\scale\wel \cdot \quant(\scale\wel)
\geq \frac{\scale(\monopq-1)}{\monopq \cdot \ln \monopq} \cdot e^{\scale(\monopq-1)}. $$

Then we consider the case when $\scale\wel \geq \monopr = \sfrac{1}{\monopq}$. 
In this case, combining \Cref{lem:bound quantile for mhr}
with $(\val_1,\quant_1)=(\wel, \quant_{\wel})$ and $(\val_2, \quant_2)=(\monopr, \monopq)$, 
where $\quant_{\wel} \geq \sfrac{1}{e}$ is the quantile of the welfare \citep[see][]{BM-65}, 
for any value $\val\geq \monopr$, 
we have 
$\quant(\scale\wel) \geq \monopq \cdot e^{\frac{\scale\wel - \monopr}{\wel - \monopr} \cdot \ln (\frac{\quant_{\wel}}{\monopq})}$.
Thus the expected revenue is 
\begin{align*}
\scale\wel\cdot\quant(\scale\wel) 
\geq \scale\wel\cdot\monopq \cdot e^{\frac{\scale\wel - \monopr}{\wel - \monopr} \cdot \ln (\frac{\quant_{\wel}}{\monopq})}
\geq \scale\wel\cdot\monopq \cdot e^{\frac{\scale\wel - \sfrac{1}{\monopq}}{\wel - \sfrac{1}{\monopq}} \cdot \ln (\frac{1}{e\cdot \monopq})}.
\end{align*}
By setting $\scale = 0.824$ 
and numerically evaluating the above expressions for all possible values of $\wel$ and $\monopq$ with respective to the given constraints,
we have that the expected revenue in both cases are at least $0.7717$, 
which guarantees approximation ratio $1.296$. 
\end{proof}

\section{The Sample-bid Mechanism for Regular Distributions}
\label{sec:regular} 

In this section, we analyze 
the prior-independent approximation 
of the \bidmech\
over the class of regular distributions.

\begin{theorem}
\label{thm:regular}
For the \bidmech\
with $\alpha = 0.7$,
the prior-independent approximation
ratio over the class of regular 
distributions is between
$[1.628, 1.835]$.
\end{theorem}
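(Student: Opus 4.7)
The plan is to establish Theorem 5.1 in two independent parts: a lower bound of 1.628 via an explicit worst-case distribution, and an upper bound of 1.835 via a revenue curve reduction tailored to the \bidmech.

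For the lower bound, I would exhibit a regular distribution $F$ and directly compute $\Rev[F]{\OPT_F}$ and the revenue of the \bidmech\ at $\alpha = 0.7$. Since the worst-case priors for single-sample prior-independent revenue problems are typically equal-revenue (triangle revenue curve) distributions, possibly truncated, I would parameterize over a one- or two-parameter family of such distributions and numerically optimize the truncation point (and any atom) to push the ratio above 1.628. This mirrors the role of \Cref{example:mhr} in the MHR case, but with a distribution whose revenue curve has a constant (or near-constant) tail rather than an MHR-style exponential tail.

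For the upper bound, the strategy is revenue curve reduction in the sense outlined in the introduction. By \Cref{lem:scale} I can normalize the instance so that the peak revenue $\rev(\monopq) = 1$. Using the first-order characterization in \Cref{lem:FOC}, I would rewrite the \bidmech's expected revenue as an integral of $\pricebid(\bid(\val, F), F)$ against $F$, re-expressed in quantile space using the concavity of the revenue curve for regular $F$. The key structural claim is that among concave revenue curves with fixed monopoly pair $(\monopq, 1)$, the approximation ratio is maximized by a piecewise-linear ``triangle'' curve, linear from $(0,0)$ up to $(\monopq, 1)$ and linear from $(\monopq, 1)$ down to some $(1, c)$ with $c \in [0, 1]$. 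Once reduced to this two-parameter family, the equilibrium bid $\bid(\val, F)$ is tractable, the payment $\pricebid(\bid, F) = \alpha \bid q(\bid) + \alpha \int_0^\bid t\,dF(t)$ has a closed form on each linear piece, and the ratio to the optimum $\rev(\monopq)$ can be bounded by a finite-dimensional numerical optimization over $(\monopq, c)$; verifying the bound $1.835$ at $\alpha = 0.7$ is then a calculation.

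The hard part, and the source of most of the looseness between 1.628 and 1.835, will be executing the reduction to triangle curves. Unlike truthful mechanisms, where a modification to the revenue curve only moves the payment schedule (values are the bids), here modifying the curve also moves the agent's optimal bid $\bid(\val, F)$ through the FOC in \Cref{lem:FOC}, and these two effects must be disentangled. My plan is to first hold the bidding map fixed and bound the change in expected payment under a monotone curve modification, then bound the additional change caused by the induced shift of the equilibrium bid; the monotonicity facts in \Cref{lem:monotone payment for bid} and \Cref{lem:monotone bid for value} should make each of these one-sided inequalities easier. A secondary difficulty, absent in the MHR case, is that the FOC can admit interior solutions (the convenient dichotomy of \Cref{lem:mhr best response} no longer holds); I therefore expect to split into cases according to whether the equilibrium bid of a typical type lies below or above the monopoly reserve $\monopr$, and to analyze the regime above $\monopr$ using concavity of $\rev$ and the regime below $\monopr$ using a direct comparison to posting the sample itself in the spirit of \citet{DRY-15}.
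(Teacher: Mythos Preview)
Your lower-bound plan is fine and in the same spirit as the paper: \Cref{example:regular} exhibits the shifted Pareto $1-F(v)=0.265/(v-0.735)$ on $[1,\infty)$, whose revenue curve is the line $R(q)=0.265+0.735q$, and computes the ratio directly. Note already that this worst-case curve has $R(0)=0.265>0$, so it does not lie in your ``triangle through the origin'' family; this is a warning sign for the upper bound.

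The upper-bound plan has a real gap. Your key structural claim---that the worst concave curve with fixed peak $(\monopq,1)$ is a triangle from $(0,0)$ to $(\monopq,1)$ to $(1,c)$---is neither proved nor used in the paper, and the paper's own reductions land on curves with $R(0)>0$ (the family $\rev_4$ in \Cref{fig:regular large large rev4} is linear from $(0,r_0)$ to $(\monopq,1)$ and then \emph{constant at $1$} on $[\monopq,1]$). The difficulty you correctly flag---that modifying $R$ moves both the payment schedule and the equilibrium bid---is precisely why a one-shot triangle reduction does not go through; your ``hold the bid map fixed, then bound the shift'' outline is too coarse to control both effects, and the comparison ``in the spirit of \citet{DRY-15}'' for bids below $\monop$ has no analogue here because the \bidmech\ charges the agent even when she loses.

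What the paper actually does is substantially more intricate. It introduces the threshold $\criticalval(\rev)=\inf\{v:b(v,\rev)\ge\monop(\rev)\}$ and splits at $\monopq=0.62$. For $\monopq\ge0.62$ the crucial step is \Cref{lem:regular large opt bid}, which shows the optimal bid lies in $\{0\}\cup[\monop,\infty)$; this $0$/high dichotomy is the device that tames the bid-map shift, and its proof relies on the one-sided preference-preservation \Cref{lem:revenue monotone} (the precise tool replacing your vague monotone-modification step). With the dichotomy in hand, a four-stage flattening (\Cref{lem:regular large large}) reduces to the two-parameter family $\rev_4(r_0,\monopq)$ and lower-bounds revenue by $\price(\criticalval(\rev_4),\rev_4)\cdot\criticalquant(\rev_4)$, checked numerically. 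For $\monopq\le0.62$ the dichotomy fails and no closed-form reduction is attempted: instead the paper bounds $\pricebid(\monop/\alpha,\rev)$, $\pricebid(\monop,\rev)$, $\criticalquant(\rev)$, and the below-$\criticalquant$ contribution separately as functions of three parameters $(\monopq,\,q(\monop/\alpha),\,w)$ via \Cref{lem:rev for bid,lem:quantile for val,lem:critical val for high bid,lem:lower bound bid for large quant} (the last reducing to ``pentagon'' curves), with a further three-way case split on the location of $\criticalval$ relative to $\monop$ and $\monop/\alpha$.
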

The lower bound in \Cref{thm:regular}
is shown in the following example.
\begin{example}
\label{example:regular}
For the \bidmech\ with $\alpha = 0.7$, 
let $\dist$ be the valuation distribution such that 
$\dist(\val) = \frac{0.265}{\val-0.735}$ 
for $\val \in [1, \infty)$.
It is easy to verify that $\dist$ is regular.
Moreover, the optimal revenue is $1$
while the expected revenue of the \bidmech
is $0.614$.
Thus, the prior-independent approximation ratio of the \bidmech\ with $\alpha = 0.7$ is at least $1.628$.
\end{example}

In \Cref{sec:general properties},
we introduce some technical 
characterizations of the \bidmech\ 
which will be used in the subsequent analysis.
In \Cref{sec:regular large,sec:regular small},
we study the prior-independent approximation ratio 
of the \bidmech\ 
over the class of regular distributions 
with monopoly quantile $\monoq \geq 0.62$
and 
$\monoq \leq 0.62$ respectively.
By \Cref{lem:scale},
without loss of generality, we 
restrict our attention to the class of regular
valuation distributions
where the optimal revenue for the distributions is exactly 1
(i.e., $\monop\cdot \monoq = 1$),
and then lower-bound the expected revenue 
of the \bidmech\ with $\alpha = 0.7$.

Here we sketch the high-level approach 
to lower-bound the expected revenue 
of the \bidmech\ 
in 
both regimes (\Cref{sec:regular large,sec:regular small}).
Given a regular distribution~$F$,
we define
a value threshold $\criticalval(F)$ 
as 
the smallest value whose optimal bid
is at least monopoly reserve $\monop(F)$,
i.e.,
\begin{align*}
    \criticalval(F) \triangleq\inf\{\val:\bid(\val, F) \geq \monop(F)\}
\end{align*}
Denote $\quant(\criticalval(F), F)$ by $\criticalquant(F)$.
By 
\Cref{lem:monotone payment for bid}
and 
\Cref{lem:monotone bid for value},
the expected revenue $\Revsb{F}$ of the \bidmech\ ${\rm SB}$
for valuation $F$
can be lower-bounded as follows,
\begin{align*}
    \Revsb{F} = \int_0^1 \priceval(\val(\quant, F), F)\,d\quant
    \geq \priceval(\criticalval(F), F)\cdot \criticalquant(F) 
    +
    \int_{\criticalquant(F)}^1 \priceval(\val(\quant, F), F)\,d\quant.
\end{align*}
where $\priceval(\val, F)$
is the expected payment of the agent,
with value $\val$ and valuation distribution $F$,
in the \bidmech.
We then analyze $\priceval(\criticalval(F), F)$,
$\criticalquant(F)$, and $\priceval(\val(\quant, F), F)$
for $\quant \geq \criticalquant(F)$
by providing lower bounds as the functions
of $\monoq(F)$
and other some parameters of $F$.\footnote{Let $\rev$ be the revenue curve 
induced by valuation distribution $F$.
In \Cref{sec:regular large},
we lower-bound the expected revenue as a function of $\monoq(F)$
and $\rev(0)$.
In \Cref{sec:regular large},
we lower-bound the expected revenue as a function of $\monoq(F)$,
$\quant(\sfrac{\monop(F)}{0.7}, F)$
and $\midwelfare \triangleq \int_{\quant(\sfrac{\val}{0.7}, F)}^{\monoq(F)}\frac{\rev(\quant)}{\quant}\,d\quant$.
}
Finally, by numerically evaluating the value of lower bounds
for all possible possible parameters,
we conclude that the expected revenue in the \bidmech\ 
for all regular distribution (with monopoly revenue 1) is 
at least 0.545, which implies the prior-independent approximation
ratio $\sfrac{1}{0.545}\approx 1.835$ of the \bidmech\ in \Cref{thm:regular}.
The details for discretizations and numerical evaluations
can be found in \Cref{sec:numerical}. 
Note that the bounds for the approximation ratio of
the sample-based pricing mechanisms in \citet{AB-19}
are also obtained by numerical analysis, 
which requires solving a relatively more complicated dynamic program. 
In contrast, our numerical analysis only requires brute force enumeration of a few parameters.

As we discussed in \Cref{sec:prelim},
every valuation distribution $F$
can be represented by its induced 
revenue curve $\rev$ where 
$\rev(\quant) \triangleq \quant\, F^{-1}(1-\quant)$ for all $\quant\in[0, 1]$.
In the remaining of the section, 
all statements, notations and 
analysis (except \Cref{lem:preference towards v/alpha}) will be presented 
in the language of revenue curves
instead of valuation distributions.

\subsection{Technical Properties of 
the Sample-bid Mechanism}
\label{sec:general properties}
In this subsection, we introduce some technical characterizations of 
the \bidmech\ 
which will be used in the later analysis.

To establish a lower bound on the expected revenue of 
of a truthful mechanism, a classic approach 
--
revenue curve reduction
--
\citep*[e.g.][]{AHNPY-18,AB-18}
is as follows:
(i) start with an arbitrary revenue curve $\rev_1$, 
(ii) convert it to another revenue $\rev_2$ with closed-form formula while the optimal revenue remains the same,
(iii) argue that the expected revenue for $\rev_2$ is at most the expected revenue for $\rev_1$ while the optimal revenue remains the same,
and finally 
(iv) evaluate the expected revenue for $\rev_2$ for all possible 
parameters.
In this section, we want to apply a similar approach to 
the \bidmech\ because it is a non-truthful mechanism.
A new technical difficulty arises in step (iii). 
When comparing $\rev_1$ and $\rev_2$,
for truthful mechanisms, it is sufficient to study 
the change in the expected payment 
(i.e.\ $\pricebid(\bid, \rev_1)$ and $\pricebid(\bid, \rev_2)$)
for each bid $\bid$.
However, for non-truthful mechanisms (e.g.\ \bidmech),
the optimal bid of the agent changes when the 
revenue curve $\rev_1$ is replaced by $\rev_2$.
In \Cref{lem:revenue monotone},
we provide a characterization of optimal bid
when we switch from $\rev_1$ to $\rev_2$ in a specific way
(illustrated in \Cref{fig:revenue monotone}).
We use it as a building block repeatedly in \Cref{sec:regular large}
and \Cref{sec:regular small}.
Intuitively, the following lemma characterizes the phenomenon 
that increasing the revenue curve for high values does not affect the agent's preference for low bids.

\begin{lemma}\label{lem:revenue monotone}
In the \bidmech,
consider any quantile $\quant\primed\in[0, 1]$ and any pair of revenue curves $\rev_1, \rev_2$ such that  
$\rev_1(\quant) \leq \rev_2(\quant)$ for any quantile $\quant \leq \quant\primed$
and $\rev_1(\quant\primed) = \rev_2(\quant\primed)$.
Letting $\bid\primed =
\sfrac{\rev_1(\quant\primed)}{\quant\primed}$. 
For any value $\val$ and any bid
$\bid\doubleprimed \geq \bid\primed$, 
if an agent with value $\val$ and revenue curve $\rev_1$
prefers bid $\bid\primed$ than $\bid\doubleprimed$,
i.e., $\util(\val, \bid\primed, \rev_1) \geq 
\util(\val, \bid\doubleprimed, \rev_1)$, 
then 
an agent with value $\val$ and revenue curve $\rev_2$
also prefers bid $\bid\primed$ than $\bid\doubleprimed$,
i.e., $\util(\val, \bid\primed, \rev_2) \geq
\util(\val, \bid\doubleprimed, \rev_2)$. 
\end{lemma}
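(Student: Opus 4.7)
The plan is to reduce the utility comparison to a pointwise first-order stochastic dominance statement on the CDFs $F_1, F_2$ induced by $\rev_1, \rev_2$. The key observation is that the hypothesis $\rev_1 \leq \rev_2$ on $[0, \quant\primed]$ with equality at $\quant\primed$ translates, via $\val(\quant, F) = \rev(\quant, F)/\quant$ and inversion of the monotone function $\val(\cdot, F)$, into $F_2(t) \leq F_1(t)$ for all $t \geq \bid\primed$, with $F_1(\bid\primed) = F_2(\bid\primed) = 1-\quant\primed$.

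First I would apply integration by parts to the payment term to rewrite the utility in the transparent form
\begin{align*}
    \util(\val, \bid, F) = \val\, F(\bid) \;-\; \scale\int_{\lval}^{\bid}(1-F(t))\,dt \;-\; \scale\lval
\end{align*}
for $\bid \geq \lval$. The constant $\scale\lval$ is $\bid$-independent, so comparing utilities at two bids within one distribution yields
\begin{align*}
    \util(\val, \bid\doubleprimed, F) - \util(\val, \bid\primed, F) = \val\,[F(\bid\doubleprimed) - F(\bid\primed)] \;-\; \scale\int_{\bid\primed}^{\bid\doubleprimed}(1-F(t))\,dt,
\end{align*}
which is nonpositive under $F_1$ by hypothesis.

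Next I would show that replacing $F_1$ with $F_2$ can only make this difference smaller. Since $F_1(\bid\primed) = F_2(\bid\primed)$ but $F_2(\bid\doubleprimed) \leq F_1(\bid\doubleprimed)$, the first (allocation-gain) term weakly decreases; since $1-F_2 \geq 1-F_1$ pointwise on $[\bid\primed, \bid\doubleprimed]$, the subtracted tail integral weakly increases. Combining these two comparisons yields
\begin{align*}
    \util(\val, \bid\doubleprimed, \rev_2) - \util(\val, \bid\primed, \rev_2) \;\leq\; \util(\val, \bid\doubleprimed, \rev_1) - \util(\val, \bid\primed, \rev_1) \;\leq\; 0,
\end{align*}
which is exactly the claim.

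I do not expect a real obstacle: the only substantive moves are the integration-by-parts rewrite and the short revenue-curve-to-CDF translation, both of which are routine. The lemma is intuitively saying that shifting probability mass toward higher values (while pinning the value at quantile $\quant\primed$) makes raising the bid above $\bid\primed$ less attractive, because the marginal allocation gain is weakly smaller while the marginal expected payment is weakly larger.
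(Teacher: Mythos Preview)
Your proof is correct and follows the same decomposition as the paper: both arguments compare the allocation increment and the payment increment separately when moving from $\bid\primed$ to $\bid\doubleprimed$, and show each moves in the unfavorable direction when $\rev_1$ is replaced by $\rev_2$. The only cosmetic difference is the coordinate system: the paper stays in quantile space and bounds $\pricebid(\bid\doubleprimed,\rev)-\pricebid(\bid\primed,\rev)=\alpha\bigl(-\bid\primed\quant\primed+\int_0^{\quant\primed}\min\{\rev(\quant)/\quant,\bid\doubleprimed\}\,d\quant\bigr)$ directly, whereas you first pass to value space via integration by parts and phrase the comparison as first-order stochastic dominance of $F_2$ over $F_1$ on $[\bid\primed,\infty)$; these are change-of-variable equivalents of one another.
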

\begin{proof}
By the construction of our mechanism, the utility 
of an agent who has value $\val$, revenue curve $\rev$
and bids $\bid$ is 
\begin{align*}
\util(\val, \bid, \rev) &= \val \cdot (1-\quant(\bid, \rev))
- \pricebid(\bid,\rev)
\intertext{
and 
}
\pricebid(\bid, \val) & = \alpha\bid\cdot \quant(\bid, \rev)
+ \alpha \int_{\quant(\bid, \rev)}^1
\frac{\rev(\quant)}{\quant}d\quant.
\end{align*}
By the assumption that 
$\rev_1(\quant) \leq \rev_2(\quant)$ for any quantile $\quant \leq \quant\primed$
and $\bid\doubleprimed\geq\bid\primed$,
we have 
$\quant(\bid\doubleprimed, \rev_1) 
\leq \quant(\bid\doubleprimed, \rev_2)\leq \quant\primed$.
See \Cref{fig:revenue monotone}
for a graphical illustration.
Thus, 
\begin{align*}
    &\pricebid(\bid\doubleprimed,\rev_1) 
    - \pricebid(\bid\primed,\rev_1)
= \scale \cdot 
\left(- \bid\primed \cdot \quant\primed
+ \int_{0}^{\quant\primed} \min\left\{\frac{\rev_1(\quant)}{\quant}, \bid\doubleprimed\right\} \ d\quant\right) \\
\leq\ & \scale \cdot \left(- \bid\primed \cdot \quant\primed
+ \int_{0}^{\quant\primed} 
\min\left\{\frac{\rev_2(\quant)}{\quant},
\bid\doubleprimed\right\} 
\ d\quant\right)
= \pricebid(\bid\doubleprimed,\rev_2) - \pricebid(\bid\primed,\rev_2).
\end{align*}
Thus,
\begin{align*}
&\util(\bid\primed, \val, \rev_1) - 
\util(\bid\doubleprimed, \val, \rev_1) 
= \val \cdot (1-\quant\primed) 
- \pricebid(\bid\primed,\rev_1)
- \val \cdot (1-\quant(\bid\doubleprimed,\rev_1)) 
+ \pricebid(\bid\doubleprimed,\rev_1)\\
\leq\ & \val \cdot (1-\quant\primed) -
\pricebid(\bid\primed,\rev_2)
- \val \cdot (1-\quant(\bid\doubleprimed,\rev_2)) + \pricebid(\bid\doubleprimed,\rev_2)
= \util(\bid\primed, \val, \rev_2) - \util(\bid\doubleprimed, \val, \rev_2)
\end{align*}
and hence $\util(\bid\primed, \val, \rev_1) 
\geq \util(\bid\doubleprimed, \val, \rev_1)$
implies $\util(\bid\primed, \val, \rev_2) 
\geq \util(\bid\doubleprimed, \val, \rev_2)$. 
\end{proof}

\begin{figure}
    \centering
    \begin{tikzpicture}[scale = 0.85]

\draw [white] (0, 0) -- (11.5, 0);
\draw (0,0) -- (10.5, 0);
\draw (0, 0) -- (0, 5.5);

\draw [dotted] (0,4.43) -- (6.4, 4.43);
\draw (-0.3, 5) node {$1$};

\draw [thick] plot [smooth, tension=0.8] coordinates {
(0, 0) (2,4) (6, 3.5) (10, 0)
};

\draw [line width=2.5pt, dashed, color=white!70!black] plot [smooth, tension=0.8] coordinates {
(0, 0) (2, 3) (7,4.22) (10, 0)
};

\draw (0, -0.5) node {$0$};
\draw (10, -0.5) node {$1$};

\draw (4.45, -0.5) node {$\quant\primed$};
\draw [dotted] (4.45, 0) -- (4.45, 4.1);
\draw (4.45, 0) -- (4.45, 0.2);




\draw[dotted] (0, 0) -- (5.6, 5.3);
\draw (5.8, 5.5) node {$\bid\primed$};

\draw [dotted] (0, 0) -- (3, 6);
\draw (3.3, 6.2) node {$\bid\doubleprimed$};

\draw [dotted] (1.15, 0) -- (1.15, 2.2);
\draw (1.15, 0) -- (1.15, 0.2);
\draw (1.65, -1.1) node [rotate=-50]
{$\quant(\bid\doubleprimed,\rev_1)$};

\draw [dotted] (2.03, 0) -- (2.03, 4.);
\draw (2.03, 0) -- (2.03, 0.2);
\draw (2.53, -1.1) node [rotate=-50]
{$\quant(\bid\doubleprimed,\rev_2)$};

\draw (3.2, 4.7) node {$\rev_2$};
\draw (6.4, 4.7) node {$\rev_1$};
\end{tikzpicture}
    \caption{Graphical illustration for \Cref{lem:revenue monotone}. 
    The gray dashed thick (resp. black solid) curve is 
  revenue curve $\rev_1$ (resp.\ $\rev_2$).
    The slopes of two dotted lines from $(0, 0)$
    are $\bid\doubleprimed$ and 
    $\bid\primed$ respectively.}
    \label{fig:revenue monotone}
\end{figure}

\begin{lemma}
\label{lem:preference towards v/alpha}
In the \bidmech\ with any parameter $\alpha\in [0, 1]$,
for an agent with concave revenue curve $\rev$ and 
value $\val$ greater than the monopoly reserve $\monop$,
she weakly prefers the bid $\sfrac{\val}{\alpha}$
than any bid $\bid\primed \in [\monop, \sfrac{\val}{\alpha}]$,
i.e., $ \util(\val, \sfrac{\val}{\alpha}, \rev) \geq 
    \util(\val, \bid\primed, \rev)$.
\end{lemma}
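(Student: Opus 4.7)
The plan is to show that the utility function $\util(\val, \cdot, \rev)$ is non-decreasing on the interval $[\monop, \sfrac{\val}{\alpha}]$, which immediately yields the desired inequality by monotonicity. This reduces the problem to a derivative computation combined with the standard connection between concavity of the revenue curve and non-negativity of the virtual value above the monopoly reserve.

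First, I would write
\begin{align*}
\util(\val, \bid, \rev) = \val\cdot(1-\quant(\bid,\rev)) - \pricebid(\bid, \rev),
\end{align*}
and differentiate with respect to $\bid$. Using the identity $\partial_{\bid}\pricebid(\bid,\rev) = \alpha\,\quant(\bid,\rev)$, which was established in the proof of \Cref{lem:monotone payment for bid}, together with $\partial_{\bid}\quant(\bid,\rev) = -\density(\bid)$, gives
\begin{align*}
\frac{\partial \util(\val,\bid,\rev)}{\partial \bid} = \val\cdot\density(\bid) - \alpha\cdot\quant(\bid,\rev).
\end{align*}

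Second, I would invoke concavity of $\rev$: since $\rev$ is concave and $\monoq$ is its maximizer, for every $\bid \geq \monop$ the quantile $\quant(\bid,\rev) \leq \monoq$, so $\rev'(\quant(\bid,\rev))\geq 0$. Translating this back to the value space via $\virtual(\bid) = \rev'(\quant(\bid,\rev))$ yields $\virtual(\bid)\geq 0$, equivalently
\begin{align*}
\quant(\bid,\rev) \leq \bid\cdot\density(\bid) \qquad \text{for all } \bid \geq \monop.
\end{align*}

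Third, combining the two ingredients: for any $\bid \in [\monop, \sfrac{\val}{\alpha}]$, we have $\alpha\bid \leq \val$, so
\begin{align*}
\alpha\cdot\quant(\bid,\rev) \leq \alpha\bid\cdot\density(\bid) \leq \val\cdot\density(\bid),
\end{align*}
which shows the derivative above is non-negative throughout $[\monop, \sfrac{\val}{\alpha}]$. Integrating from $\bid\primed$ to $\sfrac{\val}{\alpha}$ gives $\util(\val, \sfrac{\val}{\alpha}, \rev) \geq \util(\val, \bid\primed, \rev)$, as required.

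I do not foresee a real obstacle; the argument is essentially a one-line first-order check once the formula for $\partial_{\bid}\pricebid$ is imported from \Cref{lem:monotone payment for bid}. The only point that needs care is the identification of $\virtual(\bid)\geq 0$ for $\bid \geq \monop$ from concavity of $\rev$, which is standard but should be stated explicitly since the rest of the section is written in the language of revenue curves rather than virtual values.
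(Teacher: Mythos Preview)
Your proposal is correct and follows essentially the same approach as the paper: compute the derivative of the utility with respect to the bid, then use that for $\bid\in[\monop,\sfrac{\val}{\alpha}]$ one has both $\alpha\bid\leq \val$ and $\virtual(\bid)\geq 0$ (equivalently $\quant(\bid,\rev)\leq \bid\,\density(\bid)$) to conclude the derivative is nonnegative on that interval. The only cosmetic difference is that the paper writes the argument in distribution language ($F$, $f$, and the inequality $t-\tfrac{1-F(t)}{f(t)}\geq 0$ for $t\geq \monop$) and integrates the derivative explicitly, whereas you phrase it in revenue-curve language and invoke monotonicity; the substance is identical.
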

\begin{proof}
Let $F$ be a regular distribution.
By the definition, the utility 
of the agent who has value $\val$,
valuation distribution $F$ and bids $\bid$ is 
\begin{align*}
    \util(\val, \bid, F) &= \val \cdot F(\bid)
- \pricebid(\bid,F)
\intertext{By considering the first order condition as in \Cref{lem:FOC},
we have}
\frac{\partial \util(\val, \bid, F)}
{\partial \bid} &=
\density(\bid)\left(\val - \scale \cdot \frac{1-\dist(\bid)}{\density(\bid)}\right).
\end{align*}
Thus, we can compute the difference between
$\util(\val, \sfrac{\val}{\alpha}, F)$
and $\util(\val, \bid, F)$ 
for any value $\val \geq \monop$
and bid $\bid \in [\monop, \sfrac{\val}{\alpha}]$
as follows,
\begin{align*}
    \util(\val, \sfrac{\val}{\alpha}, F) - 
    \util(\val, \bid, F) 
    &= 
    \int_{\bid}^{\frac{\val}{\alpha}}
    \alpha{f(t)}\left(\frac{\val}{\alpha} - \frac{1-F(t)}{f(t)}\right)\, dt \\
    &\geq 
    \int_{\bid}^{\frac{\val}{\alpha}}
    \alpha{f(t)}\left(t - \frac{1-F(t)}{f(t)}\right)\, dt \\
    &\geq 0
\end{align*}
where the last inequality uses the fact that 
$t - \frac{1-F(t)}{f(t)} \geq 0$
for all $t\geq \monop$ if $F$ is regular.
\end{proof}

\subsection{Regular Distributions
with Monopoly Quantile $\monoq \geq \qthreshold$}
\label{sec:regular large}
In this subsection,
we analyze the approximation ratio 
of the \bidmech\ over the class 
of regular distributions with monopoly quantile $\monoq \geq \qthreshold$.
\begin{lemma}
\label{lem:regular large}
For the \bidmech\
with $\alpha = \alphaval$,
the approximation
ratio over the class of regular 
distributions with monopoly quantile $\monoq \geq \qthreshold$
is 
at most 1.835.
\end{lemma}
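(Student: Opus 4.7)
The plan is to execute the revenue-curve reduction blueprint sketched at the start of the section: fix a regular $\rev$ with $\monoq \geq \qthreshold$, normalized so that $\monop\monoq = 1$, and show $\Revsb{F} \geq 0.545$. Following the decomposition in the section introduction,
\[
\Revsb{F} \;\geq\; \priceval(\criticalval, F)\,\criticalquant \;+\; \int_{\criticalquant}^{1} \priceval(\val(\quant), F)\, d\quant,
\]
where the first term bounds the high-value contribution (bids above $\monop$) using the payment monotonicity of \Cref{lem:monotone payment for bid} and \Cref{lem:monotone bid for value}, and the second term is the low-value contribution. For the high-value term, since $\bid(\criticalval, F) = \monop$ by definition of $\criticalval$, the payment formula derived in the proof of \Cref{lem:revenue monotone} gives
\[
\priceval(\criticalval, F) \;=\; \alpha\monop\monoq + \alpha\!\int_{\monoq}^{1}\!\tfrac{\rev(\quant)}{\quant}\,d\quant \;=\; \alpha + \alpha\!\int_{\monoq}^{1}\!\tfrac{\rev(\quant)}{\quant}\,d\quant,
\]
which is monotone in the restriction of $\rev$ to $[\monoq, 1]$. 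Concavity of $\rev$ together with $\rev(\monoq) = 1$ and $\rev(1) \geq 0$ then yields a closed-form lower bound on this term in terms of $\monoq$ alone.

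The heart of the argument is controlling $\criticalquant$ and the low-value integral. For $\criticalquant$, I invoke the first-order condition of \Cref{lem:FOC} at the threshold $\bid(\criticalval, F) = \monop$, which together with regularity and $\monop\monoq = 1$ gives a lower bound on $\criticalquant$ as a function of $\monoq$ and the slope of $\rev$ at $\monoq^+$. For the low-value integral, the main tool is \Cref{lem:revenue monotone}: because raising $\rev$ on $[0, \criticalquant]$ only reinforces the agent's preference to stay at bids below $\monop$ (the threshold bid lies above $\sfrac{\rev(\criticalquant)}{\criticalquant}$ on that range), I may replace $\rev$ on $[0, \criticalquant]$ by the maximal concave curve passing through $(\criticalquant, \rev(\criticalquant))$ and $(\monoq, 1)$ without violating any of the inequalities needed for the reduction. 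This collapses the worst case to a triangular-type family parameterized by $\monoq$ and $\rev(0)$, on which the remaining integral admits a closed-form expression.

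Combining these bounds expresses $\Revsb{F}$ as an explicit function of $\monoq$ and $\rev(0)$ over the feasible region $\monoq \in [\qthreshold, 1]$, and the argument closes with the brute-force numerical evaluation deferred to \Cref{sec:numerical}. The main obstacle, flagged in \Cref{sec:general properties}, is that \Cref{lem:revenue monotone} must be invoked delicately: any modification of $\rev$ in the \bidmech\ shifts both the payment rule and the agent's optimal bid simultaneously, so a naive pointwise payment comparison fails. The saving grace is that by always performing the modification on a quantile range lying strictly to one side of the threshold bid $\monop$, the lemma decouples the two effects and preserves the payment inequalities needed for the reduction.
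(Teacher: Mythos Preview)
Your high-level target is right --- the paper does end up reducing to a two-parameter family (the parameters are $\monoq$ and $r_0 = \rev(0)$) and verifying numerically that $\Revsb{\rev}\geq 0.545$.  But the route you sketch has two genuine gaps.

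First, the assertion ``$\bid(\criticalval,F)=\monop$ by definition of $\criticalval$'' is false, and so is the plan to pin down $\criticalquant$ via the first-order condition at $\bid=\monop$.  For a regular curve the FOC at $\monop$ reads $\val=\alpha\cdot\frac{1-F(\monop)}{f(\monop)}=\alpha\monop<\monop$, so $\monop$ is never an interior optimum for any value $\val\geq\monop$; the optimal bid \emph{jumps} over $\monop$.  The threshold $\criticalval$ is determined by a \emph{global} indifference (utility of bidding $0$ equals utility of the best bid above $\monop$), not by a local FOC, and that indifference depends on the entire shape of $\rev$, not just the slope at $\monoq^+$.  Consequently your lower bound on $\criticalquant$ has no footing.

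Second, you are missing the structural lemma the paper proves specifically for the regime $\monoq\geq\qthreshold$ (\Cref{lem:regular large opt bid}): the optimal bid lies in $\{0\}\cup[\monop,\infty)$.  This is what makes the low-value integral $\int_{\criticalquant}^{1}\priceval(\val(\quant),F)\,d\quant$ equal to \emph{zero} --- every value below $\criticalval$ bids $0$ and pays $0$ --- so the paper never tries to extract positive contribution there.  It also drives the case split: when $\criticalval\leq\monop$ one has $\criticalquant\geq\monoq$ and the bound $\pricebid(\monop,\rev)\cdot\monoq\geq -\frac{\alphaval\monoq\log\monoq}{1-\monoq}\geq 0.545$ is immediate (\Cref{lem:regular large small}); when $\criticalval\geq\monop$ the paper performs a four-step reduction (\Cref{lem:regular large large}) --- flatten $\rev$ on $[\monoq,1]$, then iteratively on $[\criticalquant,\monoq]$, then linearize on $[0,\monoq]$ --- each step justified via \Cref{lem:revenue monotone} together with the $\{0\}\cup[\monop,\infty)$ structure to control how $\criticalquant$ moves.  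Your single application of \Cref{lem:revenue monotone} (raising $\rev$ on $[0,\criticalquant]$) does not suffice: that lemma only compares preferences among bids \emph{above} $\rev(\criticalquant)/\criticalquant=\criticalval\geq\monop$, so it says nothing about the low-bid region you want to control, and without \Cref{lem:regular large opt bid} you cannot rule out that modifying $\rev$ shifts agents to intermediate bids in $(0,\monop)$.
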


Fix an arbitrary revenue curve $\rev$, 
let 
\begin{align*}
    \criticalval(\rev) \triangleq \inf\{\val:\bid(\val, \rev)\geq 
    \monop(\rev)\}
\end{align*}
be the smallest value whose optimal bid $\bid(\val, \rev)$
for revenue curve $\rev$ is 
at least the monopoly reserve $\monop(\rev)$.
Since \Cref{lem:monotone bid for value}
guarantees that $\bid(\val, \rev)$ is weakly non-decreasing in $\val$,
$\criticalval(\rev)$
is well-defined,
$\bid(\val,\rev) \geq \monop(\rev)$ for all
$\val \geq \criticalval(\rev)$,
and 
$\bid(\val,\rev) < \monop(\rev)$ for all
$\val < \criticalval(\rev)$.
Denote $\quant(\criticalval(\rev), \rev)$ by 
$\criticalquant(\rev)$.
We decompose the proof of \Cref{lem:regular large}
by considering the following two subregimes
--
\Cref{lem:regular large small} for 
revenue curve $\rev$ with 
$\criticalval(\rev) \leq \monop(\rev)$;
and 
\Cref{lem:regular large large} for 
revenue curve $\rev$ with 
$\criticalval(\rev) \geq \monop(\rev)$.

\begin{lemma}
\label{lem:regular large small}
Given any concave revenue curve $\rev$
such that 
$\monoq(\rev) \geq \qthreshold$ and 
$\criticalval(\rev) \leq \monop(\rev)$,
the revenue of the \bidmech\ with $\alpha = \alphaval$
is a $1.835$-approximation of the optimal revenue.
\end{lemma}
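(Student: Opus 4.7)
The plan is to lower-bound $\Revsb{\rev}$ by exploiting the slack that the case assumption $\criticalval(\rev) \leq \monop(\rev)$ provides: by monotonicity of the optimal bid in value (\Cref{lem:monotone bid for value}), this assumption yields $\criticalquant(\rev) \geq \monoq(\rev) \geq \qthreshold$. Because the revenue scales linearly (\Cref{lem:scale}) I normalize so that $\monop \cdot \monoq = 1$ and show $\Revsb{\rev} \geq 1/1.835 \approx 0.545$.

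First, I would apply the two monotonicity properties of the sample-bid mechanism: \Cref{lem:monotone bid for value} says $\bid(\val,\rev)$ is non-decreasing in $\val$, and \Cref{lem:monotone payment for bid} says $\pricebid(\bid,\rev)$ is non-decreasing in $\bid$. Together, for every quantile $q \leq \criticalquant$ the agent's bid is at least $\monop$ and hence her interim payment is at least $\pricebid(\monop,\rev)$. Combined with non-negativity of the payment on $[\criticalquant, 1]$, this yields
\begin{align*}
\Revsb{\rev} \;\geq\; \criticalquant \cdot \pricebid(\monop,\rev)
\;\geq\; \monoq \cdot \pricebid(\monop,\rev),
\end{align*}
so the worst case in the argument will be $\criticalquant = \monoq$.

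Next I would evaluate $\pricebid(\monop,\rev)$ in quantile space. Using $q(\monop) = \monoq$ and the quantile-space identity $\val(q) = \rev(q)/q$, we have
\begin{align*}
\pricebid(\monop, \rev)
\;=\; \alpha\, \monop\, \monoq + \alpha \int_{\monoq}^{1} \frac{\rev(q)}{q}\, dq
\;=\; \alpha + \alpha \int_{\monoq}^{1} \frac{\rev(q)}{q}\, dq,
\end{align*}
by the normalization $\monop\, \monoq = 1$. Concavity of $\rev$, together with $\rev(\monoq) = 1$ and $\rev(1) \geq 0$, gives the pointwise bound $\rev(q) \geq (1-q)/(1-\monoq)$ on $[\monoq, 1]$; integrating yields $\int_{\monoq}^1 \rev(q)/q\, dq \geq \frac{-\ln \monoq}{1-\monoq} - 1$, and hence $\pricebid(\monop,\rev) \geq \alpha \cdot \frac{-\ln \monoq}{1-\monoq}$.

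Combining the two bounds gives $\Revsb{\rev} \geq \monoq \cdot \alpha \cdot \frac{-\ln \monoq}{1 - \monoq}$. The argument concludes by numerically verifying that this expression is at least $1/1.835$ for $\alpha = \alphaval$ and every $\monoq \in [\qthreshold, 1]$: the function $g(q) := q \cdot \frac{-\ln q}{1-q}$ has derivative with numerator $q - 1 - \ln q$, which is positive on $(0,1)$, so $g$ is strictly increasing and the binding case is $\monoq = \qthreshold$, where the bound evaluates to approximately $0.546$. The main conceptual point is recognizing that the case hypothesis $\criticalval \leq \monop$ is precisely what allows the analysis to collapse to a single anchor bid ($\monop$) without any of the more elaborate revenue-curve reductions developed in \Cref{sec:general properties}; the complementary regime $\criticalval \geq \monop$ is where that heavier machinery is presumably needed.
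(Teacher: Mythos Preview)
Your proof is correct and follows the same approach as the paper's: lower-bound the revenue by $\monoq\cdot\pricebid(\monop,\rev)$, use concavity to bound $\rev(\quant)\geq(1-\quant)/(1-\monoq)$ on $[\monoq,1]$, and arrive at the closed-form lower bound $\alpha\,\monoq\,\frac{-\ln\monoq}{1-\monoq}$. Your additional step of checking analytically that $g(\quant)=\quant\cdot\frac{-\ln\quant}{1-\quant}$ is increasing (so the binding case is $\monoq=\qthreshold$) is a nice touch that the paper leaves implicit.
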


\begin{proof}
    Fix an arbitrary concave revenue curve $\rev$
    satisfying the requirement in the lemma statement, 
    i.e., $\monoq(\rev) \geq \qthreshold$
    and $\criticalval(\rev) \leq \monop(\rev)$.
    Consider an arbitrary value $\val \geq \criticalval(\rev)$.
    By \Cref{lem:monotone bid for value},
    the optimal bid of 
    an agent with value $\val$
    is at least $\monop(\rev)$.
    Thus, together with \Cref{lem:monotone payment for bid},
    her expected payment in \bidmech\ 
    is at least 
    the expected payment $\pricebid(\monop(\rev), \rev)$ of 
    bidding $\monop(\rev)$,
    and
    \begin{align*}
        \pricebid(\monop(\rev), \rev) &= 
        \alphaval \monop(\rev) \monoq(\rev) +
        \alphaval\int_{\monoq(\rev)}^1 \frac{\rev(\quant)}{\quant}\,
        d\quant
        = \alphaval +
        \alphaval\int_{\monoq(\rev)}^1 \frac{\rev(\quant)}{\quant}\,
        d\quant 
        \\
        &\geq 
        \alphaval +
        \alphaval\int_{\monoq(\rev)}^1 \frac{\frac{1-\quant}{1-\monoq(\rev)}}{\quant}\,
        d\quant
        =
        -\frac{\alphaval\log(\monoq(\rev))}{1 - \monoq(\rev)}~.
    \end{align*}
where the inequality uses the fact that (1) $\rev$ is concave, which implies that $\rev(\quant)\geq \frac{1-\quant}{1-\monoq(\rev)}$
for all $\quant\geq \monoq(\rev)$;
and (2) $\monop(\rev) \monoq(\rev)$ is normalized to $1$ for the revenue curve $\rev$.
Since $\criticalval(\rev) \leq \monop(\rev)$,
each value with quantile smaller than $\monoq(\rev)$ has 
$\pricebid(\monop(\rev), \rev)$ as a lower bound of its payment 
in the \bidmech.
Thus, a lower bound of 
the expected revenue $\Revsb{\rev}$
for revenue curve $\rev$
in the \bidmech\ 
is 
\begin{align*}
    \Revsb{\rev} &= 
    \int_0^1 \priceval(\val(\quant, F), F)\,d\quant
    \geq 
    \priceval(\criticalval(\rev), \rev) \cdot \criticalquant(\rev)\\
    &\geq 
    \pricebid(\monop(\rev),\rev)\cdot \monoq(\rev)
    \geq 
    -\frac{\alphaval\log(\monoq(\rev))\monoq(\rev)}{1 - \monoq(\rev)}
\end{align*}
which is at least 0.545 for all $\monoq(\rev) \geq \qthreshold$.
This finishes the proof, since we 
(without loss of generality) consider revenue curve $\rev$
with optimal revenue equal to $1$, i.e., $\monop(\rev)\cdot \monoq(\rev) = 1$.
\end{proof}

Before diving into the subregime
where $\criticalval(\rev)\geq \monop(\rev)$, we 
 provide a characterization 
 (\Cref{lem:regular large opt bid})
 of 
 the optimal bid 
for concave revenue curves with monopoly quantile greater
than $\qthreshold$.
Specifically,
\Cref{lem:regular large opt bid}
guarantees that $\bid(\val, \rev) = 0$
for all value $\val < \criticalval(\rev)$.

\begin{figure}[t]
\centering
\begin{tikzpicture}[scale = 1]

\draw [white] (0, 0) -- (11.5, 0);
\draw (0,0) -- (10.5, 0);
\draw (0, 0) -- (0, 5.5);

\draw [dotted] (0, 5) -- (6.8, 5);
\draw (-0.3, 5) node {$1$};


\draw [thick](0, 5) --(7.95, 5) --(10,0);

\draw [line width=2.5pt, color=white!70!black, dashed] plot [smooth, tension=0.6] coordinates {
(0, 1) (3, 4) (6.8,5) 
};

\draw [line width=2.5pt, color=white!70!black, dashed] plot [smooth, tension=0.6] coordinates {
 (6.8,5) (8, 4.5) (9, 3) (10, 0)
};

\draw (0, -0.5) node {$0$};
\draw (10, -0.5) node {$1$};

\draw (6.8, -0.6) node {$\monoq(\rev_1)$};
\draw [dotted] (6.8, 0) -- (6.8, 5);
\draw (6.8, 0) -- (6.8, 0.2);

\draw [dotted] (0, 0) -- (8, 5.882352941176471);
\draw (8.2, 6.3) node {$\monop(\rev_1)$};
\draw [dotted] (0, 0) -- (10, 5.);
\draw (10.3, 5.3823) node {$\bid\primed$};

\draw [dotted] (8.3, 0) -- (8.3, 4.25);
\draw (8.3, -0.5) node {$\quant\primed$};
\draw (8.3, 0) -- (8.3, 0.2);

\draw [dotted] (7.95, 0) -- (7.95, 5);
\draw (7.95, -0.545) node {$\qhat$};
\draw (7.95, 0) -- (7.95, 0.2);

\draw (2, 2.5) node {$\rev_1$};
\draw (2, 5.3) node {$\rev_2$};

\end{tikzpicture}
\caption{\label{f:regular large opt bid}
Graphical illustration for \Cref{lem:regular large opt bid}.
The gray dashed (resp. black solid) curve is 
     revenue curve $\rev_1$ (resp.\ $\rev_2$).
The slopes of two dotted lines from (0, 0) are $\monop(\rev_1)$ and $\bid\primed$ respectively.
}
\label{fig:regular large opt bid rev2}
\end{figure}

\begin{lemma}
\label{lem:regular large opt bid}
In the \bidmech\ with parameter $\alpha = \alphaval$,
given any value $\val$ and 
any concave revenue curve $\rev$ with
$\monoq(\rev)\geq \qthreshold$,
the optimal bid $\bid(\val, \rev)$
for an agent with value $\val$ and revenue curve $\rev$
satisfies 
$\bid(\val, \rev) \in \{0\}\cup [\monop(\rev), \infty).$
\end{lemma}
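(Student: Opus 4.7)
The plan is to argue by contradiction using a revenue-curve reduction to a simple canonical form, exploiting Lemma~\ref{lem:revenue monotone} together with the explicit structure of the reduced curve.

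Suppose, for contradiction, there exists a value $\val$ whose optimal bid under $\rev$ is $\bid\primed \in (0, \monop(\rev))$, and set $\quant\primed = \quant(\bid\primed, \rev)$. Concavity of $\rev$, together with $\bid\primed < \monop(\rev)$, forces $\quant\primed \in (\monoq(\rev), 1)$.

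\emph{Step 1: construct a dominating canonical curve.} Let $\qhat = 1 - (1-\quant\primed)/\rev(\quant\primed)$ and define
\[
\rev_2(\quant) = \begin{cases} 1, & \quant \in [0, \qhat],\\ (1-\quant)/(1-\qhat), & \quant \in [\qhat, 1]. \end{cases}
\]
Two consequences of the concavity of $\rev$ with $\rev(\monoq(\rev))=1$ and $\rev(1)=0$ do the work here: the chord bound $\rev(\quant\primed) \ge (1-\quant\primed)/(1-\monoq(\rev))$ together with $\rev \le 1$ yield $\qhat \in [\monoq(\rev), \quant\primed]$, so that $\monoq(\rev_2)=\qhat \ge 0.62$; while the tangent bound $\rev(\quant) \le \rev(\quant\primed) + \rev'(\quant\primed)(\quant - \quant\primed)$ combined with $\rev'(\quant\primed) \ge -\rev(\quant\primed)/(1-\quant\primed)$ (itself a tangent estimate at $\quant=1$) gives $\rev \le \rev_2$ on $[\qhat,\quant\primed]$, and $\rev \le 1 = \rev_2$ on $[0,\qhat]$. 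By construction $\rev_2(\quant\primed)=\rev(\quant\primed)$, so the hypotheses of Lemma~\ref{lem:revenue monotone} are met.

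\emph{Step 2: transfer the preference via Lemma~\ref{lem:revenue monotone}.} Optimality of $\bid\primed$ under $\rev$ gives $\util(\val,\bid\primed,\rev)\ge \util(\val,\bid\doubleprimed,\rev)$ for every $\bid\doubleprimed\ge \bid\primed$. Lemma~\ref{lem:revenue monotone} then transfers this preference to $\rev_2$: $\util(\val,\bid\primed,\rev_2)\ge \util(\val,\bid\doubleprimed,\rev_2)$ for every $\bid\doubleprimed\ge \bid\primed$.

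\emph{Step 3: explicit contradiction on $\rev_2$.} The canonical $\rev_2$ is induced by a distribution with triangular density $(1-\qhat)/(1+\val(1-\qhat))^2$ on $[0, 1/\qhat]$ and equal-revenue density $1/\val^2$ on $[1/\qhat,\infty)$, so $\util(\val,\bid,\rev_2)$ is computable in closed form on each piece. On the upper piece a direct first-order optimization identifies $\val/\alpha$ as the utility maximizer among bids $\ge 1/\qhat$; Lemma~\ref{lem:preference towards v/alpha} (applied to the concave $\rev_2$) further certifies that $\val/\alpha$ is preferred to every bid in $[1/\qhat, \val/\alpha]$. A quantitative comparison, using the calibrated constants $\alpha=0.7$ and $\qhat \ge 0.62$, then yields $\util(\val, \val/\alpha, \rev_2) > \util(\val, \bid\primed, \rev_2)$. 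Since $\val/\alpha \ge 1/\qhat > \bid\primed$ (the last inequality because $\bid\primed\quant\primed = \rev(\quant\primed) < 1$ by $\quant\primed > \monoq(\rev)$), this contradicts Step~2.

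\emph{Main obstacle.} The hard part is the quantitative comparison in Step~3. After substituting the closed form of $\rev_2$, it reduces to a two-parameter logarithmic inequality in $(\qhat, \val)$ (equivalently, $(\qhat, \bid\primed)$) that is essentially tight at the calibrated boundary $\qhat = 0.62$, $\alpha = 0.7$. Care is needed to treat boundary values of $\val$ (those close to $\alpha/(1-\qhat)$, for which the interior FOC degenerates toward bid $0$), ensuring strict inequality throughout the admissible range; and to translate the utility maximizer on the upper piece of $\rev_2$ into an unconditional strict dominance over every candidate $\bid\primed \in (0, 1/\qhat)$.
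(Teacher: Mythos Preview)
Your Steps 1 and 2 match the paper's argument: the same canonical curve $\rev_2$ and the same appeal to Lemma~\ref{lem:revenue monotone}. The gap is in Step 3. The inequality $\util(\val, \val/\alpha, \rev_2) > \util(\val, \bid\primed, \rev_2)$ is not provable as an unconditional two-parameter fact in $(\qhat,\val)$: on the triangular piece of $\rev_2$ the first-order condition gives the stationary bid $\val/\alpha - 1/(1-\qhat)$, and for $\qhat\ge 0.62$ this stays negative on a nonempty interval of admissible $\val$ even when $\val/\alpha \ge 1/\qhat$. On that interval $\util(\val,\cdot,\rev_2)$ is strictly decreasing on $[0,1/\qhat]$, so your proposed comparison goes the wrong way whenever $\bid\primed$ is small, and $\util(\val,\val/\alpha,\rev_2)$ is itself negative there (e.g.\ at $\qhat=0.62$, $\val=1/\qhat$ one gets $\util(\val,\val/\alpha,\rev_2)\approx -0.22$). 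Your assertion $\val/\alpha \ge 1/\qhat$ happens to be true---it follows from the first-order condition on $\rev$, which forces $\val/\alpha = \bid\primed - \rev'(\quant\primed)$, together with the concavity bound $-\rev'(\quant\primed)\ge (1-\rev(\quant\primed))/(\quant\primed-\monoq(\rev))$---but you do not argue it, and it does not by itself repair the comparison.

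The paper closes the gap with an ingredient you omit: because $\rev \ge \rev_2$ on $[\quant\primed,1]$ (the chord from $(\quant\primed,\rev(\quant\primed))$ to $(1,0)$ lies below the concave $\rev$ there), the payment at $\bid\primed$ drops when passing from $\rev$ to $\rev_2$, hence $\util(\val,\bid\primed,\rev_2) \ge \util(\val,\bid\primed,\rev) > 0$. This strict positivity, combined with the monotone decrease just described, already delivers the contradiction on the small-$\val$ interval (the paper's Case (i)); only outside it (Case (ii), where $\val \ge 1/\qhat$) does the paper invoke a comparison against $\val/\alpha$, and there it compares with the interior maximizer $\val/\alpha - 1/(1-\qhat)$ on the triangular piece rather than with the unspecified $\bid\primed$. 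So your skeleton is right, but the positivity observation and the case split it enables are missing, and without them Step 3 as stated does not go through.
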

\begin{proof}
We prove the lemma by contradiction.
See \Cref{fig:regular large opt bid rev2}
for a graphical description of the following construction.
Suppose there exists 
an agent who has value $\val$,
revenue curve
$\rev_1$ s.t.\ $\monoq(\rev_1)\geq \qthreshold$ 
and 
strictly prefers a bid of $\bid\primed\in(0, \monop(\rev_1))$
over all other bids. 
Denote $\quant(\bid\primed, \rev_1)$ by $\quant\primed$.
Let 
$\qhat\triangleq 
1 - \frac{1 - \quant\primed}{\rev_1(\quant\primed)}$.
Now consider another revenue curve $\rev_2$
defined as follows,
\begin{align*}
    \rev_2(\quant)\triangleq
    \left\{
    \begin{array}{ll}
       1  \qquad\qquad & 
       \quant\in [0, \qhat]~, \\
       \frac{1 - \quant}{1 - \qhat}    & 
       \quant \in [\qhat, 1]~.
    \end{array}
    \right.
\end{align*}
By construction, $\rev_2$ is a concave revenue curve s.t.\
(i) $\qhat \geq \qthreshold$;
(ii) $\bid\primed\leq \sfrac{1}{\qhat}$;
(iii) $\rev_1(\quant)\leq \rev_2(\quant)$
for all $\quant\in [0, \quant\primed]$;
and (iv) 
$\rev_1(\quant)\geq \rev_2(\quant)$
for all $\quant\in [\quant\primed, 1]$.

Applying \Cref{lem:revenue monotone}
on $\rev_1, \rev_2, \quant\primed, \val$ 
and all $\bid\doubleprimed\geq \bid\primed$,
we conclude that the optimal bid for an agent with 
value $\val$ and revenue curve $\rev_2$
is in $[0, \bid\primed]$.
Furthermore, note that 
$\util(\val, \bid\primed,\rev_2) \geq 
\util(\val, \bid\primed,\rev_1) > 0$
where the first inequality holds by the construction of $\rev_2$,\footnote{The allocation of bidding $\bid\primed$
is the same for both revenue curves, 
while the payment of bidding $\bid\primed$
is higher for revenue curve $\rev_1$.}
and the second inequality holds by our assumption that $\bid\primed$
is strictly preferred for $\rev_1$.
Hence, there exists an optimal bid in $(0, \bid\primed]$ that is strictly preferred to
biding zero and weakly preferred to
all other bids for $\rev_2$.
Next we argue that this leads to a contradiction by considering
$\val \leq \sfrac{1}{\qhat}$
and $\val \geq \sfrac{1}{\qhat}$ separately.

\vspace{10pt}
\noindent\textsl{Case (i) $\val \leq \sfrac{1}{\qhat}$:}
Note that for any bid $\bid\in [0, \sfrac{1}{\qhat}]$,
the utility $\util(\val, \bid, \rev_2)$ has a
closed-form expression as follows,
\begin{align*}
    \util(\val, \bid, \rev_2) &=
    \val 
    \frac{\bid(1 - \qhat)}
    {\bid(1 - \qhat) + 1}
    +
    \alphaval\log\left(\frac{1}{\bid(1 - \qhat) + 1}\right).
\end{align*}
Considering the first order condition of $\util(\val, \bid, \rev_2)$
with respect to 
bid $\bid$, after basic simplification,
we have 
\begin{align*}
    \bid = \frac{\val}{\alphaval} - 
    \frac{1}{1 - \qhat}.
\end{align*}
This leads to a contradiction since for all $\qhat \in [\qthreshold, 1]$\footnote{Note that $\monoq\geq 0.62$ implies that $\qhat\geq 0.62$.}
and $\val \in [0, \sfrac{1}{\qhat}]$,
we have 
$\frac{\val}{\alphaval} - 
    \frac{1}{1 - \qhat} < 0$,
i.e., bidding 0 is weakly preferred than
any bid $\bid\in(0, \bid\primed)$.

\vspace{10pt}
\noindent\textsl{Case (ii) $\val \geq \sfrac{1}{\qhat}$:}
Let $\bid\doubleprimed\triangleq \sfrac{\val}{\alphaval}$,
and $\quant\doubleprimed\triangleq \quant(\bid\doubleprimed,\rev_2) = \sfrac{\alphaval}{\val}$.
Since $\val \geq \sfrac{1}{\qhat}$,
the construction of $\rev_2$ guarantees that 
$\bid\doubleprimed\cdot \quant\doubleprimed 
= 1 = \rev_2(\quant\doubleprimed)$.
Note that the utility $\util(\val, \bid\doubleprimed, \rev_2)$ has a
closed-form expression as follows,
\begin{align*}
    \util(\val, \bid\doubleprimed, \rev_2) 
    &=
    \val - 2\val\quant\doubleprimed 
    +
    \alphaval(1 - \qhat)
    +
    \alphaval\log(\qhat) \\
    &\qquad-\alphaval(\qhat - \quant\doubleprimed)
    (1 - \bid\doubleprimed\quant\doubleprimed)
    \left(\qhat - \quant\doubleprimed
    -\quant\doubleprimed\log(\qhat)
    +
    \quant\doubleprimed\log(\quant\doubleprimed)
    \right).
\end{align*}
This leads to a contradiction since for all 
$\qhat\in[\qthreshold, 1]$, $\val\in[\sfrac{1}{\qhat}, \infty)$,
and $\left(\frac{\val}{\alphaval} - 
    \frac{1}{1 - \qhat}\right)\in[0, \sfrac{1}{\qhat}]$,
we have $\util(\val, \bid\doubleprimed, \rev_2) \geq 
\util\left(\val, \frac{\val}{\alphaval} - 
    \frac{1}{1 - \qhat},
\rev_2\right)$,\footnote{By first order condition, 
for revenue curve $\rev_2$, if 
$\left(\frac{\val}{\alphaval} - 
    \frac{1}{1 - \qhat}\right) > \sfrac{1}{\qhat}$,
then bidding $\bid\doubleprimed$ already achieves higher utility for the agent compared to bidding below $\bid\doubleprimed$. 
Thus it is sufficient to compare $\bid\doubleprimed$ with 
$\left(\frac{\val}{\alphaval} - 
    \frac{1}{1 - \qhat}\right)$
in the case that the latter is in $[0, \sfrac{1}{\qhat}]$.}
i.e., bidding 0 or $\sfrac{\val}{\alphaval}$
is weakly preferred than any bid $\bid\in(0,\bid\primed)$. 
\end{proof}

Now, we provide the approximation guarantee for revenue curve $\rev$
with $\criticalval(\rev) \geq \monop(\rev)$.

\begin{figure}
\centering
\subfloat[
The gray dashed (resp. black solid) curve is 
    the revenue curve $\rev_1$ (resp.\ $\rev_2$).
By construction, $\monoq(\rev_1) = \monoq(\rev_2)$.]{
\begin{tikzpicture}[scale = 0.55]

\draw [white] (0, 0) -- (11.5, 0);
\draw (0,0) -- (10.5, 0);
\draw (0, 0) -- (0, 5.5);

\draw [dotted] (0, 5) -- (6.8, 5);
\draw (-0.3, 5) node {$1$};

\draw [line width=2.5pt, color=white!70!black, dashed] plot 
(0, 0) -- (0.625, 2) -- (2, 3.5) -- (4, 4.5)
-- (6.8, 5);

\draw [line width=2.5pt, color=white!70!black, dashed] plot [smooth, tension=0.6] coordinates {
 (6.8,5) (8, 4.5) (9, 3) (10, 0)
};

\draw [thick] plot 
(0, 0) -- (0.625, 2) -- (2, 3.5) -- (4, 4.5)
-- (6.8, 5);

\draw [thick] (6.8, 5) --(10,5);

\draw (0, -0.5) node {$0$};
\draw (10, -0.5) node {$1$};

\draw (6.8, -0.5) node {$\monoq$};
\draw [dotted] (6.8, 0) -- (6.8, 5);
\draw (6.8, 0) -- (6.8, 0.2);



\end{tikzpicture}
\label{fig:regular large large rev2}
}\\
\subfloat[The gray dashed (resp.\ black solid) curve is 
revenue curve $\rev_2^{(i)}$ (resp.\ $\rev_2^{(i+1)}$). 
The slopes of two dotted lines from (0, 0) are 
$\criticalval(\rev_2^{(i+1)})$
and $\criticalval(\rev_2^{(i)})$ respectively.
By construction, $\criticalval(\rev_2^{(i+1)})\geq \criticalval(\rev_2^{(i)})$.]{
\begin{tikzpicture}[scale = 0.55]

\draw [white] (0, 0) -- (11.5, 0);
\draw (0,0) -- (10.5, 0);
\draw (0, 0) -- (0, 5.5);

\draw [dotted] (0, 5) -- (6.8, 5);
\draw (-0.3, 5) node {$1$};


\draw [line width=2.5pt, color=white!70!black, dashed] plot 
(0, 0) -- (0.625, 2) -- (2, 3.5) -- (4, 4.5)
-- (6.8, 5) -- (10, 5);

\draw [thick] plot 
(0, 0) -- (0.625, 2) -- (2, 3.5) 
-- (5, 5) -- (10, 5);

\draw (0, -0.5) node {$0$};
\draw (10, -0.5) node {$1$};


\draw [dashed] (0, 0) -- (4.6, 5.882352941176471);
\draw (5.25, 7.2) node [rotate=57] {$\criticalval(\rev_2^{(i)})$};

\draw [dashed] (0, 0) -- (3., 5.882352941176471);
\draw (3.65, 7.6) node [rotate=67] {$\criticalval(\rev_2^{(i+1)})$};



\end{tikzpicture}
\label{fig:regular large large rev2.5}
}
~~~~
\subfloat[The gray dashed (resp.\ black solid) curve is 
revenue curve $\rev_2$ (resp.\ $\rev_3$). 
The slope of the dotted line from (0, 0) is 
$\criticalval(\rev_3)$.]{
\begin{tikzpicture}[scale = 0.55]

\draw [white] (0, 0) -- (11.5, 0);
\draw (0,0) -- (10.5, 0);
\draw (0, 0) -- (0, 5.5);

\draw [dotted] (0, 5) -- (6.8, 5);
\draw (-0.3, 5) node {$1$};


\draw [line width=2.5pt, color=white!70!black, dashed] plot 
(0, 0) -- (0.625, 2) -- (2, 3.5) -- (4, 4.5)
-- (6.8, 5) -- (10, 5);

\draw [thick] plot 
(0, 0) -- (0.625, 2) 
-- (3.375, 5) -- (10, 5);

\draw (0, -0.5) node {$0$};
\draw (10, -0.5) node {$1$};


\draw [dashed] (0, 0) -- (2.3, 5.882352941176471);
\draw (2.65, 7.25) node [rotate=75] {$\criticalval(\rev_3)$};



\end{tikzpicture}
\label{fig:regular large large rev3}
}
\\
\subfloat[The gray dashed (resp.\ black solid, black dashed) curve is 
revenue curve $\rev_3$ (resp.\ $\rev_4^{(\bar r_0)}$,
$\rev_4^{(\underbar r_0)}$). 
The slope of the dotted line from (0, 0) is 
$\bid\doubleprimed$, i.e., 
the optimal bid for an agent with value $\criticalval(\rev_3)$ and revenue curve $\rev_3$.
By construction, 
$\criticalval(\rev_4^{(\underbar r_0)})
\leq 
\criticalval(\rev_3))
\leq 
\criticalval(\rev_4^{(\bar r_0)})$.
]{
\begin{tikzpicture}[scale = 0.55]

\draw [white] (0, 0) -- (11.5, 0);
\draw (0,0) -- (10.5, 0);
\draw (0, 0) -- (0, 5.5);

\draw [dotted] (0, 5) -- (6.8, 5);
\draw (-0.3, 5) node {$1$};



\draw [line width=2.5pt, color=white!70!black, dashed] plot [smooth, tension=0.6] coordinates {
(0, 0) (0.1, 0.9) (0.3, 1.8) (1, 3) (2.3, 4.5)
};

\draw [line width=2.5pt, color=white!70!black, dashed] (2.3, 4.5) --(8,5);

\draw [line width=2.5pt, color=white!70!black, dashed] (8, 5) --(10,5);

\draw [thick] (0, 4.298245614035087) -- (8, 5) -- (10, 5);

\draw (-0.35, 4.298245614035087) node {$\bar r_0$};

\draw [thick, dashed]
(0,2.647058823529412) -- (8, 5)--(10, 5);

\draw (-0.35, 2.647058823529412) node {$\underbar r_0$};

\draw (0, -0.5) node {$0$};
\draw (10, -0.5) node {$1$};


\draw (2.2, 6.) node 
{$\bid\doubleprimed$};

\draw [dotted] (0, 0) -- (2., 5.882352941176471);
\draw [dotted] (1, 0) -- (1, 2.9411764705882355);
\draw (1, -0.4) node {$\quant\doubleprimed$};


\end{tikzpicture}
\label{fig:regular large large rev4.5}
}
~~~~
\subfloat[
The gray dashed (resp.\ black solid, black dashed) curve is 
revenue curve $\rev_3$ (resp.\ $\rev_4$,
$\rev_4^{(\underbar r_0)}$). 
By construction, $\criticalval(\rev_3) = \criticalval(\rev_4)~(\triangleq\criticalval)$.
The slope of three dotted lines from (0, 0) are 
$\bid\doubleprimed$, $\bid\primed$
and $\criticalval$, 
where $\bid\doubleprimed$ (resp.\ $\bid\primed$) is 
the optimal bid for an agent with value $\criticalval$ and revenue curve $\rev_3$ (resp.\ $\rev_4$).
By \Cref{lem:FOC}, 
$\bid\primed
\leq 
\bid\doubleprimed$.
]{
\begin{tikzpicture}[scale = 0.55]

\draw [white] (0, 0) -- (11.5, 0);
\draw (0,0) -- (10.5, 0);
\draw (0, 0) -- (0, 5.5);

\draw [dotted] (0, 5) -- (6.8, 5);
\draw (-0.3, 5) node {$1$};



\draw [line width=2.5pt, color=white!70!black, dashed] plot [smooth, tension=0.6] coordinates {
(0, 0) (0.1, 0.9) (0.3, 1.8) (1, 3) (2.3, 4.5)
};

\draw [line width=2.5pt, color=white!70!black, dashed] (2.3, 4.5) --(8,5);

\draw [line width=2.5pt, color=white!70!black, dashed] (8, 5) --(10,5);

\draw [thick] (0, 3.2) -- (8, 5) -- (10, 5);


\draw [thick, dashed]
(0,2.647058823529412) -- (8, 5)--(10, 5);

\draw (-0.35, 2.647058823529412) node {$\underbar r_0$};

\draw (0, -0.5) node {$0$};
\draw (10, -0.5) node {$1$};


\draw (2.2, 6.) node
{$\bid\doubleprimed$};
\draw [dotted] (0, 0) -- (2., 5.882352941176471);

\draw (4.2, 6.) node
{$\bid\primed$};
\draw [dotted] (0, 0) -- (4., 5.882352941176471);

\draw (6.7, 6.) node
{$\criticalval$};
\draw [dotted] (0, 0) -- (6.5, 5.882352941176471);



\end{tikzpicture}
\label{fig:regular large large rev4}
}
\caption{\label{f:regular large large}
Graphical illustration for \Cref{lem:regular large large}.
}
\end{figure}
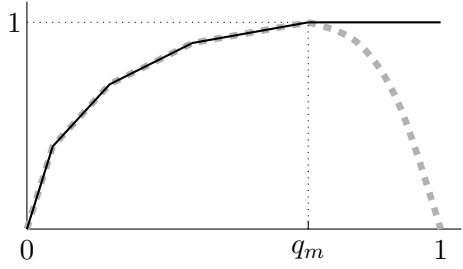
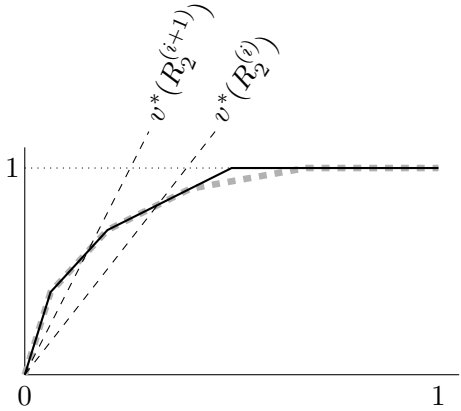
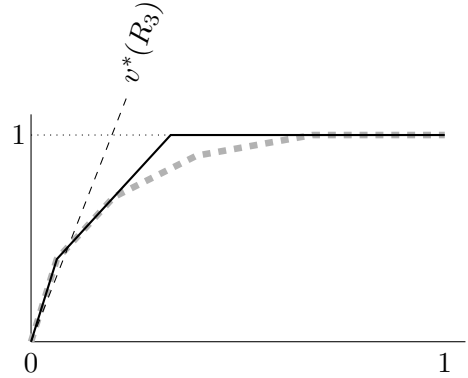
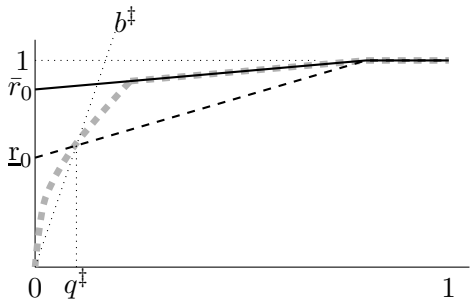
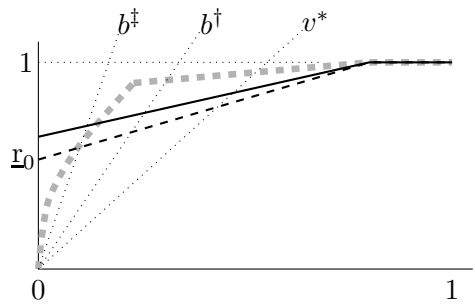

\begin{lemma}
\label{lem:regular large large}
Given any concave revenue curve $\rev$
such that 
$\monoq(\rev) \geq \qthreshold$ and 
$\criticalval(\rev) \geq \monop(\rev)$,
the revenue of the \bidmech\ with $\alpha = \alphaval$
is a $1.835$-approximation of the optimal revenue.
\end{lemma}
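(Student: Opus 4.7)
The plan is to apply a revenue curve reduction: starting from an arbitrary concave $\rev = \rev_1$ satisfying $\monoq(\rev) \geq \qthreshold$ and $\criticalval(\rev) \geq \monop(\rev)$, I would construct a sequence $\rev_1 \to \rev_2 \to \rev_3 \to \rev_4$ of concave revenue curves, each with the same normalized optimal revenue ($\monop\cdot\monoq = 1$) and the same monopoly quantile $\monoq$, such that $\Revsb{\rev_{i+1}} \leq \Revsb{\rev_i}$ at every step; the final $\rev_4$ lives in a two-parameter family, over which $\Revsb{\rev_4}$ can be numerically evaluated.

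The first reduction (Figure~\ref{fig:regular large large rev2}) flattens $\rev_1$ above $\monoq$, producing $\rev_2$ that agrees with $\rev_1$ on $[0, \monoq]$ and takes the value $1$ above. The second reduction (Figures~\ref{fig:regular large large rev2.5}--\ref{fig:regular large large rev3}) iteratively replaces arcs of the curve below $\monoq$ by linear chords, driving the lower portion into a triangle from $(0,0)$ to $(\monoq, 1)$ and arriving at $\rev_3$. The third reduction (Figures~\ref{fig:regular large large rev4.5}--\ref{fig:regular large large rev4}) lifts the foot of the triangle from $(0,0)$ to $(0, r_0)$ for a carefully chosen $r_0 \in [\underline{r}_0, \overline{r}_0]$, producing $\rev_4$ parametrized by $(\monoq, r_0)$. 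On this canonical family I would combine \Cref{lem:regular large opt bid} (the optimal bid lies in $\{0\} \cup [\monop, \infty)$ once $\monoq \geq \qthreshold$) with \Cref{lem:preference towards v/alpha} (for $\val \geq \criticalval$, the bid $\sfrac{\val}{\alphaval}$ is weakly preferred over any bid in $[\monop, \sfrac{\val}{\alphaval}]$) to write the agent's best response in closed form and hence express $\Revsb{\rev_4}$ explicitly in $(\monoq, r_0)$.

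The hard part will be establishing $\Revsb{\rev_{i+1}} \leq \Revsb{\rev_i}$ for each reduction: because the \bidmech\ is non-truthful, the agent's optimal bid itself depends on the revenue curve, so a naive bid-by-bid payment comparison fails. The key tool is \Cref{lem:revenue monotone}, which controls how the agent's preference between a low bid and a high bid shifts under revenue-curve modifications --- whenever one curve dominates the other below some quantile $\quant'$ and the two agree at $\quant'$, any agent who preferred the low bid before continues to do so. Applied in the appropriate directions across the three reductions, this lemma lets me argue that $\criticalval$ weakly increases and the expected payment of each value weakly decreases at every step. Finally, numerically evaluating the closed-form lower bound on $\Revsb{\rev_4}$ over the two-dimensional domain $(\monoq, r_0) \in [\qthreshold, 1] \times [0, 1]$ (details deferred to \Cref{sec:numerical}) should yield $\Revsb{\rev_4} \geq \revsmallmq$, which gives the claimed $1.835$-approximation since the optimal revenue is normalized to $1$.
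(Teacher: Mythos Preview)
Your overall plan mirrors the paper's four-step reduction $\rev_1\to\rev_2\to\rev_3\to\rev_4$ and invokes the right structural lemmas (\Cref{lem:revenue monotone}, \Cref{lem:preference towards v/alpha}, \Cref{lem:regular large opt bid}).  However, there is a genuine gap in your Step~3.

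You assert that you will show $\Revsb{\rev_{i+1}}\le\Revsb{\rev_i}$ at \emph{every} step.  For Steps~1 and~2 this is indeed what the paper proves (via a virtual-surplus argument combined with \Cref{lem:revenue monotone}).  But for Step~3 the paper does \emph{not} prove $\Revsb{\rev_4}\le\Revsb{\rev_3}$, and your proposed tool cannot deliver it: by construction $\rev_4$ and $\rev_3$ \emph{cross} (see \Cref{fig:regular large large rev4}), so neither dominates the other below any fixed quantile and \Cref{lem:revenue monotone} does not apply across the full bid range.  Instead the paper drops to the cruder lower bound $\Revsb{\rev}\ge\priceval(\criticalval(\rev),\rev)\cdot\criticalquant(\rev)$ and argues only that this \emph{product} decreases from $\rev_3$ to $\rev_4$.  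Concretely, $r_0^*$ is chosen so that (i) $\criticalval(\rev_4)=\criticalval(\rev_3)$, then (ii) $\criticalquant(\rev_4)\le\criticalquant(\rev_3)$ and (iii) $\priceval(\criticalval,\rev_4)\le\priceval(\criticalval,\rev_3)$ are established separately --- the latter from the observation that the critical value has zero utility at its optimal bid, so $\priceval(\criticalval,\rev)=\criticalval\cdot(1-\quant(\bid(\criticalval,\rev),\rev))$, together with a first-order-condition bound on the optimal bid.  The final numerical step then verifies $\priceval(\criticalval(\rev_4),\rev_4)\cdot\criticalquant(\rev_4)\ge\revsmallmq$ over $(r_0,\monoq)\in[0,1]^2$, not the full $\Revsb{\rev_4}$.

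A smaller inaccuracy: your description of Step~2 says the iteration ``drives the lower portion into a triangle from $(0,0)$ to $(\monoq,1)$.''  In the paper $\rev_3$ still agrees with $\rev_2$ (hence with $\rev_1$) on $[0,\criticalquant(\rev_3)]$; only the segment from $\criticalquant(\rev_3)$ up to where the curve reaches $1$ is linearized.  The arbitrary original curve below $\criticalquant(\rev_3)$ is what gets replaced by the affine family $\rev_4^{(r_0)}$ in Step~3, which is precisely why that step requires the intermediate-value argument to pick $r_0^*$ and cannot be handled by a monotone revenue comparison.
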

\begin{proof}
The proof is done in four major steps:

\paragraph{Step 1- flattening the revenue curve
for all quantile $\quant\geq \monoq(\rev_1)$}
Fix an arbitrary revenue curve $\rev_1$
satisfying the requirements in the lemma statement,
i.e.,
$\monoq(\rev) \geq \qthreshold$
    and $\criticalval(\rev) \geq \monop(\rev)$.
Consider another revenue curve $\rev_2$ defined as follows
(see \Cref{fig:regular large large rev2} for a graphical illustration)
\begin{align*}
    \rev_2(\quant) \triangleq \left\{
    \begin{array}{ll}
       \rev_1(\quant)  \qquad\qquad & 
       \quant\in [0, \monoq(\rev_1)]~, \\
       1     & 
       \quant \in [\monoq(\rev_1), 1]~.
    \end{array}
    \right.
\end{align*}
We claim that the expected revenue of the \bidmech\ 
with $\alpha = \alphaval$ for revenue curve $\rev_2$ is at most
that of revenue curve $\rev_1$.
To see this, consider the virtual surplus for both revenue curves.
By our assumption that $\criticalval(\rev_1) \geq \monop(\rev_1)$, 
every quantile $\quant> \monoq(\rev_1)$ has
negative
virtual value $\rev_1'(\quant)$ in $\rev_1$,
bids zero (\Cref{lem:regular large opt bid})
and gains zero virtual surplus
while their virtual value $\rev_2'(\quant)$ becomes zero in $\rev_2$ and thus gains
zero virtual surplus 
as well.
On the other side, every quantile $\quant\leq \monoq(\rev_1)$
has identical virtual value by construction.
We claim that the allocation for each of these quantiles weakly decreases.
To see this, note that 
the allocation of bidding any bid $\bid \geq \monop(\rev_1) = \monop(\rev_2)$ is the same for both revenue curves $\rev_1$ and $\rev_2$,
and the expected payment increases by a constant when the revenue curve $\rev_1$ is replace by $\rev_2$.
Thus the agent's preference among all bids $\bid \geq \monop(\rev_1)$
is the same in both revenue curves $\rev_1$ and $\rev_2$.
However, the utility of bidding $\bid \geq \monop(\rev_2)$ is lower when the revenue curve is $\rev_2$, 
which implies that there may exist value~$\val$
such that the agent may prefer bidding $0$ to bidding above the monopoly reserve in $\rev_2$, 
while strictly prefer bidding above the monopoly reserve in $\rev_1$.
By \Cref{lem:regular large opt bid}, 
the optimal bid for any value $\val$ is not in $(0, \monop(\rev_2))$.
Thus, we conclude that $\criticalquant(\rev_2) \leq \criticalquant(\rev_1)$
and (1) the optimal bid (as well as the allocation)
for every quantile $\quant \leq \criticalquant(\rev_2)$ 
in both $\rev_1$ and $\rev_2$
remains the same;
and (2) for every quantile $\quant \in 
[\criticalquant(\rev_2),\criticalquant(\rev_1))$,
the optimal bid quantile $\quant$
is $0$ when the revenue curve is $\rev_2$.
This guarantees that the virtual surplus for every
quantile $\quant \leq \monoq(\rev_1)$ weakly decreases
since the virtual value is non-negative while the allocation decreases.
Note that in \bidmech, the payment for lowest type is always 0, i.e., $\price(0) = 0$.
By \Cref{thm:virtual welfare}, 
the expected revenue (a.k.a.\ virtual surplus) 
for $\rev_2$ is 
at most the expected revenue (a.k.a.\ virtual surplus) for $\rev_1$.

\paragraph{Step 2- flattening the revenue curve
for all quantiles $\quant\geq \criticalquant$}
In this step, we start with revenue curve $\rev_2$ 
constructed in step 1,
and consider a sequence of revenue curves 
$\rev_2^{(0)}, \rev_2^{(1)}, \dots$
where $\rev_2^{(0)} \triangleq \rev_2$ and $\rev_2^{(i + 1)}$
is recursively defined on $\rev_2^{(i)}$ as follows,
\begin{align*}
     \rev_2^{(i+1)}(\quant) \triangleq \left\{
    \begin{array}{ll}
         \vspace{5pt}
       \rev_2^{(i)}(\quant)  \qquad\qquad & 
       \quant\in \left[0, \criticalquant(\rev_2^{(i)})\right]~, \\
       \vspace{5pt}
       \rev_2^{(i)\prime}(\criticalquant(\rev_2^{(i)}))\cdot 
       (\quant - \criticalquant(\rev_2^{(i)})) +
       \rev_2^{(i)}(\criticalquant(\rev_2^{(i)}))
       & 
       \quant\in \left[\criticalquant(\rev_2^{(i)}),
       \frac{1 - \rev_2^{(i)}(\criticalquant(\rev_2^{(i)}))}
       {\rev_2^{(i)\prime}(\criticalquant(\rev_2^{(i)}))} + \criticalquant(\rev_2^{(i)})
       \right]~,\\
       1     & 
       \quant \in \left[
       \frac{1 - \rev_2^{(i)}(\criticalquant(\rev_2^{(i)}))}
       {\rev_2^{(i)\prime}(\criticalquant(\rev_2^{(i)}))} + \criticalquant(\rev_2^{(i)}),
       1\right]~.
    \end{array}
    \right.  
\end{align*}
where $\rev_2^{(i)\prime}(\criticalquant(\rev_2^{(i)}))$
is the right-hand derivative of $\rev_2^{(i)}(\quant)$ at 
$\quant = \criticalquant(\rev_2^{(i)})$.
See \Cref{fig:regular large large rev2.5} for a graphical illustration.
Invoking \Cref{lem:preference towards v/alpha}
and \Cref{lem:regular large opt bid},
with the same argument for values with positive virtual values
in step~1, we can conclude that 
$\criticalquant(\rev_2^{(i)})$
and 
the expected revenue for $\rev_2^{(i)}$ 
in the \bidmech\ 
is weakly decreasing in~$i$.

Note that by construction, the sequence of revenue curves 
$\rev_2^{(0)}, \rev_2^{(1)},  \dots$ 
converges to 
a revenue curve $\rev_3$ whose 
expected revenue in the \bidmech\ 
is at most the revenue for $\rev_2$,
and satisfying the following characterization,
\begin{align*}
    \rev_3(\quant) \triangleq \left\{
    \begin{array}{ll}
    \vspace{5pt}
       \rev_2(\quant)  \qquad\qquad & 
       \quant\in \left[0, \criticalquant(\rev_3)\right]~, \\
    \vspace{5pt}
       \rev_2'(\criticalquant(\rev_3))\cdot 
       (\quant - \criticalquant(\rev_3)) +
       \rev_2(\criticalquant(\rev_3))
       & \quant\in \left[\criticalquant(\rev_3),
       \frac{1 - \rev_2(\criticalquant(\rev_3))}
       {\rev_2'(\criticalquant(\rev_3))} + \criticalquant(\rev_3)
       \right]~,\\
       1     & 
       \quant \in \left[
       \frac{1 - \rev_2(\criticalquant(\rev_3))}
       {\rev_2'(\criticalquant(\rev_3))} + \criticalquant(\rev_3),
       1\right]~.
    \end{array}
    \right.
\end{align*}
See \Cref{fig:regular large large rev3} for a graphical illustration.

\paragraph{Step 3- flattening the revenue curve
for all quantile $\quant\leq \monoq(\rev_3)$}
For any revenue curve $\rev$,
let $\priceval(\criticalval(\rev), \rev)$
be the expected payment in the \bidmech\ 
of an agent with value $\criticalval(\rev)$
and revenue curve $\rev$.
Due to
\Cref{lem:monotone payment for bid}
and
\Cref{lem:monotone bid for value},
$\price(\criticalval(\rev), \rev)
\cdot \criticalquant(\rev)$
is a valid lower bound of the expected revenue 
in the \bidmech\ 
for an agent with revenue curve $\rev$.
In this step, instead of analyzing the expected revenue,
we argue that we can convert any revenue curve $\rev_3$ (constructed 
in step~2) into another revenue curve $\rev_4$,
such that 
(i) $\criticalval(\rev_4) = \criticalval(\rev_3)~(\triangleq \criticalval)$;
(ii)
$\criticalquant(\rev_4)\leq \criticalquant(\rev_3)$;
and
(iii)
$ \priceval(\criticalval, \rev_4)
\leq 
\priceval(\criticalval, \rev_3)$.
Finally, by showing that 
$ \price(\criticalval(\rev_4), \rev_4)
\cdot \criticalquant(\rev_4)
\geq 0.545$, we finish the proof of the lemma.

Given the revenue curve $\rev_3$ constructed in step~2,
for any $r_0\in [0, 1]$,
we define a revenue curve $\rev_4^{(r_0)}$ as follows,
\begin{align*}
    \rev_4^{(r_0)} \triangleq \left\{
    \begin{array}{ll}
        r_0 + (1 - r_0)\frac{\quant}{\monoq(\rev_3)} \qquad\qquad & 
       \quant\in \left[0, \monoq(\rev_3)\right]~, \\
       1     & 
       \quant \in \left[
       \monoq(\rev_3),
       1\right]~.
    \end{array}
    \right.
\end{align*}
See the black curves in \Cref{fig:regular large large rev4.5}
as an example.
We claim that there exists $r_0^*\in [0, 1]$
s.t.\ $\rev_4^{(r0)}~(\triangleq\rev_4)$ satisfies properties (i)
(ii) (iii) mentioned above.
To see this, consider the argument as follows.

By construction, for all every value $\val$,
every bid $\bid$, the utility $\util(\val, \bid, \rev_4^{(r_0)})$ is decreasing
continuously in $r_0$.
Thus,
$\criticalval(\rev_4^{(r_0)})$
is decreasing continuously in $r_0$.
Let $\bid\doubleprimed$
be the optimal bid 
of an agent with value $\criticalval(\rev_3)$ 
and revenue curve $\rev_3$.
Denote $\quant(\bid\doubleprimed,\rev_3)$ 
by $\quant\doubleprimed$.
Consider revenue curve 
$\rev_4^{(\underbar r_0)}$ 
where $\underbar r_0 \triangleq
1 - \frac{\monoq(\rev_3)}
{\monoq(\rev_3) - \quant\doubleprimed}
(1 - \rev_3(\quant\doubleprimed))$.
By construction,
$\rev_4^{(\underbar r_0)}(\quant)\geq
\rev_3(\quant)$ 
for all $\quant\leq \quant\doubleprimed$,
and 
$\rev_4^{(\underbar r_0)}(\quant)\leq
\rev_3(\quant)$ 
for all $\quant\geq \quant\doubleprimed$.
See \Cref{fig:regular large large rev4.5} for a graphical illustration.
Note that by construction, 
\begin{align*}
    \util(\criticalval(\rev_3), \bid\doubleprimed, \rev_4^{(\underbar r_0)})
    &=
    \criticalval(\rev_3)\cdot 
    (1 - \quant\doubleprimed)
    -\alpha \bid\doubleprimed\cdot\quant\doubleprimed
    -\alpha \int_{\quant\doubleprimed}^1
    \frac{\rev_4^{(\underbar r_0)}(\quant)}{\quant}~d\quant\\
    &\geq 
    \criticalval(\rev_3)\cdot 
    (1 - \quant\doubleprimed)
    -\alpha \bid\doubleprimed\cdot\quant\doubleprimed
    -\alpha \int_{\quant\doubleprimed}^1
    \frac{\rev_3(\quant)}{\quant}~d\quant
    = \util(\criticalval(\rev_3), \bid\doubleprimed, \rev_3) = 0
\end{align*}
Thus, $\criticalval(\rev_4^{(\underbar r_0)}) \leq \criticalval(\rev_3)$.
Next, consider revenue curve 
$\rev_4^{(\bar r_0)}$ where 
$\bar r_0 \triangleq 
1 - \frac{\monoq(\rev_3)}
{\monoq(\rev_3) - \criticalquant(\rev_3)}
(1 - \rev_3(\criticalquant(\rev_3))$.
By construction, 
$\rev_4^{(\bar r_0)}(\quant)\geq
\rev_3(\quant)$ 
for all $\quant\in[0, 1]$.
See \Cref{fig:regular large large rev4.5}
for a graphical illustration.
Thus, $\criticalval(\rev_4^{(\bar r_0)}) \geq
\criticalval(\rev_3)$
with the similar argument for $\rev_4^{(\underbar r_0)}$
Therefore, we know that there exists
$r_0^* \in [\underbar r_0, \bar r_0]$
such that 
$\criticalval(\rev_4^{(r_0^*)})
=
\criticalval(\rev_3)$.
We denote $\rev_4^{(r_0^*)}$
by $\rev_4$ and show that 
$\rev_4$ satisfies properties (ii) $\criticalquant(\rev_4)\leq \criticalquant(\rev_3)$
and (iii)
$ \price(\criticalval, \rev_4)
\leq 
\price(\criticalval, \rev_3)$
with the argument below. 

\Cref{lem:preference towards v/alpha}
implies that
$\bid\doubleprimed > \criticalval(\rev_3)$.
Combining with the fact that 
$r_0^* \geq \underbar r_0$,
we know that property (ii)
$\criticalquant(\rev_4)\leq \criticalquant(\rev_3)$ is satisfied.
See \Cref{fig:regular large large rev4}
for a graphical illustration.

Combining the first order condition in \Cref{lem:FOC}
and construction of $\rev_4$, 
it is guaranteed that the optimal bid $\bid\primed$ of value $\criticalval$
for revenue curve $\rev_4$ is at most $\bid\doubleprimed$.
Furthermore, $\quant(\bid\primed, \rev_4) 
\geq \quant(\bid\doubleprimed, \rev_4)
\geq \quant(\bid\doubleprimed, \rev_3) = \quant\doubleprimed$ by construction.
By the definition, 
the optimal utility of value $\criticalval(\rev)$
for any revenue curve $\rev$ is zero. 
Thus, $\priceval(\criticalval, \rev_3) = 
\criticalval\cdot (1 - \quant\doubleprimed)
\geq 
\criticalval\cdot (1 - \quant(\bid\primed,\rev_4))
=
\priceval(\criticalval,\rev_4)$.

\paragraph{Step 4- lower-bounding the 
expected revenue on $\rev_4$}
So far, we have shown that
for an arbitrary revenue curve satisfying the 
assumptions in lemma statement,
its 
expected revenue in the \bidmech\ 
is lower-bounded by 
$\price(\criticalval(\rev_4), \rev_4)
\cdot \criticalquant(\rev_4)$
for $\rev_4$ pinned down by some $(r_0, \monoq)$ as follows,
\begin{align*}
    \rev_4 \triangleq \left\{
    \begin{array}{ll}
        r_0 + (1 - r_0)\frac{\quant}{\monoq} \qquad\qquad & 
       \quant\in \left[0, \monoq\right]~, \\
       1     & 
       \quant \in \left[
       \monoq,
       1\right]~.
    \end{array}
    \right.
\end{align*}
By numerically verifying 
$\price(\criticalval(\rev_4), \rev_4)
\cdot \criticalquant(\rev_4)\geq 0.545$
for all $(r_0, \monoq) \in [0, 1]^2$,
we finish the proof.
The details of this numerical evaluation is elaborated on in 
\Cref{sec:numerical}.
\end{proof}

\subsection{Regular Distributions
with Monopoly Quantile $\monoq \leq \qthreshold$}
\label{sec:regular small}
In this subsection,
we analyze the prior-independent approximation ratio 
of the \bidmech\ over the class 
of regular distributions with monopoly quantile $\monoq \leq \qthreshold$.
\begin{lemma}
\label{lem:regular small}
For the \bidmech\
with $\alpha = \alphaval$,
the prior-independent approximation
ratio over the class of regular 
distributions with monopoly quantile $\monoq \leq \qthreshold$
is 
at most $1.835$.
\end{lemma}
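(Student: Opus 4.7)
The plan is to mirror the high-level approach of the proof of \Cref{lem:regular large large} but with a parametrization of the revenue curve adapted to the case $\monoq \leq \qthreshold$. The main obstacle is that \Cref{lem:regular large opt bid} no longer restricts the optimal bid to $\{0\}\cup[\monop(\rev),\infty)$, so the successive flattening reductions used for the large-quantile regime cannot be imported directly: intermediate bids in $(0,\monop(\rev))$ are now possible, and flattening the tail of the revenue curve above $\monoq(\rev)$ may change agents' optimal bids in non-trivial ways. Instead, I would directly lower bound the expected revenue of the \bidmech\ in terms of three scalar summaries of $\rev$, and finish with a brute-force numerical verification.

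First, normalize so that $\monop(\rev)\cdot\monoq(\rev)=1$ and write
\begin{align*}
\Revsb{\rev} = \int_0^1 \priceval(\val(\quant,\rev),\rev)\,d\quant.
\end{align*}
Using \Cref{lem:monotone payment for bid} and \Cref{lem:monotone bid for value}, I would split this integral at $\quant(\monop(\rev)/\alpha,\rev)$ and at $\monoq(\rev)$. For the contribution from quantiles $\quant\leq\quant(\monop(\rev)/\alpha,\rev)$ (i.e.\ values $\val\geq\monop(\rev)/\alpha$), \Cref{lem:preference towards v/alpha} guarantees that the agent weakly prefers the bid $\val/\alpha$ over every bid in $[\monop(\rev),\val/\alpha]$, and her expected payment is at least the payment of bidding $\val/\alpha$. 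For the contribution from quantiles in $[\quant(\monop(\rev)/\alpha,\rev),\monoq(\rev)]$, I would use \Cref{lem:monotone bid for value} together with concavity of $\rev$ to bound the allocation and payment in terms of the cumulative reserve integral, obtaining a lower bound driven by $\midwelfare$. The remaining contribution from $\quant\geq\monoq(\rev)$ is bounded below by zero via individual rationality.

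Second, I would express the aggregated lower bound as a function of only three parameters of $\rev$: $\monoq(\rev)$, $\quant(\monop(\rev)/\alpha,\rev)$, and $\midwelfare=\int_{\quant(\monop(\rev)/\alpha,\rev)}^{\monoq(\rev)}\rev(\quant)/\quant\,d\quant$. The tail integral $\int_{\monoq(\rev)}^1\rev(\quant)/\quant\,d\quant$ is minimized over concave revenue curves with $\monop\cdot\monoq=1$ by the triangular shape and hence depends only on $\monoq(\rev)$; the high-value contribution is controlled by concavity of $\rev$ on $[0,\quant(\monop(\rev)/\alpha,\rev)]$ together with the value $\monop(\rev)/\alpha$; and the middle-value contribution naturally involves $\midwelfare$. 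The three parameters are mutually constrained by concavity and the normalization $\monop\cdot\monoq=1$, which I would encode as explicit feasibility conditions.

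Finally, I would numerically verify that this three-parameter lower bound exceeds $\revsmallmq$ over all feasible triples with $\monoq(\rev)\leq \qthreshold$, yielding the approximation ratio $1/\revsmallmq\approx 1.835$; the discretization and evaluation would be handled in the same style as in \Cref{sec:numerical} for \Cref{lem:regular large large}. The hardest part is choosing the three summaries and intermediate bounds carefully enough that (i) the expression remains a valid lower bound for every concave revenue curve with monopoly quantile at most $\qthreshold$ under the possibility of intermediate optimal bids, and (ii) the numerical minimum is still above $\revsmallmq$ at $\alpha=\alphaval$; too coarse a parametrization loses the target constant, while too fine a parametrization makes the enumeration infeasible.
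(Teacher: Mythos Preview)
Your proposal shares the right three summarizing parameters $(\monoq,\monoalphaq,\midwelfare)$ with the paper, but it has two genuine gaps that make the numerical step fail.

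\textbf{First gap: high-value payment bound is not justified.} You invoke \Cref{lem:preference towards v/alpha} to say that a value $\val\geq\monop(\rev)/\alpha$ prefers bid $\val/\alpha$ to every bid in $[\monop(\rev),\val/\alpha]$, and then conclude her payment is at least $\pricebid(\val/\alpha,\rev)$. But \Cref{lem:preference towards v/alpha} says nothing about bids in $[0,\monop(\rev))$, and since \Cref{lem:regular large opt bid} is unavailable here, the agent may well prefer such a low bid, with much smaller payment. The paper closes this hole with \Cref{lem:critical val for high bid}: a revenue-curve reduction (replacing the sub-monopoly part of $\rev$ by a triangular tail) gives an explicit sufficient condition for $\bid(\val,\rev)\geq\monop(\rev)$, which is then used to upper bound $\criticalval(\rev)$ as a function of $(\monoq,\monoalphaq,\midwelfare)$ and to verify (numerically, Case~(iii)) that in fact $\criticalval(\rev)\leq\monop(\rev)/\alpha$ for all $\monoq\leq\qthreshold$. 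You need an analogous step.

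\textbf{Second gap: dropping the tail is fatal for small $\monoq$.} You bound the contribution from $\quant\geq\monoq(\rev)$ by zero. For small monopoly quantile this region carries almost all the probability mass, and your two remaining pieces cannot make up for it. For instance, at $\monoq=0.2$ with the triangular revenue curve $\rev(0)=0.5$, even granting you the (unproved) high-value bound, your two pieces sum to roughly $0.175+0.130=0.305$, far short of $\revsmallmq$. The paper does not drop this tail: it lower bounds $\priceval(\val(\quant,\rev),\rev)$ for $\quant\geq\monoq$ by first reducing to \emph{pentagon} revenue curves (\Cref{lem:lower bound bid for large quant}) to obtain a lower bound $\bid(\quant,\monoq)$ on the optimal bid depending only on $(\quant,\monoq)$, and then feeding that into \Cref{lem:rev for bid} to get
\[
\priceval(\val(\quant,\rev),\rev)\ \geq\ \frac{\alphaval\log\bigl(\bid(\quant,\monoq)(1-\monoq)+1\bigr)}{1-\monoq}.
\]
This tail term is essential for the bound to reach $\revsmallmq$ when $\monoq$ is small.

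In short, the paper's proof splits at $\criticalquant(\rev)$ rather than at $\monoq(\rev)$, controls $\criticalval(\rev)$ via \Cref{lem:critical val for high bid}, and keeps a nontrivial tail bound via the pentagon reduction of \Cref{lem:lower bound bid for large quant}; without analogues of these two ingredients your numerical minimum will not clear $\revsmallmq$.
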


Fix an arbitrary revenue curve $\rev$, 
let 
\begin{align*}
    \criticalval(\rev) \triangleq \inf\{\val:\bid(\val, \rev)\geq \monop(\rev)\}
\end{align*}
be the smallest value whose optimal bid $\bid(\val, \rev)$
for revenue curve $\rev$ is at least $\monop(\rev)$.
Since \Cref{lem:monotone bid for value}
guarantees that $\bid(\val, \rev)$ is weakly non-decreasing in $\val$,
$\criticalval(\rev)$
is well-defined,
$\bid(\val,\rev) \geq \monop(\rev)$ for all
$\val \geq \criticalval(\rev)$.
Furthermore, by \Cref{lem:preference towards v/alpha},
we know that $\bid(\val,\rev) \geq \sfrac{\monop(\rev)}{\alphaval}$
for all $\val \geq \max\{\criticalval(\rev),\monop(\rev)\}$.
Denote $\quant(\criticalval(\rev), \rev)$ by 
$\criticalquant(\rev)$.
By 
\Cref{lem:monotone payment for bid}
and 
\Cref{lem:monotone bid for value},
the expected revenue $\Revsb{\rev}$ of the \bidmech\ 
for revenue curve $\rev$
can be lower-bounded as follows,
\begin{align*}
    \Revsb{\rev} 
    =& \int_0^1 \priceval(\val(\quant, \rev), \rev)\,d\quant
    \\
    =&
    \int_0^{\min\{\criticalquant(\rev),\monoq(\rev)\}} \priceval(\val(\quant, \rev), \rev)\,d\quant
    +
    \int_{\min\{\criticalquant(\rev),\monoq(\rev)\}}^{\criticalquant(\rev)} \priceval(\val(\quant, \rev), \rev)\,d\quant \\
    &\qquad\qquad\qquad\qquad\qquad\qquad\qquad
    +
    \int_{\criticalquant(\rev)}^1 \priceval(\val(\quant, \rev), \rev)\,d\quant
    \\
    \geq&~
    \pricebid(\sfrac{\monop(\rev)}{\alphaval}, \rev)\cdot \min\{\criticalquant(\rev),\monoq(\rev)\}
    +
    \pricebid(\monop(\rev), \rev) \cdot 
    \max\{0, \criticalquant(\rev) - \monoq(\rev)\} \\
    &\qquad\qquad\qquad\qquad\qquad\qquad\qquad
    +
    \int_{\criticalquant(\rev)}^1 \priceval(\val(\quant, \rev), \rev)\,d\quant.
\end{align*}
Denote $\quant(\sfrac{\monop(\rev)}{\alphaval}, \rev)$
by $\monoalphaq(\rev)$,
and $\int_{\monoalphaq(\rev)}^{\monoq(\rev)}
\frac{\rev(\quant)}{\quant}~d\quant$ by $\midwelfare(\rev)$.
In \Cref{lem:rev for bid}, we lower-bound 
the expected payment $\pricebid(\sfrac{\monop(\rev)}{\alphaval}, \rev)$
and $\pricebid(\monop(\rev), \rev)$
as the function of $\monoq(\rev)$, $\monoalphaq(\rev)$,
$\midwelfare(\rev)$ and $\criticalval(\rev)$.
In \Cref{lem:quantile for val},
we lower-bound $\criticalquant(\rev)$
as the function of $\monoq(\rev)$, $\monoalphaq(\rev)$
and $\criticalval(\rev)$.
In \Cref{lem:critical val for high bid}, we 
upper-bound of $\criticalval(\rev)$
as the function of $\monoq(\rev)$, $\monoalphaq(\rev)$
and $\midwelfare(\rev)$.
In \Cref{lem:lower bound bid for large quant},
we lower-bound 
$\priceval(\val(\quant, \rev), \rev)$ 
as a function of $\monoq(\rev)$ for all quantile $\quant\in [\monoq(\rev), 1]$.
Putting all pieces together, 
we show \Cref{lem:regular small}
by providing a lower bound of expected revenue in the \bidmech\ 
as a function of $\monoq(\rev)$, $\monoalphaq(\rev)$
and $\midwelfare(\rev)$,
and numerically evaluating its value for all possible parameters.
The details of the numerical evaluations in this section are
similar to those of \Cref{lem:regular large large}, 
which are elaborated on in 
\Cref{sec:numerical}.

\begin{figure}[t]
\centering
\begin{tikzpicture}[scale = 0.7]

\draw [white] (0, 0) -- (11.5, 0);
\draw (0,0) -- (10.5, 0);
\draw (0, 0) -- (0, 5);

\draw [dotted] (0, 4.5) -- (4, 4.5);
\draw (-0.3, 4.5) node {$1$};

\draw [line width=2.5pt, color=white!70!black, dashed] plot [smooth, tension=0.8] coordinates {
(0, 0)(1, 3) (3,4.5)};
\draw [line width=2.5pt, color=white!70!black, dashed] plot [smooth, tension=0.8] coordinates {(3,4.5) (7,3.5) (10, 0)
};

\draw [thick] plot [smooth, tension=0.8] coordinates {
(0, 0)(1, 3) (3,4.5)};
\draw [thick] plot (3, 4.5) -- (10, 0);

\draw (0, -0.5) node {$0$};
\draw (10, -0.5) node {$1$};

\draw (3, -0.5) node {$\monopq$};
\draw [dotted] (3, 0) -- (3, 4.5);
\draw (3, 0) -- (3, 0.2);



\draw [dotted] plot (6.5, 2.25) -- (6.5, 0);
\draw (6.5, -0.5) node {$\quant$};
\draw (6.5, 0) -- (6.5, 0.2);

\draw [dotted] plot (6.5, 2.25) -- (0, 2.25);
\draw (-0.8, 2.25) node {$\frac{1 - \quant}{1 - \monoq}$};

\end{tikzpicture}
\caption{\label{f:rev bid}
Graphical illustration for \Cref{lem:rev for bid}.
The gray dashed (resp. black solid) curve is 
     revenue curve $\rev$ (resp.\ lower bound of $\rev$).
}
\end{figure}
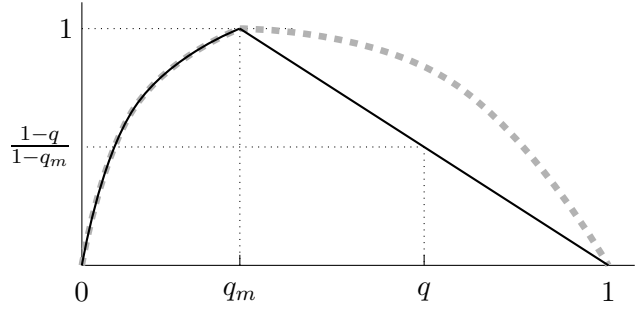
\begin{lemma}\label{lem:rev for bid}
For the \bidmech\
with $\alpha = \alphaval$,
given any concave revenue curve $\rev$,
the expected payment $\pricebid(\bid,\rev)$
of bidding $\bid\in[0,\monop(\rev)]$ is 
at least 
$$\pricebid(\bid,\rev)\geq 
\frac{\alphaval\log(\bid\cdot (1-\monoq(\rev)) + 1)}{1-\monoq(\rev)};$$
and 
the expected payment $\pricebid(\sfrac{\monop(\rev)}{\alphaval},\rev)$
of bidding $\sfrac{\monop(\rev)}{\alphaval}$ is 
at least 
$$
\pricebid(\sfrac{\monop(\rev)}{\alphaval},\rev)\geq 
\left(\frac{\monoalphaq(\rev)}{\monoq(\rev)}+\alphaval\midwelfare(\rev)
-\frac{\alphaval\log(\monoq(\rev)))}{1-\monoq(\rev)}\right).$$
\end{lemma}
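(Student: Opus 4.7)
The plan is to expand $\pricebid(\bid,\rev)$ using its closed-form integral representation and then bound $\rev(\quant)$ from below by a secant line justified by concavity, so that the remaining integral is elementary. The same two ingredients --- a secant lower bound plus a well-chosen split point of the integral --- handle both inequalities.

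For the first bound, start from the identity $\pricebid(\bid,\rev) = \alphaval\,\bid\,\quant(\bid,\rev) + \alphaval\int_{\quant(\bid,\rev)}^1 \rev(\quant)/\quant \,d\quant$. Since $\bid\leq\monop(\rev)$ we have $\quant(\bid,\rev)\geq \monoq(\rev)$, and concavity of $\rev$ together with $\rev(1)=0$ and $\rev(\quant(\bid,\rev))=\bid\cdot\quant(\bid,\rev)$ gives the secant bound $\rev(\quant)\geq \bid\,\quant(\bid,\rev)(1-\quant)/(1-\quant(\bid,\rev))$ on $[\quant(\bid,\rev),1]$. Substituting and computing $\int_{\quant(\bid)}^1 (1-\quant)/\quant\,d\quant = -\log\quant(\bid)-(1-\quant(\bid))$, the cross terms telescope to leave $\pricebid(\bid,\rev)\geq -\alphaval\,\bid\,\quant(\bid,\rev)\log\quant(\bid,\rev)/(1-\quant(\bid,\rev))$.

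The next step is to argue that this lower bound is nondecreasing in $\quant(\bid,\rev)$ for fixed $\bid$: under the substitution $u=1/\quant-1$, the factor $-\quant\log\quant/(1-\quant)$ equals the standard decreasing function $\log(1+u)/u$, so it is increasing in $\quant$. Finally, the secant bound of $\rev$ from $(\monoq(\rev),1)$ to $(1,0)$ yields $\rev(\quant(\bid))\geq (1-\quant(\bid))/(1-\monoq(\rev))$; combined with $\rev(\quant(\bid))=\bid\,\quant(\bid)$, this rearranges to $\quant(\bid,\rev)\geq 1/(\bid(1-\monoq(\rev))+1)$. Plugging this lower bound into the monotone expression and simplifying the algebra gives exactly $\alphaval\log(\bid(1-\monoq(\rev))+1)/(1-\monoq(\rev))$.

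For the second bound, directly expand $\pricebid(\monop(\rev)/\alphaval,\rev)$ using the same formula. Using $\alpha=\alphaval$ and the normalization $\monop(\rev)\monoq(\rev)=1$, the leading term $\alphaval\cdot(\monop(\rev)/\alphaval)\cdot\monoalphaq(\rev)$ simplifies to $\monoalphaq(\rev)/\monoq(\rev)$. Then split $\int_{\monoalphaq(\rev)}^1 = \int_{\monoalphaq(\rev)}^{\monoq(\rev)} + \int_{\monoq(\rev)}^1$: the first piece equals $\midwelfare(\rev)$ by definition, and for the tail apply the same secant bound $\rev(\quant)\geq (1-\quant)/(1-\monoq(\rev))$ on $[\monoq(\rev),1]$ and evaluate $\int_{\monoq(\rev)}^1 (1-\quant)/((1-\monoq(\rev))\quant)\,d\quant$ in closed form.

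The main obstacle is the monotonicity step in the first bound: showing that $-\quant\log\quant/(1-\quant)$ is monotone in $\quant$ on $(0,1]$ is what licenses replacing the implicit quantity $\quant(\bid,\rev)$ by its explicit concavity-derived lower bound $1/(\bid(1-\monoq)+1)$ --- without this monotonicity, the derived bound would depend on the unknown quantity $\quant(\bid,\rev)$ rather than on the distribution-level parameter $\monoq(\rev)$. The rest of the argument is routine calculus applied to the two secant bounds.
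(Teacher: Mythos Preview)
Your argument is correct. For the first inequality you and the paper diverge slightly in the choice of secant: the paper anchors at $(\monoq(\rev),1)$ and $(1,0)$, integrates to obtain
\[
\alphaval\,\bid\,\quant(\bid,\rev)\;-\;\alphaval\,\frac{1-\quant(\bid,\rev)}{1-\monoq(\rev)}\;-\;\frac{\alphaval\log\quant(\bid,\rev)}{1-\monoq(\rev)},
\]
and then (in effect) uses that this expression is convex in $\quant(\bid,\rev)$ with minimizer $\quant(\bid,\rev)=\big(\bid(1-\monoq)+1\big)^{-1}$, at which point the first two terms cancel and the claimed closed form drops out. You instead anchor the secant at $(\quant(\bid,\rev),\bid\,\quant(\bid,\rev))$ and $(1,0)$ --- a tighter lower bound on $[\quant(\bid,\rev),1]$ --- obtaining directly $-\alphaval\,\bid\,\quant(\bid,\rev)\log\quant(\bid,\rev)/(1-\quant(\bid,\rev))$, and then remove the residual dependence on $\quant(\bid,\rev)$ via the monotonicity of $\log(1+u)/u$. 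Both routes land on the same bound; your monotonicity step is arguably cleaner than the paper's stated justification, which invokes simultaneously $\bid\,\quant(\bid,\rev)\geq\frac{1-\quant(\bid,\rev)}{1-\monoq(\rev)}$ and $\quant(\bid,\rev)\leq\big(\bid(1-\monoq)+1\big)^{-1}$ --- two statements that are opposite rearrangements of each other and so cannot both hold strictly; the underlying reason the step works is the convexity observation above. The second inequality is handled identically in your proposal and in the paper.
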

\begin{proof}
By definition, for any $\bid \in [0,\monop(\rev)]$,
    \begin{align*}
        \pricebid(\bid, \rev) &= 
        \alphaval \bid \cdot  \quant(\bid, \rev) +
        \alphaval\int_{\quant(\bid, \rev)}^1 \frac{\rev(\quant)}{\quant}\,
        d\quant
        \\
        &\geq 
        \alphaval\bid \cdot  \quant(\bid, \rev) +
        \alphaval\int_{\quant(\bid, \rev)}^1 \frac{\frac{1-\quant}{1-\monoq(\rev)}}{\quant}\,
        d\quant \\
        &=
        \alphaval\bid \cdot  \quant(\bid, \rev) 
        -\alphaval\frac{1-\quant(\bid, \rev)}{1 - \monoq(\rev)}
        -
        \frac{\alphaval\log(\quant(\bid, \rev))}{1-\monoq(\rev)} \\
        &\geq 
        \frac{\alphaval\log(\bid\cdot (1-\monoq(\rev)) + 1)}{1-\monoq(\rev)}
    \end{align*}
    where the first inequality uses the fact that 
    $\rev(\quant)\geq \frac{1-\quant}{1-\monoq(\rev)}$
    for all $\quant\geq \monoq(\rev)$
    from the regularity of $\rev$,
    and the second inequality use 
    the fact that $\bid \cdot  \quant(\bid, \rev) \geq 
    \frac{1-\quant(\bid, \rev)}{1 - \monoq(\rev)}$,
    and $\quant(\bid, \rev) \leq (\bid\cdot (1-\monoq(\rev)) + 1)^{-1}$
     from the regularity of $\rev$.
    See \Cref{f:rev bid} for a graphical 
    illustration.
    
Similarly,  
    \begin{align*}
        \pricebid(\sfrac{\monop(\rev)}{\alphaval}, \rev) &= 
        \alphaval \frac{\monop(\rev)}{\alphaval} \monoalphaq(\rev)
        +
        \alphaval\int_{\monoalphaq(\rev)}^1 \frac{\rev(\quant)}{\quant}\,
        d\quant \\
        &= \frac{\monoalphaq(\rev)}{\monoq(\rev)} +
        \alphaval\midwelfare(\rev) +
        \alphaval\int_{\monoq(\rev)}^1 \frac{\rev(\quant)}{\quant}\,
        d\quant 
        \\
        &\geq 
        \frac{\monoalphaq(\rev)}{\monoq(\rev)}
        +\alphaval\midwelfare(\rev)
        -\frac{\alphaval\log(\monoq(\rev))}{1 - \monoq(\rev)}~.\qedhere
    \end{align*}
\end{proof}

\begin{figure}[t]
\centering
\begin{tikzpicture}[scale = 0.7]

\draw [white] (0, 0) -- (11.5, 0);
\draw (0,0) -- (10.5, 0);
\draw (0, 0) -- (0, 5);

\draw [dotted] (0, 4.5) -- (4, 4.5);
\draw (-0.3, 4.5) node {$1$};

\draw [line width=2.5pt, color=white!70!black, dashed] plot [smooth, tension=0.8] coordinates {
(0, 0) (0.5, 2) (1, 3)};
\draw [line width=2.5pt, color=white!70!black, dashed] plot [smooth, tension=0.8] coordinates {
(1, 3) (2.5, 4.2) (4,4.5)};
\draw [line width=2.5pt, color=white!70!black, dashed] plot [smooth, tension=0.8] coordinates {(4,4.5) (7,3.5) (10, 0)
};

\draw [thick] plot [smooth, tension=0.8] coordinates {
(0, 0) (0.5, 2) (1, 3)};
\draw [thick] plot (1, 3) -- (4, 4.5);
\draw [thick] plot (4, 4.5) -- (10, 0);

\draw [dotted] plot (0, 0) -- (2, 6);
\draw (2.7, 6.15) node {$\sfrac{\monop}{\alphaval}$};

\draw (0, -0.5) node {$0$};
\draw (10, -0.5) node {$1$};

\draw (4, -0.5) node {$\monopq$};
\draw [dotted] (4, 0) -- (4, 4.5);
\draw (4, 0) -- (4, 0.2);

\draw (1, -0.5) node {$\monoalphaq$};
\draw [dotted] (1, 0) -- (1, 3);
\draw (1, 0) -- (1, 0.2);



\end{tikzpicture}
\caption{\label{f:quant val}
Graphical illustration for \Cref{lem:quantile for val}.
The gray dashed (resp. black solid) curve is 
     revenue curve $\rev_1$ (resp.\ $\rev_2$).
The slope of the dotted line from (0, 0) is $\sfrac{\monop(\rev_1)}{\alphaval}$.
}
\end{figure}
\begin{lemma}\label{lem:quantile for val}
For any concave revenue curve $\rev$,
the quantile $\quant(\val,\rev)$
for value $\val\leq\monop(\rev) $ is at least 
$$\quant(\val,\rev)\geq \frac{1}{1+\val\cdot (1-\monoq(\rev))};$$
and 
the quantile $\quant(\val,\rev)$
for value $\val \in 
[\monop(\rev), \sfrac{\monop(\rev)}{\pscale}]$ 
is at least 
$$\quant(\val,\rev)\geq 
\frac{2\monoq(\rev)-\monoalphaq(\rev)\cdot (1+\sfrac{1}{\pscale})}{1+\val\cdot (1-\monoq(\rev))}.$$
\end{lemma}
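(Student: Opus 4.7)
Both bounds are derived in the same spirit: we translate a concavity-based linear lower bound on $\rev(\quant)$ into a lower bound on $\quant$ via the identity $\rev(\quant) = \quant\cdot\val$.

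For the first bound, the assumption $\val \leq \monop(\rev)$ forces the quantile to lie in $[\monoq(\rev),1]$. Using the normalization $\monop(\rev)\monoq(\rev)=1$ (so $\rev(\monoq(\rev))=1$) together with $\rev(1)=0$, concavity of $\rev$ on $[\monoq(\rev),1]$ applied to the chord between $(\monoq(\rev),1)$ and $(1,0)$ yields the pointwise bound $\rev(\quant)\geq (1-\quant)/(1-\monoq(\rev))$. Substituting $\rev(\quant)=\quant\val$ produces $\quant\val(1-\monoq(\rev))\geq 1-\quant$, which rearranges to $\quant[1+\val(1-\monoq(\rev))]\geq 1$, exactly the claimed bound. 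A picture analogous to \Cref{f:rev bid}, but restricted to $[\monoq,1]$, makes the chord argument transparent.

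For the second bound, the range $\val\in[\monop(\rev),\monop(\rev)/\pscale]$ places the quantile in $[\monoalphaq(\rev),\monoq(\rev)]$. The relevant anchor values on the curve are $\rev(\monoalphaq(\rev)) = \monoalphaq(\rev)\cdot\monop(\rev)/\pscale = \monoalphaq(\rev)/(\monoq(\rev)\pscale)$ and $\rev(\monoq(\rev))=1$. My plan is to combine two ingredients: (i) by concavity on $[\monoalphaq(\rev),\monoq(\rev)]$, the chord between these two anchor points gives a linear lower bound on $\rev(\quant)$ throughout the interval; (ii) the trivial inclusion $\quant\geq \monoalphaq(\rev)$. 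Substituting the chord bound into the identity $\rev(\quant)=\quant\val$ and pairing it with $\quant\geq\monoalphaq(\rev)$ should produce, after algebraic simplification that uses the normalization $\monop(\rev)\monoq(\rev)=1$, an inequality of the form
\begin{equation*}
    \quant + (1-\monoq(\rev))\,\rev(\quant) \;\geq\; 2\monoq(\rev) - \monoalphaq(\rev)\bigl(1+1/\pscale\bigr).
\end{equation*}
Re-expressing $\rev(\quant)=\quant\val$ and dividing by $1+\val(1-\monoq(\rev))$ then delivers the stated lower bound on $\quant(\val,\rev)$.

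The first part is immediate; the main obstacle is the second part, since the specific numerator $2\monoq(\rev)-\monoalphaq(\rev)(1+1/\pscale)$ is not produced by a single chord on its own. I expect the derivation to require a careful combination of the chord lower bound on $\rev$, the bound $\quant\geq\monoalphaq(\rev)$, and the normalization $\monop(\rev)\monoq(\rev)=1$, with the correct linear combination yielding precisely those coefficients. The algebra itself is elementary, but tracking which terms collect to give $2\monoq(\rev)$ versus $\monoalphaq(\rev)(1+1/\pscale)$ is the delicate step; picking the combination incorrectly produces a bound with the right shape but different (typically weaker or stronger) constants, so the specific numerator must be matched by inspection after simplifying.
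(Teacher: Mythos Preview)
Your overall approach --- replace $R$ by a piecewise-linear minorant $R_2\le R$ and read off $q(v,R_2)$ --- is exactly what the paper does. For the first inequality your argument is complete and matches the paper: concavity gives $R(q)\ge (1-q)/(1-\monoq)$ on $[\monoq,1]$, and substituting $R(q)=vq$ rearranges to the claimed bound.

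For the second inequality the paper uses the same chord you name, between $\bigl(\monoalphaq,\,\monoalphaq/(\pscale\,\monoq)\bigr)$ and $(\monoq,1)$, and then simply solves $vq=R_2(q)$ on that linear piece. Your plan diverges: you propose to combine the chord bound with the side inequality $q\ge\monoalphaq$ so as to deduce
\[
q+(1-\monoq)\,R(q)\;\ge\; 2\monoq-\monoalphaq\,(1+1/\pscale).
\]
This does not follow from those two ingredients. Evaluate at $q=\monoq$: the left side is $\monoq+(1-\monoq)\cdot 1=1$, while the right side is $2\monoq-\monoalphaq(1+1/\pscale)$, which exceeds $1$ whenever $\monoq>\tfrac12$ and $\monoalphaq$ is small enough (both permitted here, since this section only assumes $\monoq\le\qthreshold$). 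So no linear combination of ``$R(q)\ge\text{chord}$'' and ``$q\ge\monoalphaq$'' will produce that numerator, and the ``delicate step'' you correctly flag cannot be resolved by the scheme you sketch. The paper's argument for this half is not a combination argument at all: it writes down $R_2$ explicitly (see \Cref{f:quant val}) and computes $q(v,R_2)$ on the middle linear piece. You should do the same --- solve $vq=R_2(q)$ on $[\monoalphaq,\monoq]$ in closed form and compare to the stated expression --- rather than searching for a linear combination of two separate inequalities.
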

\begin{proof}
Given any concave revenue curve $\rev_1$,
consider another revenue curve $\rev_2$
defined as follows,
\begin{align*}
    \rev_2(\quant)\triangleq\left\{
    \begin{array}{ll}
      \rev_1(\quant)   \qquad\qquad & 
      \quant\in [0,\monoalphaq(\rev_1)]~, \\
      \rev_1(\monoalphaq(\rev_1)) + 
      \frac{\quant - \monoalphaq(\rev_1)}
      {\monoq(\rev_1)-\monoalphaq(\rev_1)}
      (1 - \rev_1(\monoalphaq(\rev_1))) 
      & 
      \quant\in [\monoalphaq(\rev_1),\monoq(\rev_1)]~, \\
      \frac{1 - \quant}{1 - \monoq(\rev_1)}
      &
      \quant \in [\monoq(\rev_1), 1]~.
    \end{array}
    \right.
\end{align*}
Since $\rev_1$ is regular, 
we have $\rev_2(\quant) \leq \rev_1(\quant)$
for all $\quant\in[0, 1]$ by construction.
See \Cref{f:quant val} for graphical illustration.
Thus,
for any value $\val \leq\monop{\rev_1}$, we have 
\begin{equation*}
\quant(\val, \rev_1) \geq \quant(\val, \rev_2)
= \frac{1}{1+\val\cdot(1-\monoq(\rev_1))}.
\end{equation*}
Moreover, for any value $\val \in [\monop(\rev_1), \sfrac{\monop(\rev_1)}{\pscale}]$,
we have 
\begin{equation*}
\quant(\val, \rev_1) \geq \quant(\val, \rev_2)
= \frac{2\monoq(\rev_1)-\quant_1(\rev_1)\cdot (1+\sfrac{1}{\pscale})}{1+\val\cdot (1-\monoq(\rev_1))}.
\qedhere
\end{equation*}
\end{proof}

\begin{figure}[t]
\centering
\begin{tikzpicture}[scale = 0.7]

\draw [white] (0, 0) -- (11.5, 0);
\draw (0,0) -- (10.5, 0);
\draw (0, 0) -- (0, 5);

\draw [dotted] (0, 4.5) -- (3, 4.5);
\draw (-0.3, 4.5) node {$1$};

\draw [line width=2.5pt, color=white!70!black, dashed] plot [smooth, tension=0.8] coordinates {
(0, 0)(1, 3) (3,4.5)};
\draw [thick] plot [smooth, tension=0.8] coordinates {
(0, 0)(1, 3) (3,4.5)};
\draw [line width=2.5pt, color=white!70!black, dashed] plot [smooth, tension=0.8] coordinates {(3,4.5) (7,3.5) (10, 0)
};

\draw [thick] plot (3, 4.5) -- (5.8, 4.5);
\draw [thick] plot (5.8, 4.5) -- (10, 0);

\draw (0, -0.5) node {$0$};
\draw (10, -0.5) node {$1$};

\draw (5.8, -0.5) node {$\qhat$};
\draw [dotted] (5.8, 0) -- (5.8, 4.5);
\draw (5.8, 0) -- (5.8, 0.2);

\draw (3, -0.5) node {$\monopq$};
\draw [dotted] (3, 0) -- (3, 4.5);
\draw (3, 0) -- (3, 0.2);

\draw (6.45, -0.5) node {$\quant\primed$};
\draw [dotted] (6.45, 0) -- (6.45, 3.85);
\draw (6.45, 0) -- (6.45, 0.2);

\draw (6.8, 4.6) node {$\bid\primed$};
\draw [dotted] (0, 0) -- (8, 4.7);

\end{tikzpicture}
\caption{\label{f:critical val high bid}
Graphical illustration for \Cref{lem:critical val for high bid}.
The gray dashed (resp. black solid) curve is 
     revenue curve $\rev_1$ (resp.\ $\rev_2$).
The slope of the dotted line from (0, 0) is $\bid\primed$.
}
\end{figure}
\begin{lemma}\label{lem:critical val for high bid}
In the \bidmech\ with parameter $\alpha = \alphaval$,
given any value $\val$ and 
any concave revenue curve $\rev$,
the optimal bid $\bid(\val, \rev)$
for an agent with value $\val$ and revenue curve $\rev$
is at least $\monop(\rev)$
if for all $\qhat\in[\monoq(\rev), 1]$,
\begin{align}
\begin{split}
    \label{eq:critical val high bid}
\val\cdot (1-\monoalphaq(\rev)) 
- \monop(\rev)\cdot \monoalphaq(\rev) 
- \pscale\left(\midwelfare(\rev) 
+ \log \left(\frac{\qhat}{\monoq(\rev)}\right)\right.
&\left.- \frac{\ln(\qhat)}{1-\qhat}\right)\\
&\geq \val(1-\qtilde) +  \frac{\alphaval\log(\qtilde)}{1-\qhat}
\end{split}
\end{align}
where 
$\qtilde \triangleq 
\left({1+
\min\{\sfrac{1}{\qhat},\max\{0, \frac{\val}{\pscale}-\frac{1}{1-\qhat}\}\}
\cdot (1-\qhat)}
\right)^{-1}$.
\end{lemma}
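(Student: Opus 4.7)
The plan is to show that under the stated condition, the utility of a specific bid that exceeds $\monop(\rev)$ (namely $\sfrac{\monop(\rev)}{\alphaval}$, which is at least $\monop(\rev)$ by \Cref{lem:preference towards v/alpha}) dominates the utility of every bid $\bid \in [0, \monop(\rev))$; hence the optimal bid satisfies $\bid(\val, \rev) \geq \monop(\rev)$. The inequality in the lemma statement is naturally viewed as \emph{LHS = lower bound on the utility of bidding $\sfrac{\monop(\rev)}{\alphaval}$}, and \emph{RHS = upper bound on the utility of any bid below $\monop(\rev)$ whose quantile on $\rev$ equals $\qhat$}. Requiring the inequality for all $\qhat \in [\monoq(\rev),1]$ covers all possible $\bid \in [0,\monop(\rev))$, since each such bid has some quantile in that range.

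First, I would derive the LHS as a lower bound on $\util(\val, \sfrac{\monop(\rev)}{\alphaval}, \rev)$. The exact expected payment of bidding $\sfrac{\monop(\rev)}{\alphaval}$ equals $\monop(\rev)\cdot \monoalphaq(\rev) + \alphaval\, \midwelfare(\rev) + \alphaval\int_{\monoq(\rev)}^{1}\rev(\quant)/\quant\, d\quant$. Splitting the last integral at $\qhat$, using $\rev(\quant)\leq 1$ on $[\monoq(\rev),1]$ gives $\int_{\monoq(\rev)}^{\qhat}\rev(\quant)/\quant\,d\quant \leq \log(\qhat/\monoq(\rev))$, and using $\rev(\quant)\leq 1$ together with $1/(1-\qhat)\geq 1$ gives $\int_{\qhat}^{1}\rev(\quant)/\quant\,d\quant \leq -\log(\qhat) \leq -\log(\qhat)/(1-\qhat)$. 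Since the allocation of bidding $\sfrac{\monop(\rev)}{\alphaval}$ equals $1-\monoalphaq(\rev)$, subtracting these payment upper bounds from $\val(1-\monoalphaq(\rev))$ yields exactly the LHS.

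Second, I would upper bound $\util(\val, \bid, \rev)$ for any $\bid < \monop(\rev)$ in terms of $\qhat \triangleq \quant(\bid, \rev)\in[\monoq(\rev),1]$. The idea is to compare $\rev$ to the simplified triangular curve $\tilde\rev_{\qhat}(\quant) = (1-\quant)/(1-\qhat)$ on $[\qhat,1]$. Concavity of $\rev$ and $\rev(1)=0$ give $\rev(\quant)\geq \rev(\qhat)\cdot(1-\quant)/(1-\qhat)$ on $[\qhat,1]$, which lets me lower bound $\int_{\qhat}^1 \rev(\quant)/\quant\,d\quant$ and hence upper bound the utility. After simplification, the bound matches the utility of bidding at quantile $\tilde\quant$ on $\tilde\rev_{\qhat}$, namely $\val(1-\tilde\quant) + \alphaval\log(\tilde\quant)/(1-\qhat)$. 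Maximizing this expression over $\tilde\quant \in[\qhat,1]$ via the first-order condition gives the unconstrained optimum bid $\sfrac{\val}{\alphaval} - 1/(1-\qhat)$, which must be clipped into $[0, 1/\qhat]$ (the lower clip because bids are nonnegative; the upper clip because $1/\qhat$ is the simplified curve's monopoly reserve beyond which the formula breaks). Substituting back yields the value of $\tilde\quant$ stated in the lemma, so the maximum of the simplified-curve utility is exactly the RHS. Combining Step~1 and Step~2 then immediately gives: whenever LHS~$\geq$~RHS for $\qhat = \quant(\bid,\rev)$, bidding $\sfrac{\monop(\rev)}{\alphaval}$ weakly beats $\bid$; requiring this for all $\qhat\in[\monoq(\rev),1]$ covers every $\bid < \monop(\rev)$.

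The main obstacle is the comparison in Step~2: carefully verifying that the concavity bound $\rev(\quant)\geq \rev(\qhat)(1-\quant)/(1-\qhat)$ cleanly converts the payment term on $\rev$ into the correct simplified-curve expression $-\alphaval\log(\tilde\quant)/(1-\qhat)$, and that the clipping structure of $\tilde\quant$ (the $\min$ and $\max$ in its definition) arises exactly from the bid-domain restriction $[0, 1/\qhat]$ on the simplified curve. The algebra is routine but requires care to match constants; once these match, the rest of the argument is a straightforward chain of inequalities.
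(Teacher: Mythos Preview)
Your overall plan is right, and Step~1 (that the LHS of \eqref{eq:critical val high bid} lower-bounds $\util(\val,\sfrac{\monop(\rev)}{\alphaval},\rev)$ for every $\qhat$) is correct as stated. The gap is in Step~2, precisely at the point you flag as the ``main obstacle.'' With your choice $\qhat \triangleq \quant(\bid,\rev)$, the concavity bound $\rev(\quant)\geq \rev(\qhat)\cdot\tfrac{1-\quant}{1-\qhat}$ yields the utility upper bound
\[
\util(\val,\bid,\rev)\ \leq\ \val(1-\qhat)\ +\ \frac{\alphaval\,\rev(\qhat)\,\ln\qhat}{1-\qhat},
\]
which carries the extra factor $\rev(\qhat)\leq 1$ multiplying the (negative) $\ln\qhat$ term. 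This does \emph{not} simplify to the expression $\val(1-\tilde q)+\tfrac{\alphaval\ln\tilde q}{1-\qhat}$: since $\ln\qhat<0$, the factor $\rev(\qhat)<1$ makes your bound \emph{larger}, so it need not lie below the RHS evaluated at this same $\qhat$. In other words, your triangular comparison curve is $\rev(\qhat)\cdot\tfrac{1-\quant}{1-\qhat}$, not $\tilde\rev_{\qhat}(\quant)=\tfrac{1-\quant}{1-\qhat}$, and these do not match unless $\rev(\qhat)=1$.

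The fix, which is exactly what the paper does, is to parametrize differently: for a given bid with quantile $\quant'=\quant(\bid,\rev)$, set $\qhat \triangleq 1-\tfrac{1-\quant'}{\rev(\quant')}$, i.e.\ choose $\qhat$ so that the \emph{unscaled} triangle $\tilde\rev_{\qhat}(\quant)=\tfrac{1-\quant}{1-\qhat}$ passes through $(\quant',\rev(\quant'))$. Concavity then gives $\rev(\quant)\geq\tilde\rev_{\qhat}(\quant)$ on $[\quant',1]$, and one verifies $\qhat\in[\monoq(\rev),1]$. The paper packages this via \Cref{lem:revenue monotone}: it builds the auxiliary curve $\rev_2$ that equals $\rev$ on $[0,\monoq(\rev)]$, is flat at~$1$ on $[\monoq(\rev),\qhat]$, and equals $\tilde\rev_{\qhat}$ on $[\qhat,1]$; Lemma~5.1 transfers the agent's preference for the low bid from $\rev$ to $\rev_2$, and on $\rev_2$ the LHS and RHS of \eqref{eq:critical val high bid} become the \emph{exact} utilities of bidding $\sfrac{\monop(\rev)}{\alphaval}$ and of the best bid in $[0,1/\qhat]$, respectively, giving the contradiction directly. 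Your direct-bounding route works too once you switch to this $\qhat$; as written, though, the ``simplification'' step fails.
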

\begin{proof}
Fix an arbitrary concave revenue curve $\rev$.
We show that inequality~\eqref{eq:critical val high bid}
in the lemma statement 
is a sufficient condition that 
bidding $\sfrac{\monop(\rev)}{\alphaval}$
is weakly preferred than bidding any bids in $[0, \monop(\rev)]$.
The argument is similar to \Cref{lem:regular large opt bid}.

We prove by contradiction, suppose there exists an revenue curve $\rev_1$, and value $\val$ such that 
inequality~\eqref{eq:critical val high bid}
in the lemma statement is satisfied but 
the optimal bid of an agent with value $\val$ and revenue curve
$\rev_1$ is $\bid\primed\in [0, \monop(\rev_1))$.
Denote $\quant(\bid\primed, \rev_1)$ by $\quant\primed$.
Let 
$\qhat\triangleq 
1 - \frac{1 - \quant\primed}{\rev_1(\quant\primed)}$.
By construction, $\qhat \geq \monoq(\rev_1)$.
Now consider another revenue curve $\rev_2$
defined as follows,
\begin{align*}
    \rev_2(\quant)\triangleq
    \left\{
    \begin{array}{ll}
       \rev_1(\quant)  \qquad\qquad & 
       \quant\in [0, \monoq(\rev_1)]~, \\
       1  & 
       \quant\in [\monoq(\rev_1), \qhat]~, \\
       \frac{1 - \quant}{1 - \qhat}    & 
       \quant \in [\qhat, 1]~.
    \end{array}
    \right.
\end{align*}
By construction, $\rev_2$ is a concave revenue curve s.t.\
(i) $\rev_1(\quant)= \rev_2(\quant)$
for all $\quant\in [0, \monoq(\rev_1)]$;
(ii) 
$\rev_1(\quant)\leq \rev_2(\quant)$
for all $\quant\in [\monoq(\rev_1), \quant\primed]$;
and (iii)
$\rev_1(\quant)\geq \rev_2(\quant)$
for all $\quant\in [\quant\primed, 1]$;
See \Cref{f:critical val high bid}
for a graphical illustration. 

Applying \Cref{lem:revenue monotone}
on $\rev_1, \rev_2, \quant\primed, \val$ 
and all $\bid\doubleprimed\geq \bid\primed$,
we conclude that the optimal bid for an agent with 
value $\val$ and revenue curve $\rev_2$
is in $[0, \bid\primed]$.

Note that for any bid $\bid\in [0, \sfrac{1}{\qhat}]$,
the utility $\util(\val, \bid, \rev_2)$ has a
closed-form expression as follows,
\begin{align*}
    \util(\val, \bid, \rev_2) &=
    \val \cdot 
    \frac{\bid(1 - \qhat)}
    {\bid(1 - \qhat) + 1}
    +
    \alphaval\log\left(\frac{1}{\bid(1 - \qhat) + 1}\right).
\end{align*}
Considering the first order condition of $\util(\val, \bid, \rev_2)$
with respect to 
bid $\bid$, after basic simplification,
we have 
\begin{align*}
    \bid = \frac{\val}{\alphaval} - 
    \frac{1}{1 - \qhat}.
\end{align*}
Thus, the optimal bid in $[0, \sfrac{1}{\qhat}]$
for revenue curve $\rev_2$ is 
$\tilde\bid \triangleq \min\{\sfrac{1}{\qhat},\max\{0, \frac{\val}{\pscale}-\frac{1}{1-\qhat}\}\}$.
Plugging $\util(\val, \bid, \rev_2)$ 
with $\bid = \tilde\bid$,
we get 
\begin{align*}
\val(1-\qtilde) +  \frac{\alphaval\log(\qtilde)}{1-\qhat},
\end{align*}
i.e., the right hand side of inequality~\eqref{eq:critical val high bid}.

Moreover, note that the utility $\util(\val, \sfrac{\monop(\rev_1)}{\alphaval}, \rev_2)$ has a
closed-form expression as follows,
\begin{align*}
\val\cdot (1-\monoalphaq(\rev)) 
- \monop(\rev)\cdot \monoalphaq(\rev) 
- \pscale\left(\midwelfare(\rev) 
+ \log \left(\frac{\qhat}{\monoq(\rev)}\right)
- \frac{\ln(\qhat)}{1-\qhat}\right)
\end{align*}
i.e., the left hand side of inequality~\eqref{eq:critical val high bid}. This leads to a contradiction, 
which finishes the proof.
\end{proof}

\begin{definition}
A \emph{pentagon revenue curve $\rev$} 
parameterized by the quantile $\quant_k\in [\monoq(\rev), 1]$
of kink
and the revenue $r_k\in
\left[
\frac{1-\quant_k}{1-\monoq(\rev)}, 
1\right]$ on this kink
is defined as follows
\begin{align*}
    \rev(\quant)\triangleq\left\{
    \begin{array}{ll}
      1   \qquad\qquad & 
      \quant\in [0,\monoq(\rev)]~, \\
      r_k + 
      \frac{\quant - \monoq(\rev)}{\quant_k - \monoq(\rev)}
      (1 - r_k)
      & 
      \quant\in [\monoq(\rev),\quant_k]~, \\
      \frac{1 - \quant}{1 - \quant_k}\cdot r_k
      &
      \quant \in [\quant_k, 1]~.
    \end{array}
    \right.
\end{align*}
\end{definition}
An example of a pentagon revenue curve is illustrated as the solid curve in \Cref{f:critical bid a}
as the solid line.

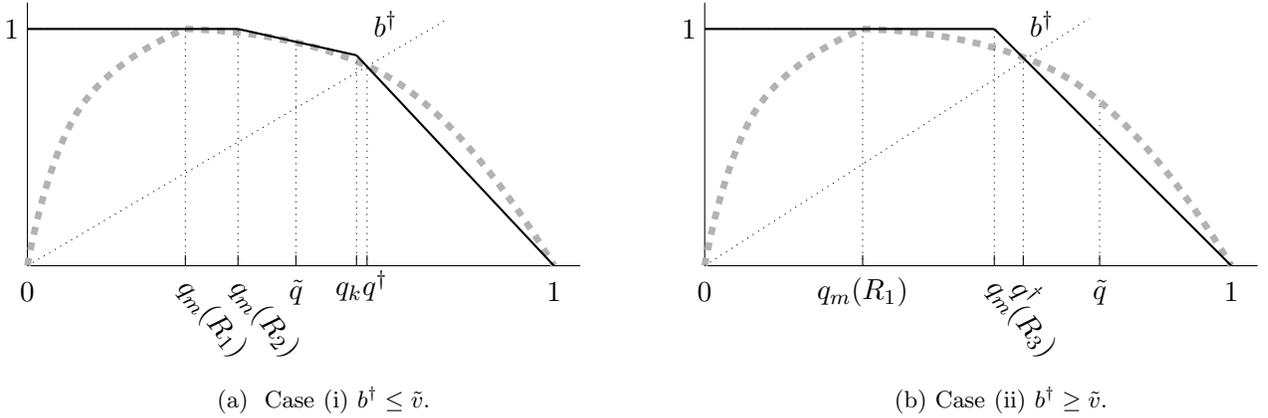
\begin{figure}[t]
\centering
\subfloat[
Case (i) $\bid\primed \leq \vtilde$.
]{
\begin{tikzpicture}[scale = 0.7]

\draw [white] (0, 0) -- (11.5, 0);
\draw (0,0) -- (10.5, 0);
\draw (0, 0) -- (0, 5);

\draw [dotted] (0, 4.5) -- (3, 4.5);
\draw (-0.3, 4.5) node {$1$};

\draw [line width=2.5pt, color=white!70!black, dashed] plot [smooth, tension=0.8] coordinates {
(0, 0)(1, 3) (3,4.5)};
\draw [line width=2.5pt, color=white!70!black, dashed] plot [smooth, tension=0.8] coordinates {(3,4.5) (7,3.5) (10, 0)
};

\draw [thick] plot (0, 4.5) -- (4, 4.5);
\draw [thick] plot (4, 4.5) -- (6.25, 4);
\draw [thick] plot (6.25, 4) -- (10, 0);

\draw (0, -0.5) node {$0$};
\draw (10, -0.5) node {$1$};

\draw (4.5, -1) 
node [rotate=-50] {$\monoq(\rev_2)$};
\draw [dotted] (4, 0) -- (4, 4.5);
\draw (4, 0) -- (4, 0.2);

\draw (3.5, -1) 
node [rotate=-50] {$\monopq(\rev_1)$};
\draw [dotted] (3, 0) -- (3, 4.5);
\draw (3, 0) -- (3, 0.2);

\draw (6.1, -0.5) node {$\quant_k$};
\draw [dotted] (6.25, 0) -- (6.25, 3.95);
\draw (6.25, 0) -- (6.25, 0.2);

\draw [dotted] (6.45, 0) -- (6.45, 3.8);
\draw (6.6, -0.38) node {$\quant\primed$};
\draw (6.45, 0) -- (6.45, 0.2);

\draw (5.1, -0.5) node {$\qtilde$};
\draw [dotted] (5.1, 0) -- (5.1, 4.2);
\draw (5.1, 0) -- (5.1, 0.2);

\draw (6.8, 4.6) node {$\bid\primed$};
\draw [dotted] (0, 0) -- (8, 4.7);

\end{tikzpicture}
\label{f:critical bid a}
}
\subfloat[Case (ii) $\bid\primed \geq \vtilde$.]{
\begin{tikzpicture}[scale = 0.7]

\draw [white] (0, 0) -- (11.5, 0);
\draw (0,0) -- (10.5, 0);
\draw (0, 0) -- (0, 5);

\draw [dotted] (0, 4.5) -- (3, 4.5);
\draw (-0.3, 4.5) node {$1$};

\draw [line width=2.5pt, color=white!70!black, dashed] plot [smooth, tension=0.8] coordinates {
(0, 0)(1, 3) (3,4.5)};
\draw [line width=2.5pt, color=white!70!black, dashed] plot [smooth, tension=0.8] coordinates {(3,4.5) (7,3.5) (10, 0)
};

\draw [thick] plot (0, 4.5) -- (5.5, 4.5);
\draw [thick] plot (5.5, 4.5) -- (10, 0);

\draw (0, -0.5) node {$0$};
\draw (10, -0.5) node {$1$};

\draw (7.5, -0.5) node {$\qtilde$};
\draw [dotted] (7.5, 0) -- (7.5, 3.2);
\draw (7.5, 0) -- (7.5, 0.2);

\draw (3, -0.5) node {$\monopq(\rev_1)$};
\draw [dotted] (3, 0) -- (3, 4.5);
\draw (3, 0) -- (3, 0.2);

\draw (6, -1.) 
node [rotate = -50]
{$\monopq(\rev_3)$};
\draw [dotted] (5.5, 0) -- (5.5, 4.5);
\draw (5.5, 0) -- (5.5, 0.2);

\draw (6.1, -0.5) 
node [rotate = -50] {$\quant\primed$};
\draw [dotted] (6.05, 0) -- (6.05, 3.9);
\draw (6.05, 0) -- (6.05, 0.2);

\draw (6.4, 4.6) node {${\bid\primed}$};
\draw [dotted] (0, 0) -- (7.3, 4.7);

\end{tikzpicture}
\label{f:critical bid b}
}
\caption{\label{f:critical bid}
Graphical illustration for \Cref{lem:lower bound bid for large quant}.
The gray dashed (resp. black solid) curve is 
     revenue curve $\rev_1$ (resp.\ $\rev_2$ in (a) and 
    $\rev_3$ in (b)).
The slope of the dotted line from (0, 0) is $\bid\primed$.
}
\end{figure}
\begin{lemma}\label{lem:lower bound bid for large quant}
In the \bidmech,
given any quantile $\qhat\in [0, 1]$,
quantile $\qtilde \in [\qhat, 1]$, and bid $\bid \in [0, \sfrac{1}{\qhat}]$,
if for all pentagon revenue curves $\rev_{\text{P}}$ with
$\monoq(\rev_{\text{P}}) \geq \qhat$,
the optimal bid of an agent with value 
$\val(\qtilde,\rev_{\text{P}})$ and revenue curve 
$\rev_{\text{P}}$
is at least $\bid$;
then 
for all concave revenue curves $\rev$
with
$\monoq(\rev) = \qhat$,
the optimal bid of an agent with value 
$\val(\qtilde,\rev)$ and revenue curve 
$\rev$
is at least $\bid$ as well.
\end{lemma}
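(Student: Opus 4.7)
The plan is to prove the contrapositive: suppose a concave revenue curve $\rev$ with $\monoq(\rev) = \qhat$ has optimal bid $\bid\primed < \bid$ at value $\vtilde \triangleq \val(\qtilde, \rev)$, and let $\quant\primed \triangleq \quant(\bid\primed, \rev)$, so that $\bid\primed = \rev(\quant\primed)/\quant\primed$. I will construct a pentagon revenue curve $\rev_{\text{P}}$ with $\monoq(\rev_{\text{P}}) \geq \qhat$ satisfying (a) $\rev_{\text{P}}(\quant) \geq \rev(\quant)$ on $[0,\quant\primed]$ with equality at $\quant\primed$, and (b) $\val(\qtilde,\rev_{\text{P}}) \leq \vtilde$. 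Property (a) will let \Cref{lem:revenue monotone} transfer the weak preference of $\bid\primed$ over every $\bid^* \geq \bid\primed$ from $\rev$ to $\rev_{\text{P}}$, so the optimal bid at value $\vtilde$ on $\rev_{\text{P}}$ is at most $\bid\primed < \bid$; property (b) together with the bid-monotonicity \Cref{lem:monotone bid for value} will then push this conclusion down to value $\val(\qtilde,\rev_{\text{P}})$, contradicting the hypothesis of the lemma applied to $\rev_{\text{P}}$.

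The construction of $\rev_{\text{P}}$ splits into the two cases of \Cref{f:critical bid}. In case (i), $\bid\primed \leq \vtilde$ (equivalently $\quant\primed \geq \qtilde$), I would take the left linear piece of $\rev_{\text{P}}$ to be the tangent to $\rev$ at $\qtilde$ and the right linear piece to be the line through $(\quant\primed, \rev(\quant\primed))$ and $(1,0)$; the kink $(\quant_k, r_k)$ is their intersection, which lies in $(\qtilde, \quant\primed)$ by the monotonicity of $\rev(\quant)/(1-\quant)$ on concave $\rev$ with $\rev(1)=0$, and $\monoq(\rev_{\text{P}})$ is where the tangent hits height~$1$. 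This forces $\rev_{\text{P}}(\qtilde) = \rev(\qtilde)$, so (b) holds with equality. In case (ii), $\bid\primed \geq \vtilde$ (equivalently $\quant\primed \leq \qtilde$), I would take $\rev_{\text{P}}$ to be the degenerate triangular pentagon whose single descending piece passes through $(\quant\primed, \rev(\quant\primed))$ and $(1,0)$, giving $\monoq(\rev_{\text{P}}) = 1 - (1-\quant\primed)/\rev(\quant\primed)$; concavity of $\rev$ then yields (b) as the inequality $\rev(\qtilde) \geq \rev(\quant\primed)(1-\qtilde)/(1-\quant\primed)$, which is exactly why the bid-monotonicity step is needed.

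The main obstacle will be the geometric verification of (a) together with the parameter-range constraint $\monoq(\rev_{\text{P}}) \geq \qhat$. For (a), the tangent property of concave functions immediately handles $[0,\qtilde]$ in case (i) and the flat-top region in case (ii); the remaining subintervals $[\qtilde,\quant\primed]$ (case (i)) and $[\monoq(\rev_{\text{P}}),\quant\primed]$ (case (ii)) would be treated by a uniform convexity argument on $\rev_{\text{P}} - \rev$ (linear minus concave), using that its right-derivative at $\quant\primed$ is nonpositive: this reduces to $-\rev(\quant\primed)/(1-\quant\primed) \leq \rev'(\quant\primed)$, which is concavity of $\rev$ combined with $\rev(1)=0$. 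The constraint $\monoq(\rev_{\text{P}}) \geq \qhat$ reduces in both cases to $\rev(\quant\primed) \geq (1-\quant\primed)/(1-\qhat)$, which is concavity of $\rev$ above the chord from $(\qhat, 1)$ to $(1,0)$, valid because $\quant\primed > \qhat$ (a consequence of $\bid\primed < \bid \leq 1/\qhat$ and positive density). Once these geometric checks are in place, the invocation of \Cref{lem:revenue monotone} and extraction of the contradiction are mechanical.
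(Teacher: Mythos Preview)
Your proposal is correct and matches the paper's proof essentially step for step: the same contrapositive setup, the same two-case split on $\bid\primed \lessgtr \vtilde$, the same pentagon constructions (tangent-at-$\qtilde$ plus chord through $(\quant\primed,\rev(\quant\primed))$ and $(1,0)$ in case~(i); the degenerate triangle through $(\quant\primed,\rev(\quant\primed))$ and $(1,0)$ in case~(ii)), and the same invocation of \Cref{lem:revenue monotone} followed by \Cref{lem:monotone bid for value} in case~(ii). One small inaccuracy: the check $\monoq(\rev_{\text{P}})\geq\qhat$ does \emph{not} reduce in case~(i) to the chord inequality at $\quant\primed$ you state; in case~(i) it follows directly from the tangent-above-graph property of concave functions at $\qhat$ (since the tangent at $\qtilde$ satisfies $L(\qhat)\geq\rev(\qhat)=1$ and $L$ is non-increasing), which you already have available.
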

\begin{proof}
Fix arbitrary $\qhat\in[0, 1]$, $\qtilde\in[\qhat, 1]$,
and concave revenue curve $\rev_1$ with $\monoq(\rev_1) = \qhat$.
Let $\bid\primed$ be the optimal bid 
for an agent with value $\val(\qtilde,\rev_1)
~(\triangleq \vtilde)$
and revenue curve $\rev_1$.
To show this lemma, it is sufficient to assume $\bid\primed \leq \sfrac{1}{\monoq(\rev_1)}$.
Now we consider two cases, i.e., $\bid\primed \leq \vtilde$ and 
 $\bid\primed \geq \vtilde$ separately.

\vspace{10pt}
\noindent\textsl{Case (i) $\bid\primed \leq \vtilde$:}
Consider the pentagon revenue curve $\rev_2$ 
with 
\begin{align*}
\begin{gathered}
    \monoq(\rev_2) = \qtilde + 
\frac{1 - \rev_1(\qtilde)}{\rev_1'(\qtilde)},
~~~~~~~~~~~
    \quant_k =
    \frac{
    \qtilde \rev_1'(\qtilde) - \rev_1(\qtilde)
    +
    \frac{\rev_1(\quant\primed)}{1 - \quant\primed}
    }{
    \rev_1'(\qtilde) + 
    \frac{\rev_1(\quant\primed)}{1 - \quant\primed}
    },\\
    r_k = \frac{1 - \quant_k}
    {1 - \quant\primed}
    \rev_1(\quant\primed).
\end{gathered}
\end{align*}
where $\rev_1'(\qtilde)$ is the right-hand derivative of $\rev_1(\quant)$ at 
$\quant = \qtilde$.
By construction, we have 
(i) $\rev_2(\qtilde) = \rev_1(\qtilde)$
and thus $\val(\qtilde, \rev_2) = \val(\qtilde,\rev_1) = \vtilde$;
(ii)
$\rev_2(\quant\primed) = \rev_1(\quant\primed)$;
and 
(iii)
$\rev_2(\quant)\geq \rev_1(\quant)$
for all $\quant \in [0, \quant\primed]$.
See \Cref{f:critical bid a} for a graphical illustration.

Applying \Cref{lem:revenue monotone} on $\rev_1$,
$\rev_2$, $\quant\primed$, $\vtilde$
and all $\bid\doubleprimed\geq \bid\primed$,
we conclude that the optimal bid for value $\vtilde$
is weakly smaller than $\bid\primed$.
Thus, for any bid $\bid\in [0, \sfrac{1}{\qhat}]$,
if the optimal bid for value $\val(\qtilde, \rev_2)$ in revenue curve $\rev_2$ is at least $\bid$, 
then the optimal bid $\bid\primed$
for value $\val(\qtilde, \rev_1)$ 
in revenue curve $\rev_1 $ is at least $\bid$ as well.

\vspace{10pt}
\noindent\textsl{Case (ii) $\bid\primed \geq \vtilde$:}
Consider the pentagon revenue curve $\rev_3$
with 
\begin{align*}
\begin{gathered}
    \monoq(\rev_3) = 1 - 
    \frac{1-\quant\primed}
    {\rev_1(\quant\primed)},
~~~~~~~~~~~
    \quant_k = \monoq(\rev_3)
    ,
~~~~~~~~~~~
    r_k = 1.
\end{gathered}
\end{align*}
By construction, we have 
(i) $\quant(\vtilde, \rev_3) \leq \quant(\vtilde, \rev_1)$
and
thus $\val(\qtilde, \rev_3) \leq \val(\qtilde, \rev_1)$;
(ii)
$\rev_3(\quant\primed) = \rev_1(\quant\primed)$;
and 
(iii)
$\rev_3(\quant)\geq \rev_1(\quant)$
for all $\quant \in [0, \quant\primed]$.
See \Cref{f:critical bid b} for a graphical illustration.

Applying \Cref{lem:revenue monotone} on $\rev_1$,
$\rev_3$, $\quant\primed$, $\vtilde$
and all $\bid\doubleprimed\geq \bid\primed$,
we conclude that the optimal bid for value $\vtilde$
is weakly smaller than $\bid\primed$.
Thus, for any bid $\bid\in [0, \sfrac{1}{\qhat}]$,
if the optimal bid for value $\val(\qtilde, \rev_3)$ in revenue curve $\rev_3$ is at least $\bid$, 
then combining with \Cref{lem:monotone bid for value},
the optimal bid $\bid\primed$
for value $\val(\qtilde, \rev_1)$ 
in revenue curve $\rev_1 $ is at least $\bid$ as well.
\end{proof}

Now we are ready to prove \Cref{lem:regular small}.
\begin{proof}[Proof of \Cref{lem:regular small}]
Fix an arbitrary concave revenue curve $\rev$
with $\monoq(\rev) \leq \qthreshold$.
We consider $\criticalval(\rev) \leq \monop(\rev)$,
$ \monop(\rev)\leq 
\criticalval(\rev) \leq  \sfrac{\monop(\rev)}{\alphaval}$,
and $\criticalval(\rev)\geq \sfrac{\monop(\rev)}{\alphaval}$ separately.

\vspace{10pt}
\noindent\textsl{Case (i) 
$\criticalval(\rev) \leq \monop(\rev)$:}
By 
\Cref{lem:monotone payment for bid}
and 
\Cref{lem:monotone bid for value},
the expected revenue $\Revsb{\rev}$ of the \bidmech\ 
for revenue curve $\rev$
can be lower-bounded as follows,
\begin{align*}
    \Revsb{\rev} 
    =& \int_0^1 \priceval(\val(\quant, \rev), \rev)\,d\quant
    \\
    =&
    \int_0^{\monoq(\rev)} \priceval(\val(\quant, \rev), \rev)\,d\quant
    +
    \int_{\monoq(\rev)}^{\criticalquant(\rev)} \priceval(\val(\quant, \rev), \rev)\,d\quant
    +
    \int_{\criticalquant(\rev)}^1 \priceval(\val(\quant, \rev), \rev)\,d\quant
    \\
    \geq&~
    \pricebid(\sfrac{\monop(\rev)}{\alphaval}, \rev)\cdot \monoq(\rev)
    +
    \pricebid(\monop(\rev), \rev)
    \cdot (\criticalquant(\rev) - \monoq(\rev))
    +
    \int_{\criticalquant(\rev)}^1 \priceval(\val(\quant, \rev), \rev)\,d\quant.
\intertext{
Invoking \Cref{lem:rev for bid} and \Cref{lem:quantile for val},
we can rewrite the lower bound of $\Rev{\rev}$
as}
\geq& 
\left(\frac{\monoalphaq(\rev)}{\monoq(\rev)}+\alphaval\midwelfare(\rev)
-\frac{\alphaval\log(\monoq(\rev)))}{1-\monoq(\rev)}\right)
\cdot \monoq(\rev) \\
&~~~
-\frac{\alphaval\log(\monoq(\rev))}{1-\monoq(\rev)}
\cdot \left(
\frac{1}{
1-\criticalval(\rev)
\cdot 
(1 + \monoq(\rev))
}-\monoq(\rev)\right)
    +
    \int_{\criticalquant(\rev)}^1 
    \priceval(\val(\quant, \rev), \rev)\,d\quant.
\intertext{Note that this lower bound is weakly decreasing
in $\criticalval(\rev)$ while holding everything else
fixed. Let $\criticalval(
\monoq(\rev), \monoalphaq(\rev),\midwelfare(\rev)
)$ be the upper bound
of $\criticalval(\rev)$ as the function of 
$\monoq(\rev), \monoalphaq(\rev),\midwelfare(\rev)$
established in \Cref{lem:critical val for high bid}.
From \Cref{lem:quantile for val},
we can lower bound $\criticalval(\monoq(\rev), \monoalphaq(\rev),\midwelfare(\rev)
))$ by 
$\criticalquant(\monoq(\rev), \monoalphaq(\rev),\midwelfare(\rev))
\triangleq
\left({
1-\criticalval(\monoq(\rev), \monoalphaq(\rev), \midwelfare(\rev))
\cdot 
(1 + \monoq(\rev))
}\right)^{-1}
$.
Let $\bid(\quant, \monoq(\rev))$
be the lower bound of the optimal bid for an agent with value 
$\val(\quant, \rev)$ and revenue curve $\rev$
as the function of $\quant, \monoq(\rev)$
established in \Cref{lem:lower bound bid for large quant}.
Then, we can further rewrite the lower bound of $\Rev{\rev}$ as 
}
\geq & 
\left(\frac{\monoalphaq(\rev)}{\monoq(\rev)}+\alphaval\midwelfare(\rev)
-\frac{\alphaval\log(\monoq(\rev)))}{1-\monoq(\rev)}\right)
\cdot \monoq(\rev) \\
&~~~
-\frac{\alphaval\log(\monoq(\rev))}{1-\monoq(\rev)}
\cdot \left(
\criticalquant(\monoq(\rev), \monoalphaq(\rev),\midwelfare(\rev))
-\monoq(\rev)\right) \\
&~~~
    +
    \int_{\criticalquant(\monoq(\rev),
    \monoalphaq(\rev),\midwelfare(\rev))}^1 
    \frac{\alphaval\log(\bid(\quant, \monoq(\rev))\cdot (1-\monoq(\rev)) + 1)}{1-\monoq(\rev)}
    \,d\quant.
\end{align*}
where the bid $\bid(\quant, \monoq(\rev))$ in the last term can be lower-bounded using  
\Cref{lem:rev for bid}.

Therefore, we lower-bound $\Revsb{\rev}$
as the function of $\monoq(\rev), \monoalphaq(\rev),\midwelfare(\rev)$.
By numerically enumerating all possible parameters,
we conclude that $\Revsb{\rev} \geq \revsmallmq$ in this case.

\vspace{10pt}
\noindent\textsl{Case (ii) 
$\monop(\rev) \leq 
\criticalval(\rev) \leq \sfrac{\monop(\rev)}{\alphaval}$:}
The analysis is similar to case (i).
By 
\Cref{lem:monotone payment for bid}
and 
\Cref{lem:monotone bid for value},
the expected revenue $\Revsb{\rev}$ of the \bidmech\ 
for revenue curve $\rev$
can be lower-bounded as follows,
\begin{align*}
    \Revsb{\rev} 
    =& \int_0^1 \priceval(\val(\quant, \rev), \rev)\,d\quant
    \\
    \geq&
    \int_0^{\criticalquant(\rev)} \priceval(\val(\quant, \rev), \rev)\,d\quant
    +
    \int_{\monoq(\rev)}^1 \priceval(\val(\quant, \rev), \rev)\,d\quant
    \\
    \geq&~
    \pricebid(\sfrac{\monop(\rev)}{\alphaval}, \rev)
    \cdot \criticalquant(\rev)
    +
    \int_{\monoq(\rev)}^1 \priceval(\val(\quant, \rev), \rev)\,d\quant.
\intertext{
Invoking \Cref{lem:rev for bid} and \Cref{lem:quantile for val},
we can rewrite the lower bound of $\Rev{\rev}$
as}
\geq& 
\left(\frac{\monoalphaq(\rev)}{\monoq(\rev)}+\alphaval\midwelfare(\rev)
-\frac{\alphaval\log(\monoq(\rev)))}{1-\monoq(\rev)}\right)
\cdot
\frac{2\monoq(\rev)-\monoalphaq(\rev)\cdot (1+\sfrac{1}{\pscale})}{1+\criticalval(\rev)\cdot (1-\monoq(\rev))}
\\
&~~~
    +
    \int_{\monoq(\rev)}^1 
    \priceval(\val(\quant, \rev), \rev)\,d\quant.
\intertext{Note that this lower bound is weakly decreasing
in $\criticalval(\rev)$ while holding everything else
fixed. Let $\criticalval(
\monoq(\rev), \monoalphaq(\rev),\midwelfare(\rev)
)$ be the upper bound
of $\criticalval(\rev)$ 
established in \Cref{lem:critical val for high bid}.
Let $\bid(\quant, \monoq(\rev))$
be the lower bound of the optimal bid for an agent with value 
$\val(\quant, \rev)$ and revenue curve $\rev$
established in \Cref{lem:lower bound bid for large quant}.
Then, we can further rewrite the lower bound as 
}
\geq & 
\left(\frac{\monoalphaq(\rev)}{\monoq(\rev)}+\alphaval\midwelfare(\rev)
-\frac{\alphaval\log(\monoq(\rev)))}{1-\monoq(\rev)}\right)
\cdot
\frac{2\monoq(\rev)-\monoalphaq(\rev)\cdot (1+\sfrac{1}{\pscale})}{1+\criticalval(\monoq(\rev),\monoalphaq(\rev),\midwelfare(\rev))\cdot (1-\monoq(\rev))}
\\
&~~~
    +
    \int_{\monoq(\rev)}^1 
    \frac{\alphaval\log(\bid(\quant, \monoq(\rev))\cdot (1-\monoq(\rev)) + 1)}{1-\monoq(\rev)}
    \,d\quant.
\end{align*}
where the bid $\bid(\quant, \monoq(\rev))$ in the last term can be lower-bounded using  
\Cref{lem:rev for bid}.

Therefore, we lower-bound $\Revsb{\rev}$
as the function of $\monoq(\rev), \monoalphaq(\rev),\midwelfare(\rev)$.
By numerically enumerating all possible parameters,
we conclude that $\Revsb{\rev} \geq \revsmallmq$ in this case.

\vspace{10pt}
\noindent\textsl{Case (iii) 
$\criticalval(\rev) \geq \sfrac{\monop(\rev)}{\alphaval}$:}
\Cref{lem:critical val for high bid} upper-bounds 
$\criticalval(\rev)$ 
as the function of 
$\monoq(\rev)$, $\monoalphaq(\rev)$
and $\midwelfare(\rev)$.
By numerically enumerating all possible parameters,
we conclude that $\criticalval(\rev) \geq \sfrac{\monop(\rev)}{\alphaval}$
is not possible for any revenue curve $\rev$ with $\monoq(\rev) \leq \qthreshold$.
\end{proof}


\section{Prior-independent Approximation Lower Bound}
\label{sec:low}
In this section, we show that no mechanism can achieve prior-independent approximation better than $1.07$ even when the class of distributions are uniform distributions. 
Note that point mass distributions are special cases of the uniform distributions. 
The lower bound we will prove in this section 
holds for more general families of mechanisms than the single-round mechanisms that we introduced in \Cref{sec:prelim}. 
Here we will show that even when the agent and the seller 
have multiple rounds of communication in general messages spaces, 
no mechanism can achieve prior-independent approximation better than $1.07$.
However, since our analysis does not hinge on the exact format of the mechanism, we will not formally introduce the model for multi-rounds of communication. 

\begin{theorem}
\label{thm:low uniform}
For a single item, a single uniformly distributed agent, 
and a single valuation sample, 
the prior-independent approximation ratio for revenue maximization is at least $1.07$. 
\end{theorem}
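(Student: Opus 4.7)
The plan is to argue by contradiction. Suppose that some mechanism $\mech$ achieves prior-independent approximation ratio $\beta < 1.073$ on every uniform distribution, including point masses $F_v$ (the degenerate uniform distribution supported at $\{v\}$). The core tension I want to exploit is that on a point-mass distribution the agent has perfect information about the sample, whereas on a proper uniform distribution she does not; requiring the same mechanism to perform well on both restricts its allocation and payment rules.

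For a point-mass distribution $F_v$, the sample equals $v$ almost surely. The agent with value $v$ thus faces a deterministic conditional mechanism (or, in the multi-round case, a protocol whose continuation she can exactly predict for each of her deviations), and she is free to deviate to any strategy, including the equilibrium strategies of other values in other distributions. Combining $\Rev[F_v]{\mech} \ge v/\beta$ with IR, her equilibrium utility on $F_v$ is at most $v(1-1/\beta)$. By revealed preference over arbitrary deviations $\bid$, this yields the pointwise constraint
\[
\pricebid(\bid, v) \;\ge\; v \cdot \allocbid(\bid, v) \;-\; v\bigl(1 - 1/\beta\bigr)
\]
on the mechanism, for every $\bid$ and every sample $v$.

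Next, I transfer this constraint to a proper uniform distribution $F = U[0, H]$. Let $\bid^F(v')$ denote the BNE bid of a value-$v'$ agent in $F$. Plugging $\bid = \bid^F(v')$ into the pointwise inequality and integrating over $s \sim U[0, H]$ and $v' \sim U[0, H]$ yields a lower bound on $\Rev[F]{\mech}$ involving the ``sample-weighted allocation'' $\expect[v', s]{s \cdot \allocbid(\bid^F(v'), s)}$. This sample-weighted allocation is in turn constrained from above by the agent's IR in $F$, by monotonicity of the interim allocation $X(v') \triangleq \expect[s]{\allocbid(\bid^F(v'), s)}$ (which holds in any BNE), and by Myerson's payment identity. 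Together with the revenue requirement $\Rev[F]{\mech} \ge H/(4\beta)$, this produces a feasibility system on the interim allocation functional.

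The final step is to exhibit $H$ (and, if necessary, a small auxiliary family of distributions) for which this feasibility system is infeasible whenever $\beta < 1.073$. The hardest part will be pinning down the precise numerical constant $1.073$; I expect this to require a numerical optimization over a discretized or parameterized allocation rule, possibly an LP whose infeasibility at $\beta = 1.073$ certifies the lower bound. A second subtlety, though I expect it to be routine, is extending the revealed-preference argument to the full multi-round setting claimed in the theorem; this should follow because on a point-mass distribution the agent can simulate any deviation strategy and predict the resulting outcome deterministically, so the revealed-preference inequality does not depend on the particular message format of $\mech$.
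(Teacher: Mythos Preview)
Your core idea matches the paper's exactly: the pointwise revealed-preference inequality
\[
\pricebid(\bid,s)\;\ge\; s\cdot \allocbid(\bid,s)\;-\;s\bigl(1-\tfrac{1}{\beta}\bigr)
\]
derived from point-mass distributions is precisely the paper's key inequality~\eqref{eq:bound util for sample}. Where you diverge is in how to turn this into a contradiction, and there the proposal has a gap.

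You plan to integrate the pointwise inequality over \emph{both} $v'\sim F$ and $s\sim F$, obtaining a lower bound on $\Rev[F]{\mech}$, then combine it with the revenue requirement $\Rev[F]{\mech}\ge H/(4\beta)$. But these are two lower bounds on the same quantity; they do not conflict. Your remark that the sample-weighted allocation is ``constrained from above by the agent's IR'' has the direction backwards: to create tension you need a \emph{lower} bound on $\expect[s]{s\cdot\allocbid(\bid^F(v'),s)}$ together with an \emph{upper} bound on the corresponding payment, and IR/Myerson give the opposite signs. As sketched, the feasibility system never becomes infeasible, and the promised LP would not certify anything.

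The paper closes the argument more directly and without any numerics. It works with $F=U[1,2]$ (not $U[0,H]$), chosen because the Bayesian optimum sells to everyone at price $1$; and it applies the pointwise constraint only at the single bid $\bid^F(2)$ of the \emph{highest} value, integrating over $s$ alone. The revenue requirement on $F$ is then cashed out as two explicit one-dimensional bounds: $\alloc(2,F)\ge 1/\beta$ (otherwise every allocation, hence virtual welfare, is scaled below $1/\beta$) and $\util(2,F)\ge 1-\sqrt{1-1/\beta}$ (otherwise the best mechanism respecting that utility cap posts too high a price and misses the $1/\beta$ target). The first lower-bounds $\expect[s]{s\cdot\alloc(2,F,s)}$ by $\int_1^{1+1/\beta}s\,ds$ (concentrate allocation on the smallest samples); the second upper-bounds $\price(2,F)$ by $2-\util(2,F)$. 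Substituting into the $s$-averaged pointwise inequality yields a single closed-form inequality in $\beta$ whose solution is $\beta\ge 1.0737$. Focusing on the top value and choosing a uniform whose optimum allocates everywhere is what makes the argument a one-liner rather than an optimization problem.
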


The main idea for proving \Cref{thm:low uniform} is as follows. 
Consider two scenarios where the valuation distribution of the agent is either uniform between $[1, 2]$ or a pointmass with some value $\val \in [1, 2]$.
Note that the optimal mechanism for an agent with value from the uniform distribution $\rm{U}[1, 2]$ 
is to always allocate the item with expected payment $1$.
Thus if the mechanism is optimal for this setting, 
when the valuation distribution for the agent is actually a pointmass with some value $\val \in [1, 2]$, 
the agent can always imitate the type in a uniform distribution $U[1, 2]$
to win the item and pay at most 1 in expectation. 
This indicates that the optimal prior-independent approximation ratio is strictly above 1. 
By leveraging the approximation ratio in those two cases, 
we show that the optimal ratio is at least 1.07.


Before the proof of \Cref{thm:low uniform},
we first introduce several notations 
and present several properties for non-truthful mechanisms $\mech$ with prior-independent approximation ratio $\piratio$. 

\begin{lemma}\label{clm:alloc bound}
For single item, single agent, 
any distribution $\dist$ with support $[\lval, \hval]$,
for non-truthful mechanism with prior-independent approximation ratio $\piratio$, 
the interim allocation for agent with highest value~$\hval$
is $\alloc(\hval, \dist) \geq \frac{1}{\piratio}$.
\end{lemma}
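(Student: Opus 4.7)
The plan is to instantiate the prior-independent approximation guarantee on the pointmass distribution at $\hval$ and combine it with individual rationality. Let $F_\hval$ denote the pointmass at $\hval$, which is an admissible instance of the lemma (corresponding to $\lval=\hval$). The Myerson-optimal revenue on $F_\hval$ is $\hval$, attained by posting $\hval$ as a take-it-or-leave-it price, so the prior-independent approximation ratio guarantee forces $\Rev[F_\hval]{\mech}\ge \hval/\piratio$.

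Next I would unpack $\Rev[F_\hval]{\mech}$. Under $F_\hval$ both the sample and the agent's value are deterministically equal to $\hval$, so writing $\bid^\star=\bid(\hval,F_\hval)$ for the best-response bid gives $\Rev[F_\hval]{\mech}=\pricebid(\bid^\star,\hval)$ and $\alloc(\hval,F_\hval)=\allocbid(\bid^\star,\hval)$. Individual rationality at type $\hval$ forces $\pricebid(\bid^\star,\hval)\le \hval\cdot\allocbid(\bid^\star,\hval)$, and chaining this with the revenue lower bound yields $\alloc(\hval,F_\hval)\ge 1/\piratio$, which is exactly the claim for $F=F_\hval$.

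To upgrade the bound from the pointmass $F_\hval$ to an arbitrary $F$ with top-of-support $\hval$, I would invoke the best-response-induced direct mechanism $(\alloc(\cdot,F),\price(\cdot,F))$. Because this mechanism is Bayesian IC and IR, $\alloc(\cdot,F)$ is monotone nondecreasing in value, so $\alloc(\hval,F)\ge \alloc(v,F)$ for every $v\le \hval$, and the IR slack together with the revenue-identity $\Rev[F]{\mech}\ge \OPT_F/\piratio$ constrains the allocation at the top type. The main obstacle is that when we swap $F$ for $F_\hval$, both the agent's best-response bid and the sample distribution change simultaneously, so a direct pointwise comparison of the two interim allocations is not immediate; I would handle this either by a limit/continuity argument that concentrates $F$ around $\hval$, or by an explicit deviation in which the $\hval$-type under $F$ imitates the bid $\bid^\star$ and the IR/optimality of her actual best response shows that her true allocation cannot be smaller than $1/\piratio$.
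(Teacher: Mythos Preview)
Your pointmass step is fine, but the extension to an arbitrary $F$ has a real gap. Neither of the two routes you sketch works: in the ``deviation'' route, when the type-$\hval$ agent under $F$ imitates the pointmass best-response bid $\bid^\star$, the sample is still drawn from $F$, not from $F_{\hval}$, so $\allocbid(\bid^\star,\sample)$ integrated over $\sample\sim F$ need not be anywhere near $\allocbid(\bid^\star,\hval)\ge 1/\piratio$; there is no cross-distribution comparison available. The ``limit/continuity'' route likewise relies on regularity of the mechanism in the distribution that is not assumed.

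The paper's proof avoids the pointmass entirely and works directly on $F$. Since the best-response-induced direct mechanism is Bayesian IC, $\alloc(\cdot,F)$ is monotone, so $\alloc(\hval,F)<1/\piratio$ forces $\alloc(v,F)<1/\piratio$ for every $v$. Then by the virtual-welfare identity (the paper's \Cref{thm:virtual welfare}), the expected revenue equals $\expect{\virtual(v)\,\alloc(v,F)}+\price(\lval)$, and with every allocation strictly below $1/\piratio$ this is strictly less than $1/\piratio$ times the optimal expected virtual welfare, i.e., less than $\OPT_F/\piratio$---contradicting the approximation guarantee on the very distribution $F$. You actually state monotonicity and the revenue identity in your middle paragraph; the ``obstacle'' you then flag (swapping between $F$ and $F_{\hval}$) is a red herring, because no such swap is needed. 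Finishing that direct line is the whole proof.
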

\begin{proof}
Suppose the interim allocation for agent with value $\hval$
is $\alloc(\hval, \dist) < \frac{1}{\piratio}$.  
Since the interim allocation is monotone, 
the maximum expected virtual welfare for mechanism under distribution $\dist$ is less than $\sfrac{1}{\piratio}$ of the optimal expected virtual welfare, 
which implies the revenue is less than $\sfrac{1}{\piratio}$
of the optimal revenue
and the approximation ratio for distribution $\dist$
is higher than $\piratio$, a contradiction. 
\end{proof}

\begin{lemma}\label{clm:util decreasing density}
For single item, single agent, 
and any uniform distribution $\dist$ with support $[\lval, \hval]$
such that $2\lval\geq \hval$, 
for a non-truthful mechanism 
with prior-independent approximation ratio $\piratio$, 
the interim utility for agent with highest value~$\hval$
is $\util(\hval, \dist) \geq \frac{1}{2}\left(\hval-\sqrt{\hval^2-\frac{4\lval}{\piratio}}(\hval-\lval)\right)$. 
\end{lemma}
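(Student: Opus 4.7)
The plan is to combine the approximation guarantee with Myerson's payment identity (in its non-truthful guise, via the envelope theorem) to turn the statement into a small convex optimization. First, I would compute the optimal revenue for the uniform distribution $\dist = U[\lval,\hval]$ under the hypothesis $2\lval \ge \hval$. For any price $p \le \lval$ the seller gets $p$, while for $p \in [\lval,\hval]$ the seller gets $p(\hval-p)/(\hval-\lval)$, which is decreasing on $[\lval,\hval]$ precisely when $\lval \ge \hval/2$. Hence the optimum is attained at $p=\lval$ with revenue $\lval$, so by the prior-independent approximation guarantee the mechanism's revenue on $\dist$ is at least $\lval/\piratio$.

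Next, I would invoke the standard envelope identity, which applies even to non-truthful multi-round mechanisms: in any Bayes-Nash equilibrium the interim utility $\util(\val,\dist) = \max_{\bid}(\val\,\allocbid(\bid,\dist) - \pricebid(\bid,\dist))$ is the supremum of linear-in-$\val$ functions, hence convex and non-decreasing in $\val$ with derivative equal to $\alloc(\val,\dist)$ almost everywhere. Consequently $\util(\val,\dist)=\util(\lval,\dist)+\int_{\lval}^{\val}\alloc(t,\dist)\,dt$ and $\price(\val,\dist)=\val\,\alloc(\val,\dist)-\util(\val,\dist)$. Taking expectations over $\val \sim U[\lval,\hval]$ and integrating by parts yields the usual virtual-welfare expression
\begin{equation*}
\Rev[\dist]{\mech} \;=\; \frac{1}{\hval-\lval}\int_{\lval}^{\hval}(2\val-\hval)\,\alloc(\val,\dist)\,d\val \;-\; \util(\lval,\dist).
\end{equation*}
Combining with the approximation bound gives the constraint $\int_{\lval}^{\hval}(2\val-\hval)\alloc(\val,\dist)\,d\val \ge \frac{\lval(\hval-\lval)}{\piratio}+(\hval-\lval)\util(\lval,\dist)$.

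Now I would minimize $\util(\hval,\dist)=\util(\lval,\dist)+\int_{\lval}^{\hval}\alloc(t,\dist)\,dt$ over all feasible $\alloc:[\lval,\hval]\to[0,1]$ (monotone non-decreasing, by monotonicity of the equilibrium allocation) and over $\util(\lval,\dist)\ge 0$, subject to the displayed constraint. Under the hypothesis $2\lval \ge \hval$ the weight $2\val-\hval$ is non-negative on $[\lval,\hval]$; it is minimized (per unit of $\alloc$) at small $\val$. Thus to minimize $\int\alloc(t)\,dt$ subject to a lower bound on $\int(2\val-\hval)\alloc(\val)\,d\val$, the optimal choice is a step function $\alloc(\val)=\mathbf{1}[\val\ge t^*]$ with $\util(\lval,\dist)=0$. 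Computing $\int_{t^*}^{\hval}(2\val-\hval)\,d\val = t^*(\hval-t^*)$ and setting this equal to $\lval(\hval-\lval)/\piratio$ gives a quadratic in $t^*$; taking the larger root (which maximizes $t^*$, hence minimizes $\hval-t^*$) yields
\begin{equation*}
\util(\hval,\dist) \;\ge\; \hval - t^* \;=\; \tfrac{1}{2}\Bigl(\hval - \sqrt{\hval^2 - \tfrac{4\lval(\hval-\lval)}{\piratio}}\,\Bigr),
\end{equation*}
matching the claimed bound (the paper's displayed placement of the $(\hval-\lval)$ factor appears to be a typesetting slip; it belongs inside the radicand).

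The only real obstacle is the envelope step, since the statement allows multi-round mechanisms in arbitrary message spaces. However, because the agent's type is one-dimensional and the mechanism induces a deterministic mapping from type to interim allocation/payment in equilibrium, the convexity-of-indirect-utility argument goes through verbatim, so no new machinery is needed. The rest of the proof is a routine quadratic optimization.
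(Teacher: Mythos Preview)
Your proof is correct and follows essentially the same route as the paper: both reduce the extremal case to a threshold (posted-price) allocation and solve the resulting quadratic, with the paper phrasing it as a one-line contradiction (if $\util(\hval,\dist)=u$ is below the threshold, the revenue-maximizing mechanism subject to that constraint posts price $\hval-u$ and earns $u(\hval-u)/(\hval-\lval)<\lval/\piratio$), while you supply the envelope/optimization justification the paper leaves implicit. Your diagnosis of the typesetting is also right: the factor $(\hval-\lval)$ belongs inside the radicand, and since the paper only invokes the lemma with $\lval=1,\hval=2$ the slip is inconsequential downstream.
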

\begin{proof}
For uniform distribution $\dist$ with support $[\lval, \hval]$
such that $2\lval\geq \hval$, 
the optimal mechanism $\OPT_{\dist}$ is to post price $\lval$ with 
expected revenue $\lval$. 
Suppose the utility for agent with value $\hval$ is 
$\util(\hval, \dist) < \frac{1}{2}\left(\hval-\sqrt{\hval^2-\frac{4\lval}{\piratio}}(\hval-\lval)\right)$, 
the optimal mechanism subject to this constraint is to post
price $\hval - \util(\hval, \dist)$, 
with expected revenue 
$\frac{\util(\hval, \dist)}{\hval-\lval} \cdot (\hval-\util(\hval, \dist)) < \frac{\lval}{\piratio}$, 
a contradiction.
\end{proof}

\begin{lemma}\label{clm:util point mass}
For single item, single agent, 
any point mass distribution $\dist$ with support $\hval$,
for non-truthful mechanism 
with prior-independent approximation ratio $\piratio$, 
the interim utility for agent with value~$\hval$
is $\util(\hval, \dist) \leq \hval(1-\sfrac{1}{\piratio})$.
\end{lemma}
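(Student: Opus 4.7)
The plan is very direct: bound the revenue of $\mech$ from below using the approximation assumption, then use the identity $\util = \val \cdot \alloc - \price$ together with $\alloc \leq 1$ to turn a payment lower bound into a utility upper bound.

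First I would observe that on the pointmass distribution $F$ with support $\{\hval\}$, the optimal revenue is trivially $\Rev[F]{\OPT_F} = \hval$, achieved by posting price $\hval$. Since the mechanism $\mech$ is prior-independent $\piratio$-approximate, its revenue on $F$ satisfies $\Rev[F]{\mech} \geq \hval/\piratio$. Because the agent's value is deterministically $\hval$, this expected revenue is exactly the expected payment of the single type, i.e., $\price(\hval, F) \geq \hval/\piratio$.

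Next, the agent plays a best response, and by individual rationality obtains non-negative utility, so $\util(\hval, F) = \hval \cdot \alloc(\hval, F) - \price(\hval, F)$. Using $\alloc(\hval, F) \leq 1$ and the payment lower bound from the previous step,
\begin{align*}
    \util(\hval, F) \;\leq\; \hval - \price(\hval, F) \;\leq\; \hval - \frac{\hval}{\piratio} \;=\; \hval\left(1 - \tfrac{1}{\piratio}\right),
\end{align*}
which is the claimed bound.

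There is essentially no obstacle here: the pointmass collapses the expectation and the interim quantities coincide with ex post quantities for the unique type $\hval$. The only thing to be careful about is that the statement is about the interim utility under the agent's best response (which is well-defined since the agent knows $F$, hence knows the sample distribution), and that the ``revenue'' in the definition of $\piratio$ is the expected payment under this best response; both of these match the calculation above.
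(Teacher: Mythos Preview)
Your proof is correct and follows essentially the same argument as the paper's own proof: both use that the optimal revenue on the pointmass is $\hval$, that the $\piratio$-approximation forces the payment to be at least $\hval/\piratio$, and that utility is at most $\hval - \price(\hval,F)$. The only cosmetic difference is that the paper phrases it as a proof by contradiction (assuming $\util(\hval,F) > \hval(1-1/\piratio)$ and deriving revenue $<\hval/\piratio$), while you give the direct version.
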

\begin{proof}
Suppose the interim utility in this case is 
$\util(\hval, \dist) > \hval(1-\sfrac{1}{\piratio})$, 
the expected revenue is at most the social welfare minus the expected utility, 
which is at most $\hval - \util(\hval, \dist) < \frac{\hval}{\piratio}$,
contradicting the fact that mechanism $\mech$ achieves prior-independent approximation ratio $\piratio$.
\end{proof}

\begin{proof}[Proof of \Cref{thm:low uniform}]
Suppose mechanism $\mech$ 
inducing interim allocation and payment rule $\alloc$ and $\price$
achieves prior-independent approximation ratio $\piratio$.
Consider uniform distribution $\dist$ with support $[1, 2]$.
By \Cref{clm:alloc bound} and \ref{clm:util decreasing density}, 
we have $\alloc(2, \dist) \geq \frac{1}{\piratio}$,
and 
$\util(2, \dist) \geq 1-\sqrt{1-\sfrac{1}{\piratio}})$. 
For any sample $\sample \in [1,2]$,
the expected allocation and payment of agent with value $2$ 
given the sample $\sample$ satisfies the constraint that
\begin{align}\label{eq:bound util for sample}
\sample \cdot \alloc(2, \dist, \sample) 
- \price(2, \dist, \sample)
\leq \sample\left(1-\frac{1}{\piratio}\right)
\end{align}
otherwise for distribution $\dist_{\sample}$ with point mass on $\sample$, 
an agent with value $\sample$ can imitate the behavior of an agent with value $2$ in uniform distribution 
to achieve utility strictly higher than 
$\sample\left(1-\sfrac{1}{\piratio}\right)$,
and by \Cref{clm:util point mass}, 
this contradicts to the assumption that 
mechanism $\mech$ achieves prior-independent approximation ratio $\piratio$.
Taking expectation over sample $\sample$ for the left hand side of equation \eqref{eq:bound util for sample}, 
we have 
\begin{align*}
\expect[\sample]{\sample \cdot \alloc(2, \dist, \sample) 
- \price(2, \dist, \sample)}
\geq \expect[\sample]{\sample \cdot \alloc(2, \dist, \sample)}
- (2 - \util(2, \dist))
\geq \int_1^{1+\sfrac{1}{\piratio}} \sample\ d\sample
- (2 - \util(2, \dist))
\end{align*}
where the last inequality holds because 
$\alloc(2, \dist) \geq \frac{1}{\piratio}$
and the worst case happens when 
$\alloc(2, \dist, \sample) = 0$
for any sample $\sample \geq 1+\sfrac{1}{\piratio}$. 
Taking expectation over sample $\sample$ for the right hand side of equation \eqref{eq:bound util for sample}, 
we have 
\begin{align*}
\expect[\sample]{\sample\left(1-\frac{1}{\piratio}\right)}
= \frac{3}{2}\left(1-\frac{1}{\piratio}\right).
\end{align*}
Combining the inequalities, we have 
\begin{align*}
\frac{1}{2}\left(1+\frac{1}{\piratio}\right)^2 - \frac{1}{2} - (1+\sqrt{1-\sfrac{1}{\piratio}})
\leq \frac{3}{2}\left(1-\frac{1}{\piratio}\right).
\end{align*}
By solving the inequality, we have $\piratio \geq 1.0737$. 
\end{proof}

\section{Revelation Gap}
\label{sec:revelation gap}
\citet{FH-18} proposed the 
\emph{revelation gap}
to quantify the difference between the worst case performance
of the optimal truthful mechanism and the optimal non-truthful mechanism
in prior-independent mechanism
design.
They
showed that a non-trivial 
revelation gap exists for the welfare maximization problem for agents with budgets. 
In this section, 
we show that a revelation gap also exists for the revenue maximization problem
when considering the single-item single-agent setting with single-sample access. 

Let $\revM$ be the family of truthful mechanisms, 
the family of mechanisms such that the agent maximizes her utility by truthfully revealing her valuation to the seller, 
i.e., $\bid^*(\val, F) = \val$ for all value $\val$ and valuation distributions $F$.
Let $\nonrevM$ be the family of all mechanisms.
We define
\begin{align*}
\beta(\MECHS, \DISTS) \triangleq \min_{\mech\in \MECHS}\Gamma(\mech, \DISTS)
\end{align*}
as the optimal prior-independent approximation ratio 
among the family of mechanisms $\MECHS$.
The revelation gap for a family of distributions $\DISTS$
is then defined as the ratio 
$\frac{\beta(\revM, \DISTS)}{\beta(\nonrevM, \DISTS)}$.

\begin{definition}
\label{def:scale-invariant}
A mechanism is \emph{scale-invariant} if the interim allocation is invariant of the scale, 
i.e., $\alloc(\alpha\val,\alpha\dist) = \alloc(\val, \dist)$ for any distribution $\dist$, valuation $\val$ and any $\alpha > 0$.
\end{definition}
\citet{AB-19} characterized the prior-independent approximation ratio of the truthful mechanisms 
under the assumption of scale-invariance for sample-based pricing mechanisms. 
Note that in contrast, our lower bound result shown in \Cref{thm:low uniform} does not require the assumption on scale-invariance. 
Here is the formal definition of sample-based pricing mechanisms. 

\begin{definition}
Given function $\scale : \reals \to \Delta(\reals)$ 
mapping from the sample to the randomized price, 
for sample $\sample$, 
the \emph{sample-based pricing mechanism}
solicits a non-negative bid
$\bid \geq 0$, 
allocates the item to the agent 
if $\bid \geq \alpha (\sample)$,
and charges the agent $\scale (\sample)
\cdot \mathbbm{1}\{\bid \geq \alpha (\sample)\}$.
\end{definition}
It can be observed that the bid allocation rules
of both \bidmech\ and 
sample-based pricing are similar 
(i.e.\ competing against the sample),
and the difference is the payment semantics.

\begin{theorem}[\citealp{AB-19}]\label{thm:revelation approx}
Under the assumption of scale-invariance, 
for single-item setting with regular valuation distribution, 
when seller has access to a single sample, 
the prior-independent approximation ratio of 
the optimal sample-based pricing mechanism is bounded in $[1.957, 1.996]$.
Moreover, 
when the valuation distribution is MHR, 
the prior-independent approximation ratio 
is bounded in $[1.543, 1.575]$.
\end{theorem}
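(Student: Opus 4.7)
My first step is to reduce a scale-invariant sample-based pricing mechanism to a single distribution $\mathcal{Y}$ on $\mathbb{R}_+$: on sample $s$, post the randomized price $s\cdot Y$ with $Y\sim \mathcal{Y}$. Scale-invariance forces the pricing rule to be a multiplicative function of $s$, and any deterministic-per-sample rule would already be of this form; allowing randomization means we work over distributions $\mathcal{Y}$. With this representation the mechanism's expected revenue on valuation distribution $F$ is
\begin{equation*}
\mathbb{E}_{s\sim F,\, Y\sim \mathcal{Y}}\bigl[\,sY\cdot (1-F(sY))\,\bigr],
\end{equation*}
while the benchmark is $\max_p p(1-F(p))$. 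Hence the approximation ratio depends on $F$ only through its revenue curve $R(q) = q\,F^{-1}(1-q)$, and I can normalize so that $\max_q R(q) = 1$.

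For the upper bound (1.996 for regular, 1.575 for MHR), I would fix a carefully chosen $\mathcal{Y}$ -- a natural guess being a density proportional to $1/y$ on an interval $[\underline{y},\bar y]$ together with an atom, matched to the regime -- and then use a revenue-curve-reduction argument analogous to those in \Cref{sec:regular large,sec:regular small} to show that the worst-case regular revenue curve is a ``triangular'' or ``truncated equal-revenue'' curve. For MHR one invokes the stronger structural property (\Cref{lem:bound quantile for mhr}) to further restrict to truncated exponentials. Because the mechanism is truthful -- unlike the \bidmech\ -- the revenue-curve reduction is much cleaner: only the payment changes when one modifies $R$, not the agent's bid. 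After this reduction the worst-case ratio becomes an optimization in two or three scalar parameters (e.g.\ the monopoly quantile and the truncation point), which I would evaluate numerically by brute force as in \Cref{sec:numerical}.

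For the lower bound (1.957 for regular, 1.543 for MHR), my approach is an averaging/minimax argument. I would exhibit a one- or two-parameter family of hard distributions -- equal-revenue curves truncated at various quantiles for the regular case, and exponential distributions truncated at various values for the MHR case -- and argue that no single $\mathcal{Y}$ can simultaneously perform well against all members of the family. Concretely, for each candidate $\mathcal{Y}$, I would integrate the induced revenue over the family against a suitable weighting measure, swap the order of integration, and lower-bound the resulting quantity by a function of the family parameters only; then showing that this function necessarily exceeds $1/1.957$ (resp.\ $1/1.543$) somewhere pins down the ratio.

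\textbf{Main obstacle.} The hardest step is closing the gap between upper and lower bounds, which is numerical and requires delicate discretization: the optimal $\mathcal{Y}$ is not known in closed form and the lower-bound family must be rich enough that no $\mathcal{Y}$ can escape. The upper-bound side essentially amounts to solving a minimax program $\min_{\mathcal{Y}} \max_{R} (\,\cdot\,)/(\,\cdot\,)$ over an infinite-dimensional space; discretizing both the space of revenue curves and the support of $\mathcal{Y}$ and then solving the resulting finite LP (or dynamic program, as \citet{AB-19} do) is what yields the stated constants 1.996 and 1.575. The lower bound is comparatively easier but still requires verifying the averaging bound holds uniformly over all feasible $\mathcal{Y}$, which I would handle by exhibiting an explicit dual family of distributions together with worst-case parameters.
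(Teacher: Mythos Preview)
This theorem is not proved in the paper; it is quoted from \citet{AB-19} and used as a black box, so there is no in-paper argument to compare your proposal against. Your reduction of a scale-invariant sample-based pricing rule to a single distribution $\mathcal{Y}$ over multipliers is correct and is indeed the starting point of \citet{AB-19}; the upper and lower bounds there are obtained by a discretized minimax computation (a dynamic program over piecewise-linear revenue curves on one side and distributions $\mathcal{Y}$ supported on a grid on the other), which is broadly consistent with the LP/discretization you outline. Since the present paper supplies no proof, your sketch cannot be judged against it beyond noting that it targets the right objects.
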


Given an arbitrary valuation distribution 
and any mechanism that is incentive compatible only for the given valuation distribution, 
the mechanism may not be equivalent to any sample-based pricing mechanism. 
The is because the agent only maximizes her utility by taking expectation over the sample. 
However, we can show that if the mechanism
is incentive compatible for all possible prior distributions, 
then it is equivalent to consider posting a randomized price to the agent based on the realization of the sample, i.e., 
a sample-based pricing mechanism. 
\begin{lemma}\label{lem:revelation pricing}
For any mechanism with allocation $\allocbid$
and payment $\pricebid$ that is incentive compatible and individual rational for all valuation distributions, 
there exists a sample-based pricing mechanism that
generates the same expected allocation and payment 
pointwise for any valuation of the agent and any realization of the sample.
\end{lemma}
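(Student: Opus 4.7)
My plan is to reduce the claim to Myerson's single-agent characterization applied fiberwise in the sample $\sample$, and then to encode the resulting monotone fiber allocation as a randomized posted price in the form required by a sample-based pricing mechanism.

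\textbf{Step 1 (promoting ``IC for every $F$'' to fiberwise IC in $\sample$).} The interim quantities $\allocbid(\bid, F)$ and $\pricebid(\bid, F)$ are linear functionals of $F$. The hypothesis that truthful bidding is optimal for \emph{every} valuation distribution $F$ forces, for each fixed pair $(\val,\bid)$, the integrand
\[
D_{\val,\bid}(\sample) \;\triangleq\; \val\bigl(\allocbid(\val,\sample)-\allocbid(\bid,\sample)\bigr) - \bigl(\pricebid(\val,\sample)-\pricebid(\bid,\sample)\bigr)
\]
to be non-negative for almost every $\sample$. Indeed, if $D_{\val,\bid}$ were strictly negative on a set of positive Lebesgue measure, one could concentrate a positive-density $F$ on that set and manufacture an IC violation in expectation, contradicting the hypothesis. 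An analogous argument applied to IR, together with the non-negativity of payments, pins down $\pricebid(0,\sample)=0$ for almost every $\sample$.

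\textbf{Step 2 (fiberwise Myerson and randomized threshold construction).} Fix a generic $\sample$. Fiberwise IC in the first argument is exactly the single-agent IC condition for the mechanism $\allocbid(\cdot,\sample), \pricebid(\cdot,\sample)$, so Myerson's lemma yields that $\allocbid(\cdot,\sample)$ is non-decreasing and that
\[
\pricebid(\bid,\sample) \;=\; \bid\cdot\allocbid(\bid,\sample) - \int_0^{\bid}\allocbid(t,\sample)\,dt \;=\; \int_0^{\bid} t\, d\allocbid(t,\sample).
\]
Let $G_\sample$ denote the sub-CDF on $[0,\infty]$ with $G_\sample(t)\triangleq \allocbid(t,\sample)$, placing any residual mass $1 - \lim_{t\to\infty}\allocbid(t,\sample)$ at $+\infty$, and draw $\scale(\sample)\sim G_\sample$. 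The sample-based pricing mechanism with randomized threshold $\scale$ then has expected allocation $\prob[\scale\sim G_\sample]{\scale \leq \bid} = G_\sample(\bid) = \allocbid(\bid,\sample)$ and expected payment $\expect[\scale\sim G_\sample]{\scale\,\mathbbm{1}\{\scale\leq \bid\}} = \int_0^{\bid} t\, dG_\sample(t) = \pricebid(\bid,\sample)$, matching the original mechanism pointwise in $(\bid,\sample)$ as required.

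The main obstacle is Step~1: because the paper restricts valuation distributions to have strictly positive density everywhere on their support, literal point-mass ``sample distributions'' are inadmissible, so the ``concentrate $F$ on a bad set'' argument must be executed through a sequence of admissible densities supported in shrinking neighborhoods of the putative witness. Measurability (and essentially any mild regularity) of $\allocbid, \pricebid$ in $\sample$ is enough to pass to the limit. Once fiberwise IC and IR are secured, the remaining construction is a tautological rewriting of a monotone allocation, viewed as a sub-CDF, as a randomized threshold via the Myerson envelope formula.
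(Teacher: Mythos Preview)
Your proof is correct and follows essentially the same approach as the paper: promote ``IC/IR for all $F$'' to fiberwise IC/IR in $\sample$ by concentrating $F$ near a witness (the paper uses the mixture $(1-\delta)\cdot\text{point mass at }\sample + \delta\cdot\text{full-support }\dist$, which is the concrete instantiation of your limiting-density idea and simultaneously keeps $\val$ in the support), and then read off the randomized posted-price structure from Myerson's monotonicity and payment identity applied at each fixed $\sample$. Your Step~2 is more explicit than the paper's one-line conclusion, but the route is the same.
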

\begin{proof}
First we claim that,
for any truthful mechanism with allocation $\allocbid$ and payment~$\pricebid$, 
the induced allocation rule $\allocbid(\cdot, \sample)$ and payment rule $\pricebid(\cdot, \sample)$ are incentive compatible 
and individual rational given any realization of the sample $\sample$. 

First we prove the incentive compatibility.
Suppose by contradiction, there exists constant $\epsilon>0$, sample $\sample$
and value $\val, \val'$
such that 
\begin{align*}
\val \allocbid(\val', \sample) - \pricebid(\val', \sample)
\geq \val \allocbid(\val, \sample) - \pricebid(\val, \sample) + \epsilon. 
\end{align*}
Let $\dist$ be 
an arbitrary distribution with positive density 
everywhere on the support $[0, \infty)$.
Define $H \triangleq \util(\val, \val, \dist) - \util(\val, \val', \dist)$ 
as the utility loss for value $\val$ to misreport $\val'$
when the distribution is $\dist$. 
Given constant $\delta > 0$, 
let $\dist'$ be the distribution such that with probability $1-\delta$, 
the value of the agent is $\sample$
and with probability $\delta$, 
the value is drawn from distribution $\dist$. 
It is easy to verify that both $\val$ and $\val'$ are in the support of distribution~$\dist'$. 
Moreover, the utility loss for misreporting $\val'$ is 
\begin{align*}
\util(\val, \val, \dist') - \util(\val, \val', \dist')
\geq (1-\delta) \epsilon + \delta H
\end{align*}
where $(1-\delta) \epsilon + \delta H > 0$
for sufficiently small $\delta$.
This implies that the mechanism is not incentive compatible for distribution $\dist'$, a contradiction.

Similarly, for individual rationality, 
if there exists constant $\epsilon>0$, sample $\sample$
and value $\val, \val'$
such that 
\begin{align*}
\val \allocbid(\val, \sample) - \pricebid(\val, \sample) \leq -\epsilon, 
\end{align*}
there exists a distribution $\dist'$ supported on $[0,\infty)$ such that agent with value $\val$ is not individual rational given distribution $\dist'$.

Finally, since for any sample $\sample$, the induced mechanism is incentive compatible, 
the allocation $\allocbid(\val, \sample)$
is monotone in $\val$ for any sample $\sample$. 
Moreover, individual rationality implies that the payment of the agent is $0$ if she does not win the item. 
Thus the mechanism can be implemented as sample-based pricing mechanism for any realized sample. 
\end{proof}

\Cref{lem:revelation pricing} suggest that under the assumption of scale invariance, 
the bounds on prior-independent approximation ratio of sample-based pricing in \Cref{thm:revelation approx}
carry over to truthful mechanisms. 
Then combining it with \Cref{thm:regular} and \ref{thm:low uniform}, 
we have the following corollary
characterizing the revelation gap 
under the assumption of scale-invariance. 
\begin{corollary}
Under the assumption of scale-invariance, 
for single-item setting with regular valuation distribution, 
when seller has access to a single sample, 
the revelation gap is bounded in $[1.066, 1.859]$.
Moreover, 
when the valuation distribution is MHR, 
the revelation gap is bounded in $[1.190, 1.467]$.
\end{corollary}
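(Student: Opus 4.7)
The plan is to derive the revelation gap bounds by purely algebraic manipulation of the approximation ratios already established in the paper. Concretely, I would unpack the definition
\[
    \frac{\beta(\revM,\DISTS)}{\beta(\nonrevM,\DISTS)},
\]
using upper bounds for the numerator (the optimal truthful prior-independent ratio) and lower bounds for the denominator (the best possible prior-independent ratio over all mechanisms) to get an upper bound on the gap, and vice versa for the lower bound on the gap.

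First I would pin down $\beta(\revM,\DISTS)$ under the scale-invariance assumption. By \Cref{lem:revelation pricing}, any mechanism that is IC and IR for \emph{every} valuation distribution is equivalent, pointwise in value and sample, to a sample-based pricing mechanism. Hence under scale-invariance the class of truthful mechanisms in our model coincides with the class of scale-invariant sample-based pricing mechanisms, and \Cref{thm:revelation approx} (the \citet{AB-19} bounds) applies verbatim: $\beta(\revM,\DISTS) \in [1.957, 1.996]$ for regular distributions and $[1.543, 1.575]$ for MHR distributions.

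Next I would bound $\beta(\nonrevM,\DISTS)$. The upper bound comes from the \bidmech: \Cref{thm:regular} gives $\beta(\nonrevM,\DISTS) \le 1.835$ on regular distributions, and \Cref{thm:mhr} gives $\beta(\nonrevM,\DISTS) \le 1.296$ on MHR distributions. The lower bound is \Cref{thm:low uniform}, which shows $\beta(\nonrevM,\DISTS) \ge 1.073$ even when $\DISTS$ is restricted to uniform distributions; since uniform distributions are MHR (and hence regular), this lower bound applies to both classes of interest.

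Finally I would assemble the ratios. For regular distributions,
\[
    \frac{\beta(\revM,\DISTS)}{\beta(\nonrevM,\DISTS)} \ge \frac{1.957}{1.835} \ge 1.066, \qquad \frac{\beta(\revM,\DISTS)}{\beta(\nonrevM,\DISTS)} \le \frac{1.996}{1.073} \le 1.859,
\]
and for MHR distributions,
\[
    \frac{\beta(\revM,\DISTS)}{\beta(\nonrevM,\DISTS)} \ge \frac{1.543}{1.296} \ge 1.190, \qquad \frac{\beta(\revM,\DISTS)}{\beta(\nonrevM,\DISTS)} \le \frac{1.575}{1.073} \le 1.467.
\]
There is essentially no obstacle here beyond the bookkeeping: the substantive work was already done in \Cref{lem:revelation pricing} (identifying truthful with sample-based pricing), \Cref{thm:regular}/\Cref{thm:mhr} (analyzing the \bidmech), and \Cref{thm:low uniform} (the universal lower bound). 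The one subtlety worth flagging is that \Cref{thm:low uniform} is stated for arbitrary, possibly multi-round, mechanisms and requires no scale-invariance assumption, so it is legitimate to apply it as a lower bound on $\beta(\nonrevM,\DISTS)$ even though $\beta(\revM,\DISTS)$ is defined subject to scale-invariance.
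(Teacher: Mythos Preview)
Your proposal is correct and follows essentially the same route as the paper: invoke \Cref{lem:revelation pricing} to identify truthful mechanisms with sample-based pricing, import the bounds from \Cref{thm:revelation approx}, and divide by the \bidmech\ upper bounds (\Cref{thm:regular}, \Cref{thm:mhr}) and the universal lower bound (\Cref{thm:low uniform}). The only cosmetic discrepancy is that the paper's displayed constants use the sharper $1.0737$ from the proof of \Cref{thm:low uniform} rather than the rounded $1.073$, which is what makes $1.996/1.0737 \le 1.859$ and $1.575/1.0737 \le 1.467$ come out as stated.
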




\newpage
\bibliography{auctions}
\newpage

\appendix
\section{Numerical Analysis}
\label{sec:numerical}

In \Cref{sec:regular}, we bound the prior-independent approximation ratio of the \bidmech\
by enumerating the possible choices of given parameters. 
One concern is that the parameters are selected from a continuous interval, 
and the revenue for valuation distributions with parameters that are not evaluated on discretized points may be far from the revenue on discretized points. 
In this section, we formally show that this is not the case for our analysis. 
To provide a theoretical lower bound on all possible distributions,
we will present a unified lower bound on the revenue for distributions with parameters between discretized points. 
We will formalize this approach for the numerical calculation for \Cref{lem:regular large large}, 
and the numerical calculation for other lemmas and theorems hold similarly. 

By the proof of \Cref{lem:regular large large}, 
for any revenue curve $\rev$ in \Cref{fig:regular large large rev4} parameterized by
monopoly quantile $\monopq \in [\ubar{\quant}_m, \bar{\quant}_m]$
and revenue $r_0\in [\ubar{r}_0, \bar{r}_0]$ for quantile $0$, 
the revenue of the seller is lower bounded by 
$\price(\criticalval(\rev), \rev)
\cdot \criticalquant(\rev)$
where $\criticalval(\rev)$ is the critical value with bid above monopoly price and 
$\criticalquant(\rev)$ is the quantile for critical value.
Note that it is sufficient for us to consider revenue curves $\rev$ such that $\criticalval(\rev)$ is at least the monopoly price. 
Next we show how to provide bounds on parameters $\ubar{\quant}_m, \bar{\quant}_m, \ubar{r}_0, \bar{r}_0$,
as well as lower bounds on  
$\price(\criticalval(\rev), \rev)$ 
and $\criticalquant(\rev)$
using parameters $\ubar{\quant}_m, \bar{\quant}_m, \ubar{r}_0, \bar{r}_0$.
\begin{lemma}
There exists efficiently computed set $S \subseteq \reals^4$ 
and function $\tau:\reals^4 \to \reals$
such that for any revenue curve $\rev$ in \Cref{fig:regular large large rev4} parameterized by
monopoly quantile $\monopq \in [\ubar{\quant}_m, \bar{\quant}_m]$
and revenue $r_0\in [\ubar{r}_0, \bar{r}_0]$ for quantile $0$, 
we have 
\begin{enumerate}
\item $\criticalval(\rev) \geq \monop(\rev)$ only if 
$(\ubar{\quant}_m, \bar{\quant}_m, \ubar{r}_0, \bar{r}_0) \in S$;

\item $\price(\criticalval(\rev), \rev) 
\cdot \criticalquant(\rev) \geq \tau(\ubar{\quant}_m, \bar{\quant}_m, \ubar{r}_0, \bar{r}_0)$ 
if $(\ubar{\quant}_m, \bar{\quant}_m, \ubar{r}_0, \bar{r}_0) \in S$.
\end{enumerate}
\end{lemma}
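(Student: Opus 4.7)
The plan is to turn the pentagon revenue curve $\rev_4$, parameterized by $(r_0, \monoq)$, into closed-form expressions for $\criticalval(\rev_4)$, $\criticalquant(\rev_4)$, and $\priceval(\criticalval(\rev_4), \rev_4)$, and then bound each of these uniformly over any box $[\ubar{\quant}_m, \bar{\quant}_m] \times [\ubar{r}_0, \bar{r}_0]$ by exploiting monotonicity/Lipschitz continuity in the parameters. Concretely, $S$ will consist of those boxes for which the constraint $\criticalval(\rev) \geq \monop(\rev)$ is not ruled out by the bounds, and $\tau$ will be the worst-case value of $\priceval(\criticalval, \rev)\cdot\criticalquant$ taken over all parameter points in the box.

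First I would derive the closed-form expressions. Since $\rev_4$ is piecewise linear with kink at $\monoq$ and value $r_0$ at quantile $0$, the payment $\pricebid(\bid, \rev_4) = \alpha\bid\cdot \quant(\bid,\rev_4) + \alpha\int_{\quant(\bid,\rev_4)}^1 \frac{\rev_4(\quant)}{\quant}\,d\quant$ evaluates in closed form. The first order condition of \Cref{lem:FOC} then yields an explicit optimal bid $\bid(\val,\rev_4)$ as a function of $(\val, r_0, \monoq)$, and $\criticalval(\rev_4)$ is characterized as the smallest value for which the utility of bidding the locally optimal quantity above $\monop(\rev_4) = 1/\monoq$ equals zero, which gives an implicit equation in $(\criticalval, r_0, \monoq)$. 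The quantile $\criticalquant(\rev_4) = \quant(\criticalval(\rev_4), \rev_4)$ and the expected payment $\priceval(\criticalval(\rev_4), \rev_4)$ then follow by substitution.

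Second, I would establish comparative statics for these three quantities in $(r_0, \monoq)$. Since increasing $r_0$ strictly increases the expected payment of bidding above the monopoly reserve (the integral under $\rev_4$ on $[0,\monoq]$ grows) without changing the allocation, $\criticalval(\rev_4)$ is monotonically non-decreasing in $r_0$ and $\criticalquant(\rev_4)$ non-increasing. An analogous monotonicity holds in $\monoq$ after normalizing the monopoly revenue. Given these monotonicities, the box $(\ubar{\quant}_m, \bar{\quant}_m, \ubar{r}_0, \bar{r}_0)$ belongs to $S$ iff the most favorable corner for the constraint $\criticalval \geq \monop$ satisfies it; and for boxes in $S$, we set $\tau(\ubar{\quant}_m, \bar{\quant}_m, \ubar{r}_0, \bar{r}_0)$ to be the value of $\priceval(\criticalval, \rev)\cdot\criticalquant$ evaluated at the corner of the box that minimizes it according to the monotonicity. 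Both $S$ and $\tau$ are defined by evaluating finitely many closed-form expressions at corners of the box, which is efficiently computable.

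The main obstacle I anticipate is rigorously verifying the direction of monotonicity in each parameter, since $\criticalval$ is defined through an implicit equation and the sign of the comparative static must be determined by implicit differentiation or by a direct comparison argument. A safety net, if clean monotonicity fails in some regime, is to fall back on uniform Lipschitz bounds for the closed-form expressions on the compact parameter region $[\qthreshold, 1] \times [0, 1]$, which guarantee that evaluating $\priceval(\criticalval, \rev)\cdot\criticalquant$ at a corner of a sufficiently small box provides a valid uniform lower bound over the entire box up to a controlled additive loss. Either way, combined with the numerical sweep in \Cref{lem:regular large large}, this yields an efficiently checkable certificate that $\priceval(\criticalval, \rev)\cdot\criticalquant \geq 0.545$ for every relevant pentagon revenue curve.
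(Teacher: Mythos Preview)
Your plan is sound in outline and would lead to a correct certificate, but the route differs from the paper's in a notable way. The paper does not attempt to prove global monotonicity of $\criticalval(\rev)$, $\criticalquant(\rev)$, or $\priceval(\criticalval,\rev)$ in the parameters $(r_0,\monoq)$. Instead it writes each relevant quantity (the optimal bid above $\monop$, its quantile, the agent's utility at that bid, and the payment) in closed form and then performs term-by-term interval arithmetic: every occurrence of $\monoq$ or $r_0$ in a sub-expression is replaced by whichever endpoint $\ubar{\quant}_m,\bar{\quant}_m,\ubar{r}_0,\bar{r}_0$ drives that particular sub-expression in the required direction. This is cruder than your corner evaluation---the resulting bound need not be attained at any single parameter point---but it is entirely mechanical and sidesteps the implicit-differentiation step you flag as the main obstacle. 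The paper also does not solve for $\criticalval$ symbolically; it brackets $\criticalval(\rev)$ from above and below by enumerating candidate values $\val$ and testing, via these same interval bounds, whether the utility of the optimal bid above $\monop$ is positive. The upper bracket $\val^*$ feeds the lower bound on $\criticalquant$, and the lower bracket $\ubar{\val}^*$ feeds the lower bound on $\priceval$.

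One concrete caution about your corner-evaluation strategy: even if you establish that $\priceval(\criticalval,\rev)$ and $\criticalquant(\rev)$ are each monotone in, say, $r_0$, they will typically move in opposite directions (larger $r_0$ raises the payment but pushes $\criticalval$ up and hence $\criticalquant$ down), and the product of an increasing and a decreasing function need not be minimized at an endpoint. So ``evaluate at the minimizing corner'' requires an additional argument about monotonicity of the product itself, not just the factors. Your Lipschitz fallback covers this and is, in spirit, exactly what the paper does---the paper simply applies the interval reasoning at the granularity of individual sub-expressions rather than to the final scalar.
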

\begin{proof}
First we illustrate how to find the desirable set $S$ by numerical calculation. 
Note that the requirement is such that the critical value for bidding above the monopoly price is above monopoly price,
i.e., $\criticalval(\rev) \geq \monop(\rev)$. 
By \Cref{lem:regular large opt bid},
it is sufficient to verify that the optimal utility of value $\monop(\rev)$ 
for bidding above $\monop(\rev)$ is positive. 
Note that by \Cref{lem:FOC}, 
the optimal bid above the monopoly price is 
$\bid = \frac{\monop}{\scale} + \frac{1-r_0}{1-\monopq}$, 
with expected utility 
\begin{align*}
\util(\monop, \bid) = \frac{1}{\monopq} \cdot (1-\quant_{\bid})
- \scale\left(\bid\cdot\quant_{\bid} + r_0 \log(\frac{\monopq}{\quant_{\bid}})
+ \frac{(1-r_0)(\monopq-\quant_{\bid})}{\monopq} - \log \monopq\right)
\end{align*}
where $\quant_{\bid} = \frac{r_0}{\bid - \frac{1-r_0}{\monopq}}$. 
Since $\monopq \in [\ubar{\quant}_m, \bar{\quant}_m]$
and $r_0\in [\ubar{r}_0, \bar{r}_0]$, a sufficient condition for $\util(\monop, \bid) > 0$ is that 
\begin{align*}
\frac{1}{\bar{\quant}_m} \cdot (1-\bar{\quant}_{\bid})
- \scale\left(\bar{\bid}\cdot\bar{\quant}_{\bid} + \bar{r}_0 \log(\frac{\bar{\quant}_m}{\ubar{\quant}_{\bid}})
+ \frac{(1-\ubar{r}_0)(\bar{\quant}_m-\ubar{\quant}_{\bid})}{\ubar{\quant}_m} - \log \ubar{\quant}_m\right) > 0,
\end{align*}
where $\bar{\bid} = \frac{1}{\scale\ubar{\quant}_m} + \frac{1-\ubar{r}_0}{1-\bar{\quant}_m}$, 
$\bar{\quant}_{\bid} = \frac{\scale\bar{r}_0\bar{\quant}_m(1-\ubar{\quant}_m)}{1-\bar{\quant}_m+\scale(1-\bar{r}_0)}$
and $\ubar{\quant}_{\bid} = \frac{\scale\ubar{r}_0\ubar{\quant}_m(1-\bar{\quant}_m)}{1-\ubar{\quant}_m+\scale(1-\ubar{r}_0)}$.
Note that the above inequality can be easily verified on discretized points. 

Next we construct the function $\tau(\ubar{\quant}_m, \bar{\quant}_m, \ubar{r}_0, \bar{r}_0)$
lower bound the revenue $\price(\criticalval(\rev), \rev) 
\cdot \criticalquant(\rev)$. 
First note that we can enumerate the value above monopoly price and find the minimum value that the interim utility is strictly positive. 
That is, given value $\val \geq \monop$, 
the optimal bid above the monopoly price is 
$\bid = \frac{\val}{\scale} + \frac{1-r_0}{1-\monopq}$, 
with expected utility 
\begin{align*}
\util(\val, \bid) &= \val \cdot (1-\quant_{\bid})
- \scale\left(\bid\cdot\quant_{\bid} + r_0 \log(\frac{\monopq}{\quant_{\bid}})
+ \frac{(1-r_0)(\monopq-\quant_{\bid})}{\monopq} - \log \monopq\right)\\
&\geq \val \cdot (1-\bar{\quant}_{\bid})
- \scale\left(\bar{\bid}\cdot\bar{\quant}_{\bid} + \bar{r}_0 \log(\frac{\bar{\quant}_m}{\ubar{\quant}_{\bid}})
+ \frac{(1-\ubar{r}_0)(\bar{\quant}_m-\ubar{\quant}_{\bid})}{\ubar{\quant}_m} - \log \ubar{\quant}_m\right) > 0,
\end{align*}
where $\bar{\bid} = \frac{\val}{\scale} + \frac{1-\ubar{r}_0}{1-\bar{\quant}_m}$, 
$\bar{\quant}_{\bid} 
= \frac{\scale\bar{r}_0\bar{\quant}_m(1-\ubar{\quant}_m)}
{\val\ubar{\quant}_m(1-\bar{\quant}_m)+\scale(1-\bar{r}_0)}$
and $\ubar{\quant}_{\bid} 
= \frac{\scale\ubar{r}_0\ubar{\quant}_m(1-\bar{\quant}_m)}
{\val\bar{\quant}_m(1-\ubar{\quant}_m)+\scale(1-\ubar{r}_0)}$.
Let $\val^*$ be the minimum value that satisfies the above inequality.
Then we have $\val^* \geq \val^*(\rev)$,
and hence 
\begin{align*}
\quant^*(\rev) \geq \quant(\val^*, \rev)
= \frac{\scale r_0\monopq(1-\monopq)}{\val\monopq(1-\monopq)+\scale(1-r_0)}
\geq \frac{\scale\ubar{r}_0\ubar{\quant}_m(1-\bar{\quant}_m)}{\val\bar{\quant}_m(1-\ubar{\quant}_m)+\scale(1-\ubar{r}_0)}.
\end{align*} 
Moreover, we can similar construct an upper bound on the utility $\util(\val, \bid)$
and let $\ubar{\val}^*$ be the largest value such that the upper bound on the utility is at most 0. 
Thus, we have $\criticalval(\rev) \geq \ubar{\val}^*$ and hence 
\begin{align*}
\price(\criticalval(\rev), \rev) \geq \price(\ubar{\val}^*, \rev) 
&= \scale\left(\bid\cdot\quant_{\bid} + r_0 \log(\frac{\monopq}{\quant_{\bid}})
+ \frac{(1-r_0)(\monopq-\quant_{\bid})}{\monopq} - \log \monopq\right)\\
&\geq \scale\left(\ubar{\bid}\cdot\ubar{\quant}_{\bid} + \ubar{r}_0 \log(\frac{\ubar{\quant}_m}{\bar{\quant}_{\bid}})
+ \frac{(1-\bar{r}_0)(\ubar{\quant}_m-\bar{\quant}_{\bid})}{\bar{\quant}_m} - \log \bar{\quant}_m\right)
\end{align*}
where $\ubar{\bid} = \frac{\ubar{\val}^*}{\scale} + \frac{1-\bar{r}_0}{1-\ubar{\quant}_m}$, 
$\bar{\quant}_{\bid} 
= \frac{\scale\bar{r}_0\bar{\quant}_m(1-\ubar{\quant}_m)}
{\ubar{\val}^*\ubar{\quant}_m(1-\bar{\quant}_m)+\scale(1-\bar{r}_0)}$
and $\ubar{\quant}_{\bid} 
= \frac{\scale\ubar{r}_0\ubar{\quant}_m(1-\bar{\quant}_m)}
{\ubar{\val}^*\bar{\quant}_m(1-\ubar{\quant}_m)+\scale(1-\ubar{r}_0)}$.
By combining the inequalities, 
we have an lower bound on $\price(\criticalval(\rev), \rev)
\cdot \criticalquant(\rev)$
as a function of $(\ubar{\quant}_m, \bar{\quant}_m, \ubar{r}_0, \bar{r}_0)$.
By discretizing the feasible set and enumerating for all discretized points, 
we have a unified lower bound on revenue for all possible distributions. 
\end{proof}








\end{document}